%%
%% This is file `sample-manuscript.tex',
%% generated with the docstrip utility.
%%
%% The original source files were:
%%
%% samples.dtx  (with options: `manuscript')
%% 
%% IMPORTANT NOTICE:
%% 
%% For the copyright see the source file.
%% 
%% Any modified versions of this file must be renamed
%% with new filenames distinct from sample-manuscript.tex.
%% 
%% For distribution of the original source see the terms
%% for copying and modification in the file samples.dtx.
%% 
%% This generated file may be distributed as long as the
%% original source files, as listed above, are part of the
%% same distribution. (The sources need not necessarily be
%% in the same archive or directory.)
%%
%% Commands for TeXCount
%TC:macro \cite [option:text,text]
%TC:macro \citep [option:text,text]
%TC:macro \citet [option:text,text]
%TC:envir table 0 1
%TC:envir table* 0 1
%TC:envir tabular [ignore] word
%TC:envir displaymath 0 word
%TC:envir math 0 word
%TC:envir comment 0 0
%%
%%

\documentclass[manuscript,anonymous=false,natbib=false,screen,authorversion]{acmart}

% packages
\usepackage{multirow}
\usepackage{float}
\usepackage{multicol}
\usepackage{soul}

\usepackage{cancel} 
\usepackage{nicefrac}
\usepackage{makecell}

\graphicspath{{img}}

% Repeated footnote
\makeatletter
\newcommand\footnoteref[1]{\protected@xdef\@thefnmark{\ref{#1}}\@footnotemark}

%%
%% \BibTeX command to typeset BibTeX logo in the docs
\AtBeginDocument{%
  }

%% Rights management information.  This information is sent to you
%% when you complete the rights form.  These commands have SAMPLE
%% values in them; it is your responsibility as an author to replace
%% the commands and values with those provided to you when you
%% complete the rights form.
\setcopyright{acmcopyright}
\copyrightyear{2023}
\acmYear{2023}
\acmDOI{XXXXXXX.XXXXXXX}
%% These commands are for a PROCEEDINGS abstract or paper.
% \acmConference[Conference acronym 'XX]{Make sure to enter the correct
%   conference title from your rights confirmation emai}{June 03--05,
%   2018}{Woodstock, NY}
% \acmPrice{15.00}
% \acmISBN{978-1-4503-XXXX-X/18/06}
%%
%% Submission ID.
%% Use this when submitting an article to a sponsored event. You'll
%% receive a unique submission ID from the organizers
%% of the event, and this ID should be used as the parameter to this command.
%%\acmSubmissionID{123-A56-BU3}

%%
%% For managing citations, it is recommended to use bibliography
%% files in BibTeX format.
%%
%% You can then either use BibTeX with the ACM-Reference-Format style,
%% or BibLaTeX with the acmnumeric or acmauthoryear sytles, that include
%% support for advanced citation of software artefact from the
%% biblatex-software package, also separately available on CTAN.
%%
%% Look at the sample-*-biblatex.tex files for templates showcasing
%% the biblatex styles.
%%

%%
%% The majority of ACM publications use numbered citations and
%% references, obtained by selecting the acmnumeric BibLaTeX style.
%% The acmauthoryear BibLaTeX style switches to the "author year" style.
%%
%% If you are preparing content for an event
%% sponsored by ACM SIGGRAPH, you must use the acmauthoryear style of
%% citations and references.
%%
%% Bibliography style
\RequirePackage[
  datamodel=acmdatamodel,
  style=acmnumeric,
  sortcites=true
  ]{biblatex}

%% Declare bibliography sources (one \addbibresource command per source)
\addbibresource{references.bib}

%%
%% end of the preamble, start of the body of the document source.
\begin{document}

%%
%% The "title" command has an optional parameter,
%% allowing the author to define a "short title" to be used in page headers.
\title{Evaluation Measures of Individual Item Fairness for Recommender Systems: A Critical Study}

%%
%% The "author" command and its associated commands are used to define
%% the authors and their affiliations.
%% Of note is the shared affiliation of the first two authors, and the
%% "authornote" and "authornotemark" commands
%% used to denote shared contribution to the research.
% \author{Ben Trovato}
% \authornote{Both authors contributed equally to this research.}
% \email{trovato@corporation.com}
% \orcid{1234-5678-9012}
% \author{G.K.M. Tobin}
% \authornotemark[1]
% \email{webmaster@marysville-ohio.com}
% \affiliation{%
%   \institution{Institute for Clarity in Documentation}
%   \streetaddress{P.O. Box 1212}
%   \city{Dublin}
%   \state{Ohio}
%   \country{USA}
%   \postcode{43017-6221}
% }

\author{Theresia Veronika Rampisela}
\email{thra@di.ku.dk}
\orcid{0000-0003-1233-7690}
\affiliation{%
  \institution{University of Copenhagen}
  \streetaddress{Universitetsparken 1, 2100}
  \city{Copenhagen}
  \country{Denmark}}

\author{Maria Maistro}
\email{mm@di.ku.dk}
\orcid{0000-0002-7001-4817}
\affiliation{%
  \institution{University of Copenhagen}
  \city{Copenhagen}
  \country{Denmark}
}

\author{Tuukka Ruotsalo}
\email{tr@di.ku.dk}
\orcid{0000-0002-2203-4928}
\affiliation{%
  \institution{University of Copenhagen}
  \city{Copenhagen}
  \country{Denmark}
}
\affiliation{%
  \institution{LUT University}
  \country{Finland}
}

\author{Christina Lioma}
\email{c.lioma@di.ku.dk}
\orcid{0000-0003-2600-2701}
\affiliation{%
  \institution{University of Copenhagen}
  \city{Copenhagen}
  \country{Denmark}
}

%%
%% By default, the full list of authors will be used in the page
%% headers. Often, this list is too long, and will overlap
%% other information printed in the page headers. This command allows
%% the author to define a more concise list
%% of authors' names for this purpose.
\renewcommand{\shortauthors}{Rampisela et al.}

\newcommand{\rs}{RSs}

\newcommand{\floorkm}{\left\lfloor\frac{km}{n}\right\rfloor}

\newcommand{\our}{$_{\text{our}}$}
\newcommand{\ori}{$_{\text{ori}}$}
\newcommand{\up}{$\uparrow$}
\newcommand{\down}{$\downarrow$}

\newcommand{\ups}{\texorpdfstring{\up}{}} 
\newcommand{\dws}{\texorpdfstring{\down}{}}

\newcommand{\explainsig}{Asterisk ($^*$) denotes a statistically significant correlation ($\alpha=0.05$), after applying the Benjamini-Hochberg procedure.}

\newcommand{\nonreal}{\textbf{non-re\-a\-li\-sa\-bi\-li\-ty}}
\newcommand{\nofix}{{\large $\circ$}}
\newcommand{\bbullet}{{\large $\bullet$}} 

\begin{abstract}
Fairness is an emerging and challenging topic in recommender systems. In recent years, various ways of evaluating and therefore improving fairness have emerged. In this study, we examine existing evaluation measures of fairness in recommender systems. 
Specifically, we focus solely on exposure-based fairness measures of individual items that aim to quantify the disparity in how individual items are recommended to users, separate from item relevance to users. We gather all such measures and we critically analyse their theoretical properties. We identify a series of limitations in each of them, which collectively may render the affected measures hard or impossible to interpret, to compute, or to use for comparing recommendations. 
We resolve these limitations by redefining or correcting the affected measures, or we argue why certain limitations cannot be resolved. We further perform a comprehensive empirical analysis of both the original and our corrected versions of these fairness measures, using real-world and synthetic datasets. Our analysis provides novel insights into the relationship between measures based on different fairness concepts, and different levels of measure sensitivity and strictness. We conclude with practical suggestions of which fairness measures should be used and when. Our code is publicly available. 
To our knowledge, this is the first critical comparison of individual item fairness measures in recommender systems.
\end{abstract}

\begin{CCSXML}
<ccs2012>
<concept>
<concept_id>10002944.10011123.10011130</concept_id>
<concept_desc>General and reference~Evaluation</concept_desc>
<concept_significance>300</concept_significance>
</concept>
<concept>
<concept_id>10002951.10003317.10003347.10003350</concept_id>
<concept_desc>Information systems~Recommender systems</concept_desc>
<concept_significance>300</concept_significance>
</concept>
<concept>
<concept_id>10002951.10003317.10003359</concept_id>
<concept_desc>Information systems~Evaluation of retrieval results</concept_desc>
<concept_significance>500</concept_significance>
</concept>
</ccs2012>
\end{CCSXML}
\ccsdesc[500]{Information systems~Evaluation of retrieval results}
\ccsdesc[300]{General and reference~Evaluation}
\ccsdesc[300]{Information systems~Recommender systems}

%% Keywords. The author(s) should pick words that accurately describe
%% the work being presented. Separate the keywords with commas.
\keywords{item fairness,
individual fairness,
fairness measures,
evaluation measures,
recommender systems}

% \received{31 March 2023}
% \received[revised]{21 August 2023}
% \received[accepted]{19 October 2023}
%%
%% This command processes the author and affiliation and title
%% information and builds the first part of the formatted document.
\maketitle

\section{Introduction}\label{s:intro}

The concept of fairness in Recommender Systems (RSs) is commonly understood as treating users or items that are alike, in a similar way. It can be studied either for individual items or users (individual fairness), or for groups of items or users (group fairness). Individual fairness typically refers to similar individuals being treated similarly \cite{Biega2018EquityRankings}, where sometimes the similarity of individuals is measured through a certain metric, e.g. distance of the individual representation \cite{LiYunqi2023FairnessApplications}. 
In this work, we focus solely on fairness for individual items, and specifically on evaluation measures designed to quantify individual item fairness in RSs. While there exist comprehensive surveys on group fairness measures \cite{Raj2022MeasuringResults,Zehlike2022FairnessSystems}, to the best of our knowledge, no critical analysis of evaluation measures of individual item fairness for \rs~ has been presented. 

Individual item fairness is an important type of fairness that occurs in many RS scenarios. For example, it is helpful for new item discovery and for ensuring that recommended items come from different providers or creators. Individual item unfairness may occur due to popularity bias, which causes some items to be recommended more often than others \cite{Zhu2021Popularity-OpportunityFiltering}. Some items may not even be recommended at all. For instance, interesting content from emerging content creators could be recommended less frequently than less or equally interesting content from popular creators. 

In a traditional recommendation scenario, a model produces a top-$k$ list of recommendations across all users. This output is typically evaluated by measuring the recommendation relevance of the top-$k$ recommendations to each user. On top of that, one can measure the individual item fairness of the top-$k$ recommendations. 
On a high level, we distinguish between two definitions of individual item fairness. In the \textbf{first definition}, individual item fairness is understood as \textit{all items}\footnote{All items in the dataset or all items in the recommendations given to all users. Both definitions are used in the literature and also in our work in $\S$\ref{ss:onlyfair}.}
\textit{having equal exposure}. 
Exposure (also known as attention \cite{Biega2018EquityRankings} or coverage \cite{Wang2022ProvidingSystems}) refers to an item appearing in the top $k$ recommendations for a user. The definition of fairness above does not consider how relevant the recommended items are to the users; it only considers how uniform their exposure is. 
Given that relevance is a key aim in \rs, fairness has also been given a \textbf{second definition} as \textit{all items having an equal opportunity for exposure, where the opportunity is based on item relevance to users or similar other criteria \cite{Wu2022JointRecommendation,Diaz2020EvaluatingExposure}.}

Several evaluation measures have been used to empirically evaluate fairness for individual item fairness in \rs~\cite{Wang2022}. 
All of the measures evaluate fairness based on the recommendation list; the input is the recommendation list, and in most cases, the measure compares the exposure of items in the recommendation list and the total number of items in the dataset. Some of the measures follow the first definition of fairness, which is purely exposure-based, while other measures evaluate fairness jointly with relevance. 
Here, we focus solely on fairness measures of the first type, that is measures that evaluate only fairness without considering relevance. Even if these fairness-only measures have been used to evaluate fairness in prior work, they have not been extensively analysed for their limitations and possible consequences arising thereof. They have also not been analysed in relation to one another, as most research in individual item fairness only uses a single fairness measure. 
As a result, it is currently unclear how to interpret these measures and select which measures should be used under various circumstances. For instance, we find cases where a measure is theoretically bound between $[0,1]$ but empirically cannot reach either of the endpoints, which makes the interpretation of the measure extremely difficult. 
It is also unknown if there are theoretical limitations and cases where the measures would fail. For instance, we find that 
a measure will give the highest scores no matter the recommendation, given a certain number of users, items, or cut-off. 
We also find that a measure cannot be computed as the formulation is not well-defined for a commonly-encountered case.

The goals of this work are to address the above research gaps, by examining existing measures of individual item fairness in \rs, presenting analytical limitations and solving them (or justifying why they cannot be solved), and examining how the scores of the measures change in relation to each other in different scenarios. As such, 
we contribute the following: 
\begin{itemize}
    \item We review individual item fairness measures in \rs~($\S$\ref{s:priorwork}).
    \item We identify and analyse 5 theoretical limitations of those measures, where three of the limitations are novel 
    and two limitations have already been identified but without a formal/theoretical explanation, which we provide ($\S$\ref{s:limitations}). 
    \item We propose theoretical corrections of the measures to resolve their limitations or explain why some limitations are unresolvable ($\S$\ref{s:extensions}). Note that none of the measures is without limitations (Tab.~\ref{tab:limitation-summary}).
    \item  We present an empirical analysis of the existing measures to study the correlation among measures, investigate the relations between fairness and relevance, and compare the corrected measures against the original measures on real-world and synthetic datasets ($\S$\ref{s:exp}). 
    \item We provide insights to guide the correct use of different types of individual item fairness measures ($\S$\ref{s:discussion}).
\end{itemize}

\section{Individual item fairness measures}
\label{s:priorwork}

We present our notation ($\S$\ref{ss:notation}) and the  
eight exposure-based evaluation measures of individual item fairness that we study in this work ($\S$\ref{ss:onlyfair}). 
To the best of our knowledge, these eight measures are all the measures of individual item fairness that exist in RSs.\footnote{Based on publications up to August 2022.} Several of these measures are taken from \cite{Wang2022}. 

\subsection{Notation and Examination Functions}
\label{ss:notation}

\begin{table}[tb]
\caption{Summary of the notation.}
\label{tab:notation}
\begin{tabular}{ll}\toprule
Notation &Explanation \\\midrule
$U=\{u_1, u_2, \dots, u_m\}$ &The set of users \\
$I = \{i_1, i_2, \dots, i_n\}$ &The set of items \\
$|U| = m$ &The unique number of users in dataset \\
$|I| = n$ &The unique number of items in dataset\\
$k$ & The cut-off threshold\\
$km$ & The total number of recommendation slots\\
$r_{u,i} \in \{0,1\}$ & The relevance of item $i$ to user $u$ \\
$z(u,i) \in \{1, 2, \dots, n\} $ & The rank position of item $i$ for user $u$\\
$z(u,i,w) \in \{1, 2, \dots, n\} $ & The rank position of item $i$ for user $u$ in round $w$\\
$R_{u}^{k}$ & The top $k$ recommendations for user $u$ \\
$R_{u,w}^{k}$ & The top $k$ recommendations for user $u$ in round $w$ \\
$R$ & The set of top $k$ unique items recommended to all users\\
$1_\mathcal{A}(x)=1$ if $x \in \mathcal{A}$, else 0 & Indicator function\\
\bottomrule
\end{tabular}
\end{table}

\begin{table*}
\caption{Examination functions used in fairness measures.}
\label{tab:exp-weigh}
\begin{tabular}{llll}
\toprule
         & Equation & Measure & Reference                        \\ 
         \midrule
uniform  & $e_{\text{unif}}(u,i)\equiv 1, \forall (u,i)$                         & Jain, QF, Ent, Gini, FSat, VoCD & \cite{jain1984quantitative,Zhu2020FARM:APPs,Shannon1948ACommunication, Gini1912VariabilitaMutabilita, Wang2022ProvidingSystems,Patro2020FairRec:Platforms} \\

DCG      & $e_{\text{DCG}}(u, i, w) = 1/\log_2 (z(u,i,w)+1)$         & Gini-w & \cite{Do2021Two-sidedDominance}   \\
RBP      & $e_{\text{RBP}}(u,i,w) = \gamma^{z(u,i,w)-1}$         & II-D, AI-D   & \cite{Wu2022JointRecommendation}                 \\
\bottomrule
\end{tabular}
\end{table*}

Given a set of users $U=\{u_1, u_2, \dots, u_m\}$, with $|U| = m$,  and 
a set of items $I = \{i_1, i_2, \dots, i_n\}$, with $|I| = n$, for each user $u \in U$ we rank all $n$ items to produce the full recommendation list. 
The list of the top $k$ recommended items to user $u$ is $R_u^{k}$, 
and the set of all top $k$ recommended items to all users is $R = \bigcup_{u \in U}{R_u^{k}}$.
If $u$ finds the recommended item $i$ relevant, we write $r_{u,i}=1$, otherwise $r_{u,i}=0$. The rank position of item $i$ in the recommendation list for user $u$ is $z(u,i)$. 
As there exist fairness measures that consider multiple rounds of recommendation, we also use the following notations for such measures: the rank position of item $i$ for user $u$ in round $w$ is $z(u,i,w)$ and $R_{u, w}^{k}$ is the list of the top $k$ recommended items to user $u$ in round $w$.
Tab.~\ref{tab:notation} summarizes our notation.

Fairness for individual items is closely linked to exposure, which is identified as the appearance of an item in the top $k$ recommendations for a user. Exposure can be quantified in several ways using different examination functions $e(\cdot)$ in various binary or graded ways. 
Examination functions are functions modelling the probability of a user seeing an item that is exposed to the user. All examination functions in this paper assume that this probability only depends on $z(u,i)$, the rank position of an item $i$ for user $u$. Tab.~\ref{tab:exp-weigh} presents the examination functions used by the individual item fairness measures that are included in this paper. These examination functions apply either no discount, logarithmic discount, or exponential-like discounting. 

The simplest examination function is uniform, $e_{\text{unif}}$, and assumes a constant weight of $1$ for each rank position \cite{jain1984quantitative,Zhu2020FARM:APPs,Shannon1948ACommunication, Gini1912VariabilitaMutabilita, Wang2022ProvidingSystems,Patro2020FairRec:Platforms}.
The two other examination functions are $e_{\text{DCG}}$ \cite{Do2021Two-sidedDominance} 
and $e_{\text{RBP}}$ \cite{Wu2022JointRecommendation}, which 
use the discount function based on Discounted Cumulative Gain (DCG)~\cite{Jarvelin2002CumulatedTechniques} and Rank-Biased Precision (RBP)~\cite{Moffat2008Rank-biasedEffectiveness} respectively.
In $e_{\text{RBP}}$, the parameter $\gamma$ is the user's patience, i.e., the probability of the user examining the next ranked item. For example, $\gamma=0.8$ in \cite{Wu2022JointRecommendation} and $\gamma=0.5$ in \cite{Diaz2020EvaluatingExposure}.

\subsection{Measures of individual item fairness}
\label{ss:onlyfair}

We present the eight measures that so far have been used to quantify fairness for individual items (without considering item relevance), as well as the context in which they are used in the original work. We use the subscript $\cdot_{\text{ori}}$ to denote the original formulation of a measure as opposed to our corrected version that we present later in $\S$\ref{s:extensions}. Note that these measures are typically used with a fixed cut-off $k$, and therefore the number of recommendation slots $km$ is also fixed. 

\subsubsection[Jain's Index (Jain)]{Jain's Index (Jain) \cite{jain1984quantitative}}\label{sss:jain-ori} Jain, which was originally defined for fairness in computer networks, 
has been used in \rs~\cite{Zhu2020FARM:APPs} to measure how consistent item exposure is in relation to the number of times an item is recommended. The original work uses this measure to evaluate ``fairness of the recommendation opportunity''. Jain is the ratio between the square of the number of recommendation slots and the sum of squares of the number of times each item is recommended, where the ratio is divided by the number of items in the dataset. It is calculated as follows: 
\begin{equation}
\label{eq:jain-ori}
\text{Jain}_{\text{ori}} = 
\frac{
    \left[
        \sum\limits_{i\in I} \sum\limits_{u\in U}1_{R_{u}^{k}}(i)
    \right]^2
    }
{n \sum\limits_{i\in I} \left[\sum\limits_{u\in U}
1_{R_{u}^{k}}(i)\right]^2}
= \frac{(km)^2}{n \sum\limits_{i\in I} \left[\sum\limits_{u\in U}
1_{R_{u}^{k}}(i)\right]^2}
\end{equation}
where 
$1_{R_{u}^{k}}(i)=1$ if item $i$ is in the top $k$ recommendations for user $u$, and 0 otherwise. 
$\sum\nolimits_{u\in U}
1_{R_{u}^{k}}(i)$ counts how many times item $i$ is recommended in the top $k$ across all users. The range of Eq.~\eqref{eq:jain-ori} is $[0,1]$.\footnote{This range is based on the original paper, \cite{jain1984quantitative}} The range of Jain matters as \cite{Zhu2020FARM:APPs} analysed the absolute values of Jain, in addition to observing the difference of the scores. 
The higher the Jain score, the fairer the recommendation with respect to individual items (i.e., items are exposed consistently with respect to other items in the dataset). 
E.g., if $60\%$ of the items in the dataset are exposed equally to all users, for instance, $R_{u_1}^{3}=[i_1,i_2,i_3],\ R_{u_2}^{3}=[i_4,i_5,i_6],\ n=10$, then $\text{Jain}=0.6$. However, this interpretation does not hold and becomes less intuitive when items are not exposed equally, which is often the case. For instance, $R_{u_1}^{3}=[i_1,i_2,i_3],\ R_{u_2}^{3}=[i_1,i_2,i_4],\ R_{u_3}^{3}=[i_1,i_5,i_6],\ n=10$. In this case, $60\%$ of the items are exposed but $\text{Jain}=0.476$. In real-life, it is unlikely that items are exposed equally, which means that Jain's interpretation suffers from this limitation, more often than not. 

\subsubsection{Qualification Fairness (QF) \cite{Zhu2020FARM:APPs}} \label{sss:qf-ori}
QF is a modification of Jain that measures how many items are in the set of top $k$ recommended items $R$, divided by $n$, the total number of items in the dataset. 
The authors of the original measure explained in \cite{Zhu2020FARM:APPs} that the measure only considers whether an item in the dataset is recommended, as opposed to how many times it is recommended. 

\begin{equation}
\label{eq:qf-ori}
    \text{QF$_{\text{ori}}$} =  
    \frac{
        \left[
            \sum\limits_{i\in I} 1_{R}(i)
        \right]^2
        }
    {n \sum\limits_{i\in I}
    \left[1_{R}(i)\right]^2} 
    = \frac{
        \left[
            \sum\limits_{i\in I} 1_{R}(i)
        \right]^2
        }
    {n \sum\limits_{i\in I}
    1_{R}(i)}                                   
    = \frac{\sum\limits_{i\in I} 1_{R}(i)}{n} 
    = \frac{|R|}{n}                                 
\end{equation}

\noindent The QF range is $[0,1]$. The higher the score, the fairer the recommendation. 
Along with the relative comparison of the QF scores, the absolute values of QF scores are also taken into account in \cite{Zhu2020FARM:APPs,Mansoury2020FairMatch:Systems}, where 
a score of $1$ means that all items in the dataset are in the top $k$ at least once. 
Formally, QF (Eq.~\ref{eq:qf-ori}) is equivalent to Coverage \cite{HerlockerEvaluatingSystems}, a measure of diversity, which has been used to evaluate fairness too \cite{Mansoury2020FairMatch:Systems}.

\subsubsection{Entropy (Ent) \cite{Shannon1948ACommunication}} Ent measures how uniform the exposure of the recommended items is  \cite{Patro2020FairRec:Platforms,Mansoury2021ASystems,Mansoury2020FairMatch:Systems}. 
In \cite{Patro2020FairRec:Platforms}, Lorenz curves were used to detect massive differences in individual item exposures and therefore, Entropy-like measure was proposed to quantify the inequality of item exposure: 

\begin{equation}
   \text{Ent$_{\text{ori}}$} = 
        - \sum\limits_{i \in I}{p(i) \log{p(i)}} \qquad \text{and}\qquad  p(i) = \frac{\sum\limits_{u\in U}1_{R_{u}^{k}}(i)}{km}
        \label{eq:ent-ori}
\end{equation}
where $p(i)$ is the recommendation frequency of $i$, i.e., how often item $i$ is recommended in the top $k$ to any user in the dataset, divided by the available recommendation slots $km$. In~\cite{Patro2020FairRec:Platforms}, $\log$ is the log base-$n$. 
It is unclear what log base is used in \cite{Mansoury2021ASystems,Mansoury2020FairMatch:Systems}. When the log base is $b$, Ent ranges between $[0, \log_b{n}]$ while for log base-$n$, the range is $[0, 1]$. 

In the past, this measure has been used by \cite{Patro2020FairRec:Platforms, Mansoury2020FairMatch:Systems, Mansoury2021ASystems} to compare recommender models using absolute values, where a higher Ent is interpreted as having a more uniform distribution of the recommended items, and thus fairer.

\subsubsection{Gini Index (Gini) \cite{Gini1912VariabilitaMutabilita}} \label{sss:gini-ori} 
Gini is a measure of variability i.e., the mean difference from all observed quantities \cite{Ceriani2012TheGini}. It is most commonly used to measure inequality in the distribution of economic income, where the intuition is that a Gini score of 1 means that one entity receives all the income. 
Similarly in \rs, it is used to measure how much the distribution of item exposure deviates from an equal/uniform distribution \cite{Mansoury2020FairMatch:Systems, Do2022OptimizingRankings, Do2021Two-sidedDominance}. Formally, Gini is defined as follows: 

\begin{equation}
   \text{Gini$_\text{ori}$} = \frac{\sum\limits_{j=1}^n{(2j-n-1) Ex_j}}{n\sum\limits_{j=1}^n Ex_j} 
   \qquad \text{and}\qquad  
Ex_j = \sum\limits_{u\in U} \sum\limits_{w=1}^W 1_{R_{u,w}^{k}}(x_j)\cdot e_{(\cdot)}(u,x_j,w)
        \label{eq:gini-ori}
\end{equation}
where $x_j$ is the item with the $j$-th least amount of $Ex_j$, the total exposure received by that item across $W$ rounds of recommendations.\footnote{This is the total exposure received by each item, including items that are not recommended to any users. The scores are sorted as $Ex_1, Ex_2, \dots Ex_n$. Ties between $Ex_j$'s do not affect the final score.} $W$ is the number of rounds, and 
$1_{R_{u,w}^{k}}(i)=1$ when item $i$ is in user $u$'s top $k$ recommendation list in round $w$. 
The examination function $e_{(\cdot)}(u,x_j,w)$ used in \cite{Mansoury2020FairMatch:Systems} is uniform, $e_{\text{unif}}(u,x_j,w)\equiv 1$ and in \cite{Do2022OptimizingRankings, Do2021Two-sidedDominance} the examination function $e_{\text{DCG}}$  
(see Tab.~\ref{tab:exp-weigh}) is used.
We refer to the latter case as Gini-w. 
Gini and Gini-w's range is $[0,1]$, where 0 means that there is an equal distribution of item exposure for all items in the dataset (fairest case). 
The range is important for the interpretability of the measure. Having an interpretable range is more important for how Gini and Gini-w quantify fairness when looking at the absolute values of the measure, like in \cite{Mansoury2020FairMatch:Systems, Do2021Two-sidedDominance}, than when looking at the difference in model rankings, like in \cite{Do2022OptimizingRankings}.

\subsubsection{Fraction of Satisfied Items (FSat) \cite{Patro2020FairRec:Platforms}}\label{sss:fsat}FSat is defined in the context of maximin-shared fairness, where fairness means that each item is 
recommended at least $\frac{km}{n}$ times, as there are only $km$ slots that should ideally be distributed equally between $n$ items. However, this distribution is impossible if 
the number of slots is not divisible by the number of items in the dataset, $n \nmid km$. The requirement is relaxed to recommending each item at least $\left\lfloor \frac{km}{n}\right\rfloor$ times (the maximin share) for it to be a fair recommendation. An item is satisfied iff its exposure is more than or equal to the maximin share. $\text{FSat}$ measures the number of satisfied items divided by the total number of items: 
\begin{equation}
\label{eq:fsat-ori}
    \text{FSat}_\text{ori}=
    \frac{1}{n} 
         \sum\limits_{i \in I} 
        \delta
            \left(
            \sum\limits_{u\in U} 1_{R_u^{k}}(i) \geq \left\lfloor \frac{km}{n}\right\rfloor
            \right)
\end{equation}

\noindent where $\delta(\cdot)=1$ when the expression $\cdot$ is True and 0 otherwise. 
FSat has a range of $[0,1]$, and the higher, the fairer. The range of values matters, for example \cite{Patro2020FairRec:Platforms} has used both absolute values and difference in values to interpret FSat. 

\subsubsection{Violation of Coverage Disparity (VoCD) \cite{Wang2022ProvidingSystems}}\label{sss:vocd-ori} 
VoCD is a fairness constraint. In \cite{Wang2022ProvidingSystems}, VoCD 
is used to optimise the recommendations for fairness during the training process, but not used for evaluating the final recommendation output. We include VoCD in this work to provide insights into what transpires when VoCD is used as an evaluation measure in its current formulation. VoCD is also the only measure operating on the Lipschitz condition~\cite{Dwork2012FairnessAwareness}, which requires similar individuals to be treated similarly. 
The idea behind VoCD is that any two $\alpha$-similar recommended items should receive similar coverage. 
Two distinct items $i,i' \in R$ are $\alpha$-similar if $d(i,i')=1-sim(i,i')\leq \alpha$, 
where $d(i,i')$ is the cosine distance and $sim(i,i')$ is the cosine similarity between the embeddings\footnote{The original paper specifically defines that they use embeddings, but in principle, any representation could work.} of item $i$ and $i'$ respectively, and $\alpha$ is a parameter. 
Similar coverage means that the Coverage Disparity (CD) of those items, which is proportional to their exposure difference, must not exceed a threshold $\beta$. 
VoCD thus measures the average violation of the maximum allowed coverage disparity $\beta$ in all pairs of $\alpha$-similar recommended items:
\begin{equation}
\label{eq:vocd-ori}
    \text{VoCD$_\text{ori}$}
    = \frac{1}{|A|} \sum\limits_{\forall (i, i') \in A}{\max{(CD(i,i')-\beta,0})}
    \qquad \text{and} \qquad
    CD(i,i') = 
    \left| 
    \frac{\sum\limits_{u\in U}1_{R_{u}^{k}}(i) - \sum\limits_{u\in U}1_{R_{u}^{k}}(i')
    }{
    \max\left(
        \sum\limits_{u\in U}1_{R_{u}^{k}}(i), \sum\limits_{u\in U}1_{R_{u}^{k}}(i')
        \right)} 
    \right|
\end{equation}

\noindent where $A$ is the set of all pairs of $\alpha$-similar items and $CD(i,i')$ is the coverage disparity between item $i$ and $i'$. $\text{VoCD}=0$ means that there is no violation of coverage disparity between any pairs (i.e., fair). 
In \cite{Wang2022ProvidingSystems}, the absolute value of VoCD affects the parameter that is used to control fairness during the training process. 
VoCD is customisable  w.r.t.~fairness and similarity: a lower\footnote{This is \textit{lower} instead of \textit{higher} as the original authors \cite{Wang2022ProvidingSystems} defined the term $\alpha$-similar items based on the cosine distance between the two items being not more than $\alpha$} 
$\alpha$ means a stricter similarity requirement and a lower $\beta$ means a stricter fairness requirement. 

\subsubsection{Individual-user-to-individual-item disparity (II-D) \cite{Wu2022JointRecommendation}} \label{sss:iid-ori} 
II-D was first defined by \cite{Diaz2020EvaluatingExposure} to quantify the mean squared difference between system exposure and random exposure in individual queries and individual items, 
where there is a distribution of rankings (stochastic rankings). II-D is a resulting component of decomposing another measure in the original work, which quantifies item fairness proportional to item relevance to users. It was redefined by \cite{Wu2022JointRecommendation} for $W$ rounds of recommendations in RSs as: 

\begin{equation}
    \label{eq:iid-ori}
     \text{II-D}_\text{ori} = \frac{1}{m} \frac{1}{n} 
     \sum\limits_{u \in U} \sum\limits_{i \in I} \left(E_{u,i}^{ } - E_{u,i}^{\sim}\right)^2
\end{equation}

\begin{equation}
    \label{eq:eui}
E_{u,i} = \frac{1}{W} \sum\limits_{w=1}^{W}1_{R_{u,w}^{k}}(i) \cdot e_{\text{RBP}}(u,i,w)
\qquad \text{and} \qquad
E_{u,i}^{\sim} = \frac{1-\gamma^{k}}{n(1-\gamma)} 
\end{equation}

\noindent where $E_{u,i}$ is the expected exposure of $i$ to $u$ as per a stochastic ranking policy,  
$E_{u,i}^{\sim}$ is the expected exposure of $i$ to $u$ based on a uniformly random distribution over all permutations of items, and $\gamma$ is an arbitrarily-set parameter for user patience. 
The examination function based on RBP (see Tab.~\ref{tab:exp-weigh}) is used in $E_{u,i}$ and the equation of $E_{u,i}^{\sim}$ is derived based on the same examination function \cite{Wu2022JointRecommendation}. The range of II-D is not well-known,\footnote{The authors of the original measure did not state the range of II-D.} but a lower value means a fairer recommendation. In \cite{Wu2022JointRecommendation}, min-max normalisation is performed on II-D post-computation, such that the range is $[0,1]$. 
This range of values matters; for example \cite{Wu2022JointRecommendation} has used the absolute values of II-D to analyse fairness and relevance trade-off, in addition to looking at the difference in model rankings based on II-D scores. 

\subsubsection{All-users-to-individual-item disparity (AI-D) \cite{Wu2022JointRecommendation}} \label{sss:aid-ori} 
AI-D computes the mean of the squared difference between system exposure and random exposure in each item. AI-D is similar to II-D in the sense that it is originally used for multiple rounds of recommendations and also a component resulting from the decomposition of another measure proposed by \cite{Wu2022JointRecommendation} that considers fairness w.r.t.~relevance. 
However, unlike II-D, AI-D is sensitive to whether an item is recommended to multiple users due to the aggregation of the difference in exposure being done per item.

\begin{equation}
    \label{eq:aid-ori}
     \text{AI-D}_\text{ori} = 
        \frac{1}{n} \sum \limits_{i \in I}
            \left(
                \frac{1}{m}\sum \limits_{u \in U} E_{u,i}^{ }
                - \frac{1}{m}\sum \limits_{u \in U} E_{u,i}^{\sim}
            \right)^2
\end{equation}
where $E_{u,i}^{ }$, $E_{u,i}^{\sim}$ are as per Eq.~\eqref{eq:eui}. The range of AI-D is not well-known, but a lower value means fairer recommendation. 
Like on II-D, a post-computation min-max normalisation is also performed by \cite{Wu2022JointRecommendation} on AI-D, resulting in a $[0,1]$-range. 
The range of values matters; for instance \cite{Wu2022JointRecommendation} has used the absolute values of AI-D to analyse fairness and relevance trade-off, on top of looking at the difference in model rankings based on AI-D scores. 

\section{Measure limitations}
\label{s:limitations}
We identify $5$ theoretical limitations in the measures presented in $\S$\ref{s:priorwork} (summarised in Tab. \ref{tab:limitation-summary}). 
We use the term `limitation' in the sense that regardless of the reason, a measure fails to quantify or fulfill properties that are important for evaluating fairness. 
Some of these limitations rarely occur, e.g. related to edge cases ($\S$\ref{ss:always_fair}), yet some are more likely to occur in practical scenarios ($\S$\ref{ss:nonreal} \& $\S$\ref{ss:undefinedness}). In the headings, we put the name of the affected measures in brackets. 

In practice, even if the limitation transpires by design, the design of the measure still restricts its usage under the conditions that we explain below. The identified limitations are independent of the recommender algorithm, as long as the recommender is a top-$k$ recommender, which is the most common recommendation scenario in practice. 
We accompany each measure name by  \up~or \down, denoting that the higher (\up) or the lower (\down) the score of the measure, the fairer the recommendation. 

\subsection{Limitation 1: Non-realisability}\label{ss:nonreal}  

This is a novel limitation identified by us and affects \emph{all measures}. We define non-realisability as the limitation whereby the max/min score of the evaluation measure cannot be reached at the top-$k$. 
As argued in \cite{Moffat2013SevenMetrics}, a desirable property of effectiveness measures is their realisability. While the realisability property in \cite{Moffat2013SevenMetrics} is related to the number of relevant items, non-realisability for fairness measures is related to the number of recommendation slots ($km$) and the number of items that are in the dataset ($n$); we explain this relationship below as part of the causes of this limitation. 
In practice, the non-realisability limitation makes fairness scores hard to interpret because 
when the worst or best possible fairness score varies based on the dataset ($m$, $n$) and experimental choice of threshold $k$, it is unknown whether the fairness score obtained for a model is closer to the max or min score. 
For instance, if a higher-is-fairer measure ranges in $[0, 1]$ and a model achieves a score of $0.2$, one might think that the model is not very fair. However, if the maximum achievable score for that case (e.g., if all top $k$ items are fair) is $0.22$, then the model might actually be fair, but this cannot be known 
from the score of the evaluation measure.
We identify four different causes of non-realisability.

\subsubsection{Cause 1 (\up Jain, \up QF, \up Ent, \up FSat, \down Gini, \down Gini-w)} 
\label{sss:cause-no-recommend}
Non-realisability can occur if the most unfair score is only given to an unrealistic recommendation scenario, which we explain next as it differs per measure. 
Specifically, the score can never be 0 for \up Jain, unless the number of slots is 0, i.e., $k=0$ or $m=0$, which does not make sense because $k=0$ means that no recommendation at all is outputted, and $m=0$ means that there are zero users to recommend to. The score of \up Ent can only be 0 when there are no items in the dataset ($n=0$), which also does not make sense because it means that there is nothing to recommend. For \up QF/FSat, the score can only be 0 if there are \textit{no} recommended items to \textit{any} users ($|R|=0$), which does not make sense either. 
On the other hand, the score can only be 1 for \down Gini/Gini-w when a single item is recommended at all $k$ slots for each user, which is a highly unlikely artificial outcome; to our knowledge, no reasonably performing recommender model can produce such an output. 
All of the above conditions are unrealistic. 
A consequence of the above is that fairness is overestimated by these measures. Instead of the unrealistic situations above, it is the realistically unfairest recommendation for \up Jain/QF/Ent/FSat that should be mapped to 0 and for \down Gini/Gini-w to 1. 

\subsubsection{Cause 2 (\up Jain, \up QF, \up Ent, \down Gini, \down Gini-w, \up FSat, \down II-D, \down AI-D)}\label{sss:cause2} The second cause of non-realisability that we identify is that when the number of recommendation slots $km$ is less than the number of items $n$, then the score cannot be the fairest, as some items cannot be exposed due to the limited availability of recommendation slots.  
This means that the max score (most fair) cannot be reached by the measure, even if all recommended items are fair. 
In the datasets in Tab.~ \ref{tab:dataset}, $km > n$ for $k \in \{10,20\}$, but for some very large datasets like LFM-1b \cite{Schedl2016TheRecommendation} or Four\-squ\-are NYC \cite{Yang2013ASystem}, $km < n$. 

For II-D and AI-D, we show that the score cannot be the fairest, during single or multiple rounds of recommendations. For example, given $k=1, m=2, n=3$, and considering all possible orders of recommendations for a single round of recommendation, the lowest scores for II-D and AI-D are \nicefrac{2}{9} and \nicefrac{1}{18} respectively. 
In the case of multiple recommendation rounds, the total number of slots is $kmW$, where $W$ is the number of rounds. As an example, given $k=1, m=2, W=2, n=5$, the lowest possible II-D and AI-D are 0.06 and 0.01 respectively. So, in these cases, the lowest score (that should be the fairest) is not zero, even if the recommendations are made as fair as possible (close to random exposure).

This leads to an underestimation of fairness: the fairest value of the measure cannot be reached for some datasets (regardless of recommendation quality), so the measure is not evenly robust across datasets and can underestimate fairness. 

\subsubsection{Cause 3 (\up Jain, \up Ent, \down Gini, \down Gini-w \down II-D, \down AI-D)}\label{sss:cause3} The third cause of non-realisability that we identify is that for Jain, Ent, Gini, and Gini-w, even when $km > n$, the measure still cannot reach the theoretical fairest value if the number of items is not an exact multiple of the recommendation slots, $n \nmid km$. E.g., if $km=4$, $n=3$, three slots can be filled with one unique item each, but no matter which item fills the last slot, one item will be recommended one more time than the rest. 
Likewise, the same applies for II-D and AI-D when the number of slots across all rounds ($kmW$) is not divisible by $n$. For example, when $k=m=W=2, n=3$, the minimum II-D is $0.02$ and the minimum AI-D is $0.005$. 
This limitation consequently leads to the same issue of robustness and underestimation of fairness, as described in \textit{Cause 2}.

\subsubsection{Cause 4 (\down Gini-w, \down VoCD, \down II-D, \down AI-D)}
\label{sss:cause_nonreal}
The fourth case of non-realisability that we identify is that measures cannot reach the theoretical (un)fairest value, as the exact formulation of the max/min achievable score is 
unknown,\footnote{Here, `unknown' is only related to the exact max/min formulation, as the maximum and minimum can always be computed by enumerating all possibilities, albeit being a costly process.}  
making the score hard to interpret. This happens because the most/least fair recommendation cannot be analytically determined due to parameters in the measure or item exposure being weighted by a non-uniform examination function.  
This causes the measure to have a range different from its theoretical range, i.e., $[0,1]$, which we explain next for each measure. 

\down Gini-w may not reach $0$ or $1$ due to non-uniform exposure. E.g., when $k=n=3, m=2$, the minimum and maximum \down Gini-w for all possible recommendation lists are $0.0373$ and $0.156$ respectively. As $n|km$, this is separate from \textbf{non-realisability}, \textit{Causes 1--3}.

For \down VoCD, as $CD(i,i') \in [0,1)$,  \down VoCD $\in [0,1-\beta)$. However, the score of \down VoCD depends on item similarity, making the most unfair score unreachable. Even though it is impossible to formulate an exact achievable maximum value for \down VoCD, 
we formally prove in App.~\ref{app:vocd} that $\forall \alpha \in [0,2], \beta \in [0,1)$, the maximum \down VoCD, $\text{VoCD}_{\max} \leq \frac{m-1}{m} - \beta$. This is obtained when there is only one pair of similar items which is recommended $1$ and $m$ times each. E.g., 
$R_{u_1}^{2} = [i_1, i_2],\ R_{u_2}^{2} = R_{u_3}^{2} = [i_1, i_3]$ where $\text{VoCD}_{\max}$ is $\nicefrac{2}{3}$, which happens when only $i_1$ and $i_2$ are similar. So, the most unfair score is non-realisable, as Eq.~\eqref{eq:vocd-ori} depends on item-pair similarity. 

\down II-D and \down AI-D also may not reach 0 or 1, even in the context of multiple rounds of recommendations and having enough slots for all items. For example, when $k=m=W=2, n=3$, considering all possible ways of recommending items, the minimum values of II-D and AI-D are 0.02 and 0.005 respectively, while the maximum values are 0.187 for both. We posit this to be due to the exponential-like exposure.

A summary of situations producing the theoretical most (un)fair scores in existing measures is given in Tab.~\ref{tab:situations}.

\begin{table}[tb]
    \centering
    \caption{Situations that produce the theoretical most (un)fair score in existing individual item fairness measures.}
    \resizebox{1.0\textwidth}{!}{
    \begin{tabular}{lll}
        \toprule
         Measure&  Most Unfair & Most Fair\\
         \midrule
         Jain & no recommendation slots ($km=0$)& all items recommended the same amount\\
         QF& no items recommended ($|R|=0$)& all items exposed, no matter how many times\\
         Ent & no items in the dataset ($n=0$)& all items recommended the same amount\\
         Gini, Gini-w & a single item is recommended at all rank positions for all users  & all items recommended the same amount (or same total exposure weight)\\
         FSat& no items recommended ($|R|=0$)& all items recommended $\geq \left\lfloor\frac{km}{n} \right\rfloor$ times\\
         VoCD& not possible to deduce from the formula and description& \parbox[t]{98mm}{all pairs of similar recommended items recommended similar amount with a normalised difference} \\
         II-D &not possible to deduce from the formula and description& \parbox[t]{98mm}{exposure distribution of recommended items matches exposure given by random distribution}\\
         AI-D &not possible to deduce from the formula and description& \parbox[t]{98mm}{exposure distribution of recommended items matches exposure given by random distribution, with the most possible number of unique items in top $k$}\\ 
         \bottomrule
    \end{tabular}
    }
    \label{tab:situations}
\end{table}

\subsection[Quantity-insensitivity (QF)]{Limitation 2: Quantity-insensitivity (QF\ori)}\label{ss:quantity-insensitivity} 
This limitation is part of the design choice by the authors of the original QF measure \cite{Zhu2020FARM:APPs} based on a specific concept of fairness, which we explain next. 
Quantity-insensitivity means that the measure ignores how often an item is recommended across all users in a recommendation round. 
In economics, \cite{Allison1978MeasuresInequality} states that `sensitivity to transfer' is a basic criterion of an inequality measure. Similarly, we think that when exposure increases (or decreases) for an item, the fairness measure should be sensitive to the change. 
Meanwhile, QF\ori~ makes no distinction between items that are recommended once or more than once.
To illustrate, consider these scenarios: 1) $R_{u_1}^{2}=[i_1, i_2], R_{u_2}^{2}=[i_2, i_3],  R_{u_3}^{2}=[i_1, i_3]$ and 2) $R_{u_1}^{2}=R_{u_2}^{2}=[i_1, i_2],  R_{u_3}^{2}=[i_1, i_3]$. Assuming $n=5$,  QF\ori~ would be 0.6 for both cases, even though in the second scenario $i_1$ is recommended more times than $i_2$, which is recommended more than $i_3$.

As a result of this design choice, the score does not reflect the repeated recommendations of the same item to many users, which may indicate unfairness (e.g., popularity bias). This is a design limitation that one should be aware of when using QF.

\subsection[Undefinedness (Ent)]{Limitation 3: Undefinedness (Ent\ori)} 
\label{ss:undefinedness} 
We define the limitation of undefinedness as the measure giving an undefined value. In practice, this limitation renders the measure incomputable when encountering an (edge) case.\footnote{This is reminiscent of the completeness property in \cite{Moffat2013SevenMetrics}.} 
This is not negligible, as an assessment question for the design of (fairness) measures is related to how the measure responds to edge cases \cite{Raj2022MeasuringResults}. 
For Ent\ori, the case is related to the possibility of encountering the undefined value of $\log{0}$ during computation.
Undefinedness happens when there is at least one item from the dataset that does not appear in the recommendations at all,\footnote{
$\log{p(i)}$ in Eq.~\ref{eq:ent-ori} is undefined if item $i$ is not in the recommendation list for any users ($p(i)=0$).} 
which happens often because not all items in the datasets are guaranteed to be at the top $k$. 
For example, given $R_{u_1}^{2}=R_{u_2}^{2}=[i_1, i_2]$ and $I=\{i_1, i_2, i_3\}$, the value of $p(i_3)=0$, and this entails $\log{p(i_3)}=\log{0}$. 
Such a situation is common, as it will later be seen in Tab.~\ref{tab:base-main}, and therefore we do not consider this as an edge case. When the measure is incomputable for several models, its interpretation is less meaningful. 

We exclude the case where no item is recommended to any users ($|R|=0$), as this is a trivial case and it does not make much sense to evaluate fairness when there is no item being recommended. Regardless of the triviality, we identify some measures that are incomputable under this edge case: Jain\ori, Ent\ori, Gini\ori, Gini-w\ori, and VoCD\ori. Note that we do not consider these four measures to have the undefinedness limitation just for this reason. 

\subsection{Limitation 4: Always-fair (\ups FSat)}
\label{ss:always_fair}
We define the limitation of always-fair as the measure giving the fairest score regardless of the content of the recommendation list, under a specific condition depending on the particular measure. 
This happens for \up FSat when $km<n$, as empirically discovered by \cite{Patro2020FairRec:Platforms}. The maximin share, in this case, is 0 and all items are deemed satisfied as per the definition in $\S$\ref{sss:fsat}, regardless of the actual distribution of recommended items. 
While this is partly due to the design choice of FSat which is based on the maximin share, this means that \up FSat will always be $1$, rendering the measure unsuitable for use cases where $km<n$. 
Even though this limitation has been empirically identified before, there was no formal definition of it, and that is what we do here.

\subsection{Limitation 5: Item-representation-dependence (\dws VoCD)} \label{ss:item-rep-dep}
We formally identify this limitation in this work, even though it is part of an intentional choice of the measure, as opposed to an accidental or unforeseen byproduct of the design. Item-representation-dependence means that the score of the measure varies according to how item representations are built 
(e.g., embedding, graphs). The max value of \down VoCD depends on which item pairs are similar based on how items are represented. 
Even though the dependence on item representation is part of the design choice for VoCD, the limitation of this design should be taken into consideration when one chooses a fairness measure, e.g. ensuring a same way of representing items for a fair comparison.
 
Depending on how item representations are built, there may be different pairs of similar items in the set $A$, yielding different \down VoCD scores. E.g., given $R_{u_1}^{2} = [i_1, i_2],\ R_{u_2}^{2} = [i_1, i_3]$, if $A=\{(i_1, i_2)\}$ as per one item representation, \down VoCD $= 0.5 - \beta$. However, if according to another item representation, $A=\{(i_2, i_3)\}$, then \down VoCD $=0$. 
While the former score represents a somewhat unfair RS, the latter denotes that the RS is fair. 
Note that one can use the same recommendation algorithm and the same dataset, but with different ways of representing the items, different VoCD scores may be obtained. Therefore, the limitation still holds even when the comparison is only performed within an algorithm and a dataset.

\begin{table}[tb] 
\caption{
Measures of individual item fairness and their theoretical limitations. We identify four different causes for non-realisability, denoted by \textit{C} in this table. 
}
\label{tab:limitation-summary}
\resizebox{\textwidth}{!}{
\begin{tabular}{l|c||c|c|c|c|c|c|c|c|c}
\toprule
\midrule
\parbox[t]{0.4\textwidth}
{Legend\\
\bbullet: we fully resolve the limitation\\
\nofix: the limitation is unresolvable ($\S$\ref{ss:nofix})\\
\checkmark: another measure resolves the limitation} 
&\rotatebox[origin=r]{90}{Source}
&\rotatebox[origin=r]{90}{Jain \cite{jain1984quantitative}}
&\rotatebox[origin=r]{90}{QF \cite{Zhu2020FARM:APPs}}
&\rotatebox[origin=r]{90}{Ent \cite{Shannon1948ACommunication}}
&\rotatebox[origin=r]{90}{Gini \cite{Gini1912VariabilitaMutabilita}}
&\rotatebox[origin=r]{90}{Gini-w \cite{Do2021Two-sidedDominance}}
&\rotatebox[origin=r]{90}{FSat \cite{Patro2020FairRec:Platforms}}
&\rotatebox[origin=r]{90}{VoCD \cite{Wang2022ProvidingSystems}}
&\rotatebox[origin=r]{90}{II-D \cite{Wu2022JointRecommendation}}
&\rotatebox[origin=r]{90}{AI-D \cite{Wu2022JointRecommendation}}
\\
\midrule
\midrule
non-realisability: cannot reach max/min score (cause number denoted by \textit{C}) &&&&&&&&&\\
\textit{C1.} Most unfair score is only given to an impossible scenario    &us&\bbullet&\bbullet&\bbullet&\bbullet&\bbullet&\bbullet&&&\\
\textit{C2.} Fewer recommendation slots compared to number of items&us&\bbullet&\bbullet&\bbullet&\bbullet&\bbullet&\bbullet&&\nofix&\nofix\\
\textit{C3.} Number of recommendation slots is indivisible by number of items&us&\bbullet&&\bbullet&\bbullet&\nofix&&&\nofix&\nofix\\
\textit{C4.} Non-realisability due to unknown formulation of max/min score
&us&&&&&\nofix&&\nofix&\nofix&\nofix\\
\midrule 
quantity-insensitivity: ignores frequency of item recommendation   &\cite{Zhu2020FARM:APPs}&&$\checkmark$&&&&&&&\\
\midrule
undefinedness: cannot be computed (undefined value)   &us&&&\bbullet&&&&&&\\
\midrule 
always-fair: gives fairest score regardless of recommendation contents   &\cite{Patro2020FairRec:Platforms}&&&&&&\nofix&&&\\
\midrule 
item-representation-dependence: depends on how items are represented   &us&&&&&&&\nofix&&\\
\midrule 
\bottomrule
\end{tabular}
}
\end{table}

\section{Resolving limitations}
\label{s:extensions}

We explain how we resolve each limitation or why it is unresolvable. For the remainder of this paper, we refer to the original version of an evaluation measure $M$ as $M_{\text{ori}}$, and to our modified version of an evaluation measure $M$ as $M_{\text{our}}$. When $\cdot$\ori~or $\cdot$\our~is not specified, we refer to both the original and modified version simultaneously.

\subsection{Resolving non-realisability (Limitation 1) and undefinedness (Limitation 3)}
\label{ss:non_realisability}

\begin{table}[tb]
\caption{
Most (un)fair score @$k$. Note that the Most Fair @$k$ scores (except for Gini-w) remain the same as the theoretical fairest value (i.e., 0 or 1) when the number of recommendation slots is divisible by the number of items, $n \mid km$.}
\label{tab:bounds}
\begin{tabular}{lll}
\toprule
Measure &     Most Unfair @$k$ & Most Fair @$k$ \\
\midrule
   \up Jain\ori &          $\frac{k}{n}$ &        $\frac{(km)^2}{
 n\left(n\floorkm^2 
 + (km\bmod{n})\left(2 \floorkm+1\right) 
 \right)} $ \\
   \up QF\ori &            $\frac{k}{n}$ &        $\min\left(\frac{km}{n},1\right)$ \\
   \up Ent\ori &           $\log{k}$ &            Eq.~\eqref{eq:ent-max}\\ 
   \down Gini\ori &          $1-\frac{k}{n}$ &      $ \frac{(n-km \bmod n) (km \bmod n)}{kmn} $\\ 
  \down Gini-w\ori &          Eq.~\eqref{eq:gini-w-max} &      Eq.~\eqref{eq:gini-w-min} when $km\leq n$\\
   \up FSat\ori &          $\frac{k}{n}$   &      1\\
   \down VoCD\ori &          $\frac{m-1}{m} - \beta$ &      0\\
\bottomrule
\end{tabular}
\end{table}

We resolve causes 1, 2, and 3 of non-realisability via post-calculation correction of under/overestimated fairness scores based on the theoretical min and max values for a scenario with limited $km$ recommendation slots. 
Specifically, we rescale the range of the measures to the actual theoretically achievable most fair and unfair values. The rescaled measures either retain the $[0,1]$-range, or are now ranged in $[0,1]$. To rescale, we compute the measure's upper and lower bounds when possible (see Tab.~\ref{tab:bounds} and App.~\ref{app:boundsproof} for the derivations) by considering the most fair and unfair recommendation case for each measure, which we explain next.

For the measure that suffers from non-realisability due to \textit{Causes 1--2} (and quantity-insensitivity) i.e., QF, we posit that the most unfair recommendation $@k$ is when the same $k$ unique items are recommended to each of the $m$ users, resulting in the min exposure for items in $I$. 
Therefore, each of these $k$ items is recommended $m$ times. 
This is equivalent to the recommendation generated by Pop \cite{Rashid2002GettingYou} that gives the same $k$ most popular items to all users. 
We posit that the most fair case $@k$ is when $\min(km,n)$ unique items are recommended to $m$ users, i.e., the max number of items allowed by the recommendation slots is exposed.  

For measures that suffer from non-realisability, due to \textit{Causes 1--3} but not quantity-insensitivity (Jain, Ent, Gini, Gini-w, FSat), we consider the most fair recommendation to be when ($n-km \bmod n$) items are exposed $\floorkm$ times and the rest ($km \bmod n$) items are exposed $\floorkm +1$ times. The most unfair case for Jain, Ent, Gini, Gini-w, and FSat is the same as QF.

We then perform min-max normalisation (hereafter referred to as normalisation) using the most unfair/fair bounds as the min/max possible value of the measure. The general process is: let $x_{\max}$ be the max possible value of a fairness score $x$, and $x_{\min}$ be the min possible value of $x$. The normalised score of $x$, denoted by $x'$, is calculated using $x'= \frac{x-x_{\min}}{x_{\max}-x_{\min}}$. Note that this normalisation does not work when $x_{\max}=x_{\min}$ due to division by zero. This happens when the most unfair recommendation is equal to the fairest recommendation, which occurs when $k=n$. We therefore exclude this case from our corrections.

After normalisation, the measures quantify fairness of items by considering the following components: the recommendation list, the number of items in the dataset, and the most fair and most unfair recommendation. As a result, the most fair recommendation scenario and most unfair recommendation scenario are now mapped to the endpoints, instead of the unrealistic scenarios in Tab.~\ref{tab:situations}. 
Next, we explain in detail the normalisation of each measure.

For \up Jain\ori, we apply the following normalisation on Eq.~\eqref{eq:jain-ori}: as the higher the score, the fairer, we use Eq.~\eqref{eq:jain-max} as $x_{\max}$ and $x_{\min} = \frac{k}{n}$ because these are the most fair and unfair @$k$ values.  Eq.~\eqref{eq:jain-max} simplifies to $y=\frac{km}{n}$ when $km<n$, and simplifies to 1 when $n \mid km$.

 \noindent\begin{minipage}{0.55\linewidth}
    \begin{equation}
    \label{eq:jain-max}
         \text{Jain}_{\max} = \frac{(km)^2}{
         n\left(n\floorkm^2 
         + (km\bmod{n})\left(2 \floorkm+1\right) 
         \right)}   
    \end{equation}
\end{minipage}%
\begin{minipage}{0.45\linewidth}
   \begin{equation}
    \label{eq:jain-norm}
        \text{Jain}_{\text{our}}  =  \frac{\text{Jain}_{\text{ori}}-\frac{k}{n}}{\text{Jain}_{\max}-\frac{k}{n}}
    \end{equation}
\end{minipage}

We normalise \up QF\ori~ similarly to the above. As the higher the QF score, the fairer, we use $x_{max} = \min(y,1)$ and $x_{min} = \frac{k}{n}$. 
\begin{equation}
\label{eq:qf-norm}
\text{QF}_{\text{our}}  = 
\begin{cases}
    \frac{\text{QF}_{\text{ori}}-\frac{k}{n}}{1-\frac{k}{n}} 
    = \frac{|R|-k}{n-k} & \text{if } km \geq n \\
    \frac{\text{QF}_{\text{ori}}-\frac{k}{n}}{\frac{km}{n}-\frac{k}{n}} 
    = \frac{|R|-k}{k(m-1)} & \text{otherwise}
\end{cases}
\end{equation}

We acknowledge that by performing this normalisation on QF\ori, there is now an additional way of measuring fairness. The new score can now be interpreted as ``given that each item should be recommended at least once, how fair the recommendation is w.r.t. the most unfair and the fairest recommendation''. The most (un)fair recommendation and the normalisation depend on the number of recommendation slots, and we argue that it is better to use this number instead of using the number of items in the datasets; in practice, the number of items shown to the users is almost always limited. However, if QF is meant to be used for detecting the percentage of items that are exposed, then the QF\ori~ may be used at the expense of needing additional information on what is the best possible QF score, to know the limit of how fair the recommendation can be.
 
For \up Ent\ori, as the higher the score, the fairer, we use 
 Eq.~\eqref{eq:ent-max} as  $x_{\max}$ when $km \geq n$ or $x_{\max}=\log{km}$ otherwise, and 
 $x_{\min} = \log{k}$ for normalisation.  Eq.~\eqref{eq:ent-max} simplifies to $\log{n}$ when $n \mid km$.
\begin{align}
\label{eq:ent-max}
\text{Ent}_{\max} = 
-
(n - km \bmod n)\left(\frac{\floorkm}{km}\log{\frac{\floorkm}{km}}\right)
- (km \bmod n)\left(\frac{\floorkm+1}{km}\log{\frac{\floorkm+1}{km}}\right) 
\end{align}

However, \up Ent\ori~ still suffers from \textbf{undefinedness}, which we resolve by restricting the sum over $i$ in Eq.~\eqref{eq:ent-ori} to only recommended items:
\begin{equation}
\label{eq:ent-fair}
    \text{Ent$_{\text{def}}$} = 
    - \sum\limits_{i \in R}{p(i) \log{p(i)}} 
\end{equation}
where $p(i)$ is calculated via Eq.~\eqref{eq:ent-ori} and $p(i)>0$ because $i \in R$. Performing normalization on Eq.~\eqref{eq:ent-fair}, we obtain:
\begin{equation}
\label{eq:ent-norm}
\text{Ent}_{\text{our}}  = 
\begin{cases}
    \frac{\text{Ent}_{\text{def}}-\log{k}}{\text{Ent}_{\max}-\log{k}}& \text{if } km \geq n \\
    \frac{\text{Ent}_{\text{def}}-\log{k}}{\log{km}-\log{k}}
    = \frac{\text{Ent}_{\text{def}}-\log{k}}{\log{m}}  & \text{otherwise}
\end{cases}
\end{equation}

For \down Gini\ori, as the lower the score, the fairer, we use Eq.~\eqref{eq:gini-min} as $x_{\min}$ and $x_{\max} = 1-\frac{k}{n}$. Eq.~\eqref{eq:gini-min} simplifies to $1-y$ when $km>n$, and simplifies to 0 when $n \mid km$. 
\begin{multicols}{2}\noindent
    \begin{equation}
    \label{eq:gini-min}
     \text{Gini}_{\min} = 
     \frac{
     (n-km \bmod n) (km \bmod n)
     }
     {kmn} 
    \end{equation}
    \begin{equation}
    \label{eq:gini-norm}
    \text{Gini}_{\text{our}}  = 
    \frac{\text{Gini}_{\text{ori}}- \text{Gini}_{\min} }{1-\frac{k}{n} -  \text{Gini}_{\min} } 
\end{equation}
\end{multicols} For \down Gini-w\ori, we only consider cases where $km \leq n$ as finding the most fair recommendation list across all users for the other cases is analytically not possible. The problem does not have a closed form solution and computing the solution requires solving a constrained optimization that considers all possible permutations of recommendations across users. 
We use Eq.~\eqref{eq:gini-w-min} as $x_{\min}$ when $km \leq n$, $x_{\min}=0$ otherwise, and Eq.~\eqref{eq:gini-w-max} as $x_{\max}$ to normalise \down Gini-w\ori.
\begin{multicols}{2}\noindent
    \begin{equation}
\label{eq:gini-w-min}
 \text{Gini-w}_{\min} =
  \frac{
    \sum\limits_{\ell=1}^k 
    \sum\limits_{j=n-\ell m+1}^{n- \ell m + m} (2j-n-1) \log_{\ell + 1}{2}}
    {mn\sum\limits_{\ell=1}^k{\log_{\ell+1}{2}}}
\end{equation}
\begin{equation}
\label{eq:gini-w-max}
 \text{Gini-w}_{\max} = 
 \frac{
 \sum\limits_{\ell=1}^k{(n-2\ell+1) \log_{\ell+1}{2}}}
    {n\sum\limits_{\ell=1}^k{\log_{\ell+1}{2}}}
\end{equation}
\end{multicols}\noindent
\begin{equation}
\label{eq:gini-w-norm}
\text{Gini-w}_{\text{our}}  = 
\begin{cases}
    \frac{
    \text{Gini-w}_{\text{ori}}- \text{Gini-w}_{\min}}
    {
    \text{Gini-w}_{\max}  -  \text{Gini-w}_{\min} } & \text{if } km \leq n \\ 
    \frac{
    \text{Gini-w}_{\text{ori}}}
    {
    \text{Gini-w}_{\max}}  & \text{otherwise}
\end{cases}
\end{equation}

For \up FSat\ori, as the higher the score, the fairer, we normalise Eq.~\eqref{eq:fsat-ori} using $x_{\max}=1$ and $x_{\min}=\frac{k}{n}$.\footnote{VoCD, II-D, and AI-D can be normalised in a similar way after computationally approximating their empirical minimum and maximum values.}  

\begin{equation}
\label{eq:fsat-norm}
\text{FSat}_{\text{our}}  = 
    \frac{\text{FSat}_{\text{ori}}-\frac{k}{n}}{1-\frac{k}{n}} 
\end{equation}

\subsection{Resolving quantity-insensitivity (Limitation 2)}\label{ss:resolve-quantity} 
While the quantity-insensitivity limitation is due to the design choice of the QF measure ($\S$\ref{ss:quantity-insensitivity}), we reason that a solution to this limitation is needed in case one would like to use a measure that is equation-wise similar to QF\ori, but needs the measure to be sensitive to the change of exposure received by the item. Hence, the \textbf{quantity-insensitivity} limitation for \up QF\ori~ may simply be resolved by calculating \up Jain\ori~ instead, as \up QF\ori~ o\-riginates from \up Jain\ori~ (see Eq.~\ref{eq:jain-ori} and \ref{eq:qf-ori}). \up Jain\ori~ gives different weights between items that have been recommended once and more than once. Referring to the toy example given in $\S$\ref{ss:quantity-insensitivity} where \up QF\ori~ would be 0.6 for both cases, but \up Jain\ori~ $\approx 0.514$ for the first scenario and \up Jain\ori~ $=0.6$ for the second scenario. \up Jain\ori~ returns a higher fairness score in the second scenario as the distribution of the frequency count of items is more balanced.

\subsection{Unresolvable limitations}
\label{ss:nofix} 
We resolve three out of five limitations (see Tab. \ref{tab:limitation-summary}). The remaining limitations are unresolvable for the following reasons. 

\noindent\textbf{Non-realisability} 
due to \textit{Cause 4} cannot be resolved because no closed-form solutions have been discovered for the recommendations that produce the best and worst fairness scores for each measure. Computing the solution requires solving a constrained optimization problem that cannot be practically solved for large datasets such as \rs~data. 
While the measures could theoretically be corrected in a similar manner to the ones in $\S$\ref{ss:non_realisability}, it is only possible to do so after computing the most/least fair recommendation list, which is impractical.

\noindent\textbf{Item-representation-dependence} cannot be resolved because item representation is required by the measure to determine whether items are similar. It is avoidable if all items are considered similar to each other, but this is unrealistic. 
Moreover, to get comparable scores, all \rs~should use the same representation of items, which cannot be guaranteed. This point is not handled in the original definition of VoCD \cite{Wang2022ProvidingSystems}, where the only item representation considered is item embeddings. While it is possible to use the same external representation for item representation (e.g. similarity matrix based on tags, keyword, or other features) for different \rs, as commonly done with diversity metrics \cite{Ziegler2005ImprovingDiversification}, the fairness score may still change depending on the representation used to determine the item similarity.

\noindent\textbf{Always-fair} cannot be resolved for FSat\ori~ because attempting to resolve this would require tampering with the definition of `satisfied' based on the concept of maximin share. This definition of `satisfied' is an integral part of the measure. Replacing the maximin share criterion with another requirement would turn FSat\ori~ into a different measure. 
Therefore, as this limitation is related to the concept of measure, we are unable to recommend a solution to the limitation without changing the design of the measure.

\section{Empirical analysis}
\label{s:exp}
We experimentally analyse the relevance and fairness of several recommenders and compare the original measures ($\S$\ref{s:priorwork}) to our corrected versions ($\S$\ref{s:extensions}) for the six datasets shown in Tab.~\ref{tab:dataset}. We present the results for Lastfm and Ml-1m in this section, and for the other datasets in App.~\ref{app:extend-result}.

\subsection[Experimental setup]{Experimental setup}
\label{ss:setup}

\noindent \textbf{Dataset preprocessing}. 
We use six freely available datasets (see Tab.~\ref{tab:dataset}) from~\cite{Zhao2021RecBole:Algorithms}:\footnote{\url{https://github.com/RUCAIBox/RecSysDatasets}} 
Lastfm;\footnote{\url{http://www.lastfm.com} \label{fn:lfm}} Ml-1m \cite{Harper2015TheContext}; Book-x \cite{Ziegler2005ImprovingDiversification}; Amazon-lb, Amazon-dm, and Amazon-is \cite{Ni2019JustifyingAspects}. We remove users/items with $<5$ interactions and use $80\%/10\%/10\%$ to train/validate/test, with a user-based random split for Lastfm and Book-x (timestamps are not available), and a user-based temporal split for all other datasets, i.e., the last $10\%$ of each user's interactions are in the test set. We convert ratings $\ge3$ on Ml-1m \& Amazon-* and ratings $\ge6$ on Book-x to 1. We discard the rest of the ratings. 
We choose these thresholds as the ratings are from 1--5 in Ml-1m and Amazon-*, and 0--10 in Book-x. We do not convert for Lastfm as it uses implicit feedback, so all interactions have a value of 1. For duplicate values, we keep the last interaction.  

\noindent \textbf{Recommenders}. For recommendation we use: Pop \cite{Rashid2002GettingYou} (recommends $k$ most popular items), item-based K-Nearest Neighbours (ItemKNN) \cite{Deshpande2004Item-basedAlgorithms}, 
Sparse Linear Method (SLIM) \cite{Ning2011SLIM:Systems}, 
Bayesian Personalized Ranking (BPR) \cite{RendleBPR:Feedback}, 
Neural Graph Collaborative Filtering (NGCF) \cite{Wang2019NeuralFiltering}, 
Neural Matrix Factorization (NeuMF) \cite{He2017NeuralFiltering}, and 
Variational Autoencoder with multinomial likelihood (MultiVAE) \cite{Liang2018VariationalFiltering}. 
We use training batch sizes of $4096$, 
Adam~\cite{Kingma2014Adam:Optimization} as optimizer, and 
the RecBole library ~\cite{Zhao2021RecBole:Algorithms}.
We train BPR, NGCF, NeuMF, and MultiVAE for $300$ epochs, but 
use early stopping of $10$ epochs and keep the model that produces the best NDCG@10 on the validation set. We tune hyperparameters on all models except Pop, with RecBole's hyperparameter tuning module. The hyperparameter search space and optimal hyperparameters are in App.~\ref{app:extend-exp_set_up}. 
For all recommenders, when we generate the recommendation list for a user during testing, the items in the user's train or validation set are placed at the end of the user's list to avoid re-recommending them. 

\noindent\textbf{Measures}.
We evaluate models w.r.t.~a) relevance-only measures (HR, MRR, Precision (P), Recall (R), MAP, NDCG), and b) individual item fairness measures, both the original and our corrected measures. 
All measures are computed at $k=10$, unless otherwise stated. We evaluate on the full test set of items instead of a sample of them, as doing the latter is known to yield misleading results~\cite{Krichene2020OnRecommendation}. This leads to lower performance than reported when sampling the test set. 
Lastly, 
for Ent, we use the log base-$n$. 
For VoCD we choose the values of $\alpha$ and $\beta$ such that VoCD maintains comparability with the other fairness measures: all recommended items are considered similar\footnote{
All recommended items will be treated as similar items as $sim(i,i') \geq 1-2 = -1$ and $sim(i,i')\in [-1,1]$.} ($\alpha=2$)
and thus $A$ is the set of all possible pairs of different items in the top $k$, %A=\{(i,i') \in R\}$
without any tolerance for coverage disparity ($\beta=0$). We also choose this configuration to avoid reliance on similarity scores based on item embeddings. 
For II-D and AI-D, we use $\gamma=0.8$ \cite{Wu2022JointRecommendation}.

\begin{table}[tb]
\caption{Statistics of the datasets before and after our preprocessing.}
\label{tab:dataset}
\begin{tabular}{lrrrr}
\toprule
\textbf{dataset} & \multicolumn{1}{l}{\textbf{\#users}} & \multicolumn{1}{l}{\textbf{\#items}} & \multicolumn{1}{l}{\textbf{\#interactions}} & \multicolumn{1}{l}{\textbf{sparsity (\%)}} \\ 
\midrule
\multicolumn{5}{c}{\textit{original (as provided by \cite{Zhao2021RecBole:Algorithms})}}                                                   \\ \midrule
Lastfm\footnoteref{fn:lfm} & 1,892 & 17,632 & 92,834 & 99.7217\% \\
Ml-1m \cite{Harper2015TheContext} & 6,040 & 3,706 & 1,000,209 & 95.5316\% \\
Book-x \cite{Ziegler2005ImprovingDiversification} & 105,283 & 340,556 & 1,149,780 & 99.9968\% \\
Amazon-lb  \cite{Ni2019JustifyingAspects} & 416,174 & 12,120 & 574,628 & 99.9886\% \\
Amazon-dm \cite{Ni2019JustifyingAspects} & 840,372 & 456,992 & 1,584,082 & 99.9996\% \\
Amazon-is \cite{Ni2019JustifyingAspects} & 1,246,131 & 165,764 & 1,758,333 & 99.9991\% \\
\midrule
\multicolumn{5}{c}{\textit{preprocessed (by us)}}   \\ \midrule                                                                     
Lastfm & 1,859 & 2,823 & 71,355 & 98.6403\% \\
Ml-1m & 6,038 & 3,307 & 835,789 & 95.8143\% \\
Book-x & 5,639 & 7,455 & 91,385 & 99.7826\% \\
Amazon-lb & 1,644 & 791 & 16,765 & 98.7108\% \\
Amazon-dm & 11,750 & 9,462 & 116,681 & 99.8951\% \\
Amazon-is & 6,574 & 3,569 & 45,762 & 99.8050\%\\ 
\bottomrule
\end{tabular}
\end{table}

\subsection{Analysis of relevance and fairness}
\label{ss:performance} 
We start by studying the relevance and fairness of several recommender models. We compare a) different recommender models w.r.t. relevance and fairness scores, and b) different evaluation measures, including corrected and uncorrected measures. The goal of a) is to study whether relevance and/or fairness scores vary between models, and to obtain a ranking of models that is used in subsequent analysis ($\S$\ref{ss:corr}). The goal of b) is to study measures based on diverse concepts of fairness, and highlight their differences (and similarity, if any). 

The evaluation results are shown in Tab. \ref{tab:base-main} for Lastfm and Ml-1m, and in App.~\ref{app:extend-performance} for the other datasets.

\begin{table*} 
\caption{Relevance \textsc{(rel)} and fairness \textsc{(fair)} scores of the recommender models on Lastfm and Ml-1m. The most relevant and most fair score per measure is in bold. $\uparrow$ means the higher the better, $\downarrow$ the lower the better. `nan' stands for `not a number'.}
\label{tab:base-main}
\scalebox{1}{
\begin{tabular}{lllrrrrrrr}
\toprule
 &  && Pop$^{*}$ & ItemKNN & SLIM & BPR & NGCF & NeuMF & MultiVAE \\
\midrule
\multirow[c]{21}{*}{Lastfm} 
& \multirow[c]{6}{*}{\textsc{rel}} 
& $\uparrow$ $\text{HR}$ & 0.236686 & 0.563206 & 0.520710 & \bfseries 0.603012 & 0.598171 & 0.571813 & 0.597633 \\
& & $\uparrow$ $\text{MRR}$ & 0.102296 & 0.320710 & 0.290459 & \bfseries 0.336577 & 0.327790 & 0.301449 & 0.326032 \\
& & $\uparrow$ $\text{P}$ & 0.029855 & 0.082948 & 0.075901 & \bfseries 0.091286 & 0.090479 & 0.082679 & 0.091070 \\
& & $\uparrow$ $\text{MAP}$ & 0.033572 & 0.129943 & 0.112303 & \bfseries 0.140626 & 0.136694 & 0.120650 & 0.136794 \\
& & $\uparrow$ $\text{R}$ & 0.078205 & 0.240326 & 0.210221 & \bfseries 0.262193 & 0.260140 & 0.238121 & 0.261565 \\
& & $\uparrow$ $\text{NDCG}$ & 0.063259 & 0.207209 & 0.183180 & \bfseries 0.223416 & 0.219215 & 0.198248 & 0.219408 \\
\cline{2-10}
& \multirow[c]{15}{*}{\textsc{fair}} 
& $\uparrow$ $\text{Jain}_{\text{ori}}$ & 0.005350 & 0.050544 & 0.029023 & 0.080549 & 0.086601 & 0.096789 & \bfseries 0.134035 \\
& & $\uparrow$ $\text{Jain}_{\text{our}}$ & 0.001824 & 0.047434 & 0.025715 & 0.077714 & 0.083822 & 0.094103 & \bfseries 0.131692 \\
& & $\uparrow$ $\text{QF}_{\text{ori}}$ & 0.009564 & 0.432519 & 0.145590 & 0.428268 & 0.399221 & 0.462983 & \bfseries 0.683316 \\
& & $\uparrow$ $\text{QF}_{\text{our}}$ & 0.006043 & 0.430501 & 0.142552 & 0.426235 & 0.397085 & 0.461074 & \bfseries 0.682190 \\
& & $\uparrow$ $\text{Ent}_{\text{ori}}$ & nan & nan & nan & nan & nan & nan & nan \\
& & $\uparrow$ $\text{Ent}_{\text{our}}$ & 0.096360 & 0.595274 & 0.439907 & 0.656154 & 0.660951 & 0.686602 & \bfseries 0.763463 \\
& & $\uparrow$ $\text{FSat}_{\text{ori}}$ & 0.009564 & 0.136734 & 0.071909 & 0.171803 & 0.178888 & 0.195537 & \bfseries 0.230960 \\
& & $\uparrow$ $\text{FSat}_{\text{our}}$ & 0.006043 & 0.133665 & 0.068610 & 0.168859 & 0.175969 & 0.192677 & \bfseries 0.228226 \\
& & $\downarrow$ $\text{Gini}_{\text{ori}}$ & 0.995080 & 0.908416 & 0.968311 & 0.884752 & 0.885149 & 0.864875 & \bfseries 0.780612 \\
& & $\downarrow$ $\text{Gini}_{\text{our}}$ & 0.998564 & 0.908251 & 0.970668 & 0.883591 & 0.884004 & 0.862877 & \bfseries 0.775067 \\
& & $\downarrow$ $\text{Gini-w}_{\text{ori}}$ & 0.995984 & 0.915773 & 0.972331 & 0.895154 & 0.896870 & 0.879730 & \bfseries 0.793962 \\
& & $\downarrow$ $\text{Gini-w}_{\text{our}}$ & 0.998747 & 0.918313 & 0.975028 & 0.897637 & 0.899358 & 0.882170 & \bfseries 0.796164 \\
& & $\downarrow$ $\text{VoCD}_{\text{ori}}$ & 0.669135 & 0.609061 & 0.704415 & 0.640846 & 0.656609 & 0.641465 & \bfseries 0.598510 \\
& & $\downarrow$ $\text{II-D}_{\text{ori}}$ & \bfseries 0.000970 & \bfseries 0.000970 & \bfseries 0.000970 & \bfseries 0.000970 & \bfseries 0.000970 & \bfseries 0.000970 & \bfseries 0.000970 \\
& & $\downarrow$ $\text{AI-D}_{\text{ori}}$ & 0.000668 & 0.000061 & 0.000114 & 0.000037 & 0.000034 & 0.000032 & \bfseries 0.000019 \\

\cline{1-10}
\multirow[c]{21}{*}{Ml-1m} 
& \multirow[c]{6}{*}{\textsc{rel}} 
& $\uparrow$ $\text{HR}$ & 0.273435 & 0.336204 & 0.343326 & \bfseries 0.348791 & 0.337198 & 0.324114 & 0.331070 \\
& & $\uparrow$ $\text{MRR}$ & 0.112995 & 0.136303 & 0.139593 & \bfseries 0.142428 & 0.140765 & 0.133319 & 0.130503 \\
& & $\uparrow$ $\text{P}$ & 0.045661 & 0.054190 & 0.053445 & \bfseries 0.055167 & 0.055018 & 0.052302 & 0.051308 \\
& & $\uparrow$ $\text{MAP}$ & 0.025940 & 0.036132 & 0.036830 & \bfseries 0.038389 & 0.037460 & 0.033568 & 0.035284 \\
& & $\uparrow$ $\text{R}$ & 0.040195 & 0.064935 & 0.071923 & \bfseries 0.073647 & 0.067244 & 0.060861 & 0.068755 \\
& & $\uparrow$ $\text{NDCG}$ & 0.056219 & 0.073897 & 0.075873 & \bfseries 0.078110 & 0.076032 & 0.070162 & 0.072263 \\
\cline{2-10}
& \multirow[c]{15}{*}{\textsc{fair}} 
& $\uparrow$ $\text{Jain}_{\text{ori}}$ & 0.006867 & 0.023143 & 0.045463 & \bfseries 0.068296 & 0.057974 & 0.059396 & 0.065298 \\
& & $\uparrow$ $\text{Jain}_{\text{our}}$ & 0.003857 & 0.020192 & 0.042593 & \bfseries 0.065508 & 0.055148 & 0.056576 & 0.062499 \\
& & $\uparrow$ $\text{QF}_{\text{ori}}$ & 0.040822 & 0.188388 & 0.236166 & 0.444209 & 0.306018 & 0.384336 & \bfseries 0.485939 \\
& & $\uparrow$ $\text{QF}_{\text{our}}$ & 0.037913 & 0.185927 & 0.233849 & 0.442524 & 0.303913 & 0.382469 & \bfseries 0.484380 \\
& & $\uparrow$ $\text{Ent}_{\text{ori}}$ & nan & nan & nan & nan & nan & nan & nan \\
& & $\uparrow$ $\text{Ent}_{\text{our}}$ & 0.187196 & 0.435739 & 0.552900 & 0.650556 & 0.606893 & 0.623249 & \bfseries 0.652672 \\
& & $\uparrow$ $\text{FSat}_{\text{ori}}$ & 0.020865 & 0.069549 & 0.114605 & 0.164500 & 0.147263 & 0.153311 & \bfseries 0.167523 \\
& & $\uparrow$ $\text{FSat}_{\text{our}}$ & 0.017895 & 0.066727 & 0.111920 & 0.161965 & 0.144677 & 0.150743 & \bfseries 0.164998 \\
& & $\downarrow$ $\text{Gini}_{\text{ori}}$ & 0.993229 & 0.970583 & 0.942655 & 0.893080 & 0.919816 & 0.908995 & \bfseries 0.888977 \\
& & $\downarrow$ $\text{Gini}_{\text{our}}$ & 0.996201 & 0.973246 & 0.944935 & 0.894681 & 0.921783 & 0.910813 & \bfseries 0.890521 \\
& & $\downarrow$ $\text{Gini-w}_{\text{ori}}$ & 0.994377 & 0.973088 & 0.948836 & 0.902403 & 0.927651 & 0.918188 & \bfseries 0.894798 \\
& & $\downarrow$ $\text{Gini-w}_{\text{our}}$ & 0.996731 & 0.975391 & 0.951082 & 0.904539 & 0.929846 & 0.920361 & \bfseries 0.896916 \\
& & $\downarrow$ $\text{VoCD}_{\text{ori}}$ & 0.789888 & 0.733042 & 0.737530 & 0.705762 & 0.724919 & 0.712774 & \bfseries 0.699980 \\
& & $\downarrow$ $\text{II-D}_{\text{ori}}$ & \bfseries 0.000828 & \bfseries 0.000828 & \bfseries 0.000828 & \bfseries 0.000828 & \bfseries 0.000828 & \bfseries 0.000828 & \bfseries 0.000828 \\
& & $\downarrow$ $\text{AI-D}_{\text{ori}}$ & 0.000381 & 0.000096 & 0.000049 & 0.000032 & 0.000038 & 0.000038 & \bfseries 0.000030 \\
\bottomrule
\multicolumn{10}{l}{\multirow{2}{*}{
\parbox[t]{0.95\textwidth}{*The scores of our \textsc{fair} measures for Pop are not 0 or 1, because in our experiment set-up, items from users' train or validation splits are excluded from the top $k$ recommendation list.}}}
\end{tabular}%
}
\end{table*}

\subsubsection{Discussion of recommendation models in Tab.~\ref{tab:base-main}} 

BPR has the highest relevance scores, while MultiVAE generally has the highest fairness scores. Between BPR and MultiVAE, we observe that the relevance scores are higher in BPR but the fairness scores are higher for MultiVAE. 
E.g., in Lastfm, NDCG $=0.223$ for BPR and NDCG $=0.219$ for MultiVAE. Meanwhile, the higher-is-better fairness scores range between $[0.078, 0.656]$ for BPR and $[0.132, 0.763]$ for MultiVAE. 
This is not always observed for all models. E.g., for ItemKNN and SLIM, better fairness tends to be accompanied by better relevance. Furthermore, other discrepancies exist: the relevance of ItemKNN and MultiVAE is on par, but their fairness scores e.g., the scores of \up Jain of ItemKNN, are only half of those achieved by MultiVAE. 
Generally, the recommenders agree in relative ordering of scores, but some models have higher scores for .\our~than .\ori~and vice-versa, e.g., for Lastfm, \down Gini\ori$>$\down Gini\our~ for ItemKNN but \down Gini\ori$<$\down Gini\our~ for SLIM. Overall, we observe that: 

\begin{itemize}
    \item A recommender model that is the best in terms of relevance may also be relatively fair.
    \item The recommender models mostly have a similar ordering of the scores of fairness measures: if a fairness measure has a higher value than another in a recommender model X, it is the same for recommender model Y.
    \item Some models achieve relatively similar relevance scores, but with a huge disparity between their fairness scores.
\end{itemize}

\subsubsection{Discussion of fairness evaluation measures in Tab.~\ref{tab:base-main}} 
For the higher-is-better measures, Jain, QF, FSat, and Gini, the scores of the original measures and our measures are similar. Both \up Jain and \up QF should range in $[0,1]$, but \up Jain is very close to 0 i.e., ($\sim$0.1 or less), and \up QF scores are $\sim$0.7 (in Lastfm) and $\sim$0.5 (in Ml-1m). 
Similarly, the scores of \down II-D and \down AI-D are also very close to 0 while \down Gini scores are closer to 1. 
While these are due to the different underlying fairness ideas between the measures, the big differences in scores may cause confusion, e.g. that a recommendation is very unfair based on \up Jain or \down Gini, or moderately fair based on \up QF. For Lastfm and Ml-1m, we also see that the absolute scores for the same recommender, e.g., MultiVAE, follow the same order from the lowest to the highest: \up Jain, \up FSat, \up QF, and \up Ent. 
This indicates that \up Jain tends to give lower scores (more unfair) than the other measures. We observe similar trends for \down II-D and \down AI-D, which tend to give lower scores (more fair) compared to other lower-is-better measures.
The weighted \down Gini-w is also more strict than the unweighted \down Gini as \down Gini-w tends to give more unfair scores than \down Gini. We study further the strictness of these measures in $\S$\ref{ss:insert}.

We also observe that scores of \down AI-D are hardly distinguishable. They differ only in the fourth or more decimal point for both Lastfm and Ml-1m. However, differences in other measures can be seen in the first or second decimal point. The small scores of AI-D may be due to the measure quantifying the disparity between item exposure and random exposure, which is very little when we have a large number of items. This finding suggests that when computing \down AI-D, care should be taken to avoid rounding errors and failure to distinguish the scores due to the floating-point format.

For all datasets and models, the original Ent cannot be calculated because it returns NaN due to zero division errors. This happens because there are items in the dataset that are not recommended. 
Our corrected version of this measure ($\S$\ref{s:extensions}) does not suffer from this problem.

For the same dataset and $k$, regardless of the recommender, the II-D scores are always the same/constant. Due to the fixed amount of slots $km$, within a single recommendation round, the exposure values $E_{u,i} \in \{1, \gamma, \dots, \gamma^k, 0\}$ (see Eq.~\ref{eq:eui}) for all user-item pairs, and the number of user-item pair having a specific exposure value $E_{u,i}$ is always $m$. Both of these properties lead to constant II-D scores, as II-D is calculated by taking a mean squared difference between each $E_{u,i}$ value and a constant value based on random expected exposure. When considering cases with multiple rounds of recommendations, the score of II-D may not remain constant anymore, as the user-item exposure values are aggregated across recommendation rounds, resulting in the possibility of $E_{u,i}$ having linear combinations of values from the set above. We have also illustrated in $\S$\ref{sss:cause_nonreal} that II-D is not constant in multiple recommendation rounds.

Overall, we observe that: 
\begin{itemize}
    \item The different fairness measures have different ranges in these experiments, even if theoretically they have the same range. 
    \item The original Ent is always incomputable in the experiments, and our corrected Ent resolves the issue. 
    \item II-D\ori~ remains constant for the same dataset, rendering this measure notably less meaningful under this single-round experimental set-up. 
    \item Both \down II-D\ori~ and \down AI-D\ori~ have minuscule values, indicating near-perfect fairness even if this contradicts other fairness scores.   
\end{itemize}

\subsection{Correlation between measures}
\label{ss:corr}

When comparing different recommender models, sometimes the ranking of the models (e.g., from the most to least fair score) is more concerning than the absolute values of the measures that we have seen in Tab.~\ref{tab:base-main}. Motivated by this, we analyse the measures' correlation in order to study the agreement of model rankings based on different measures of relevance and fairness. We compare the following things: 
1) the agreement between measures of the same type (relevance or fairness); 
2) the agreement between measures of different types;
3) the agreement between the original measures and our corrections to the original measures; 
4) the agreement between measures across different datasets. 
By performing this analysis, we also gain insights into how measures that capture different fairness concepts (dis)agree with one another. 

We use Kendall's $\tau$ between measures to compute ranking agreement. Fig. \ref{fig:corr-lastfm}--\ref{fig:corr-ml1m} show the Kendall's $\tau$ values between relevance measures and fairness measures for Lastfm and Ml-1m (see App.~\ref{app:corr} for the other data\-sets). The computation is as follows: for each dataset, we rank the models based on the most relevant or most fair scores. 
We omit Ent\ori~as it produces NaN in Tab.~\ref{tab:base-main} and we also omit II-D as the scores for one dataset are the same across models. 
We compute the correlation significance and correct errors arising from multiple testings for a dataset, using the Benjamini-Hochberg (BH) procedure that is based on false discovery rate  \cite{Benjamini1995ControllingTesting}. Upon correction, some correlations are still significant; these are indicated by an asterisk ($^*$) in Fig.~\ref{fig:corr-lastfm}--\ref{fig:corr-ml1m}.\footnote{We also use two more conservative procedures separately to correct the errors: Bonferroni and Holm \cite{Holm1979AProcedure}. Upon correction we obtain no significant results for any tests across the six datasets.}

\begin{figure*}[tb]
\centering
    \includegraphics[width=\textwidth]{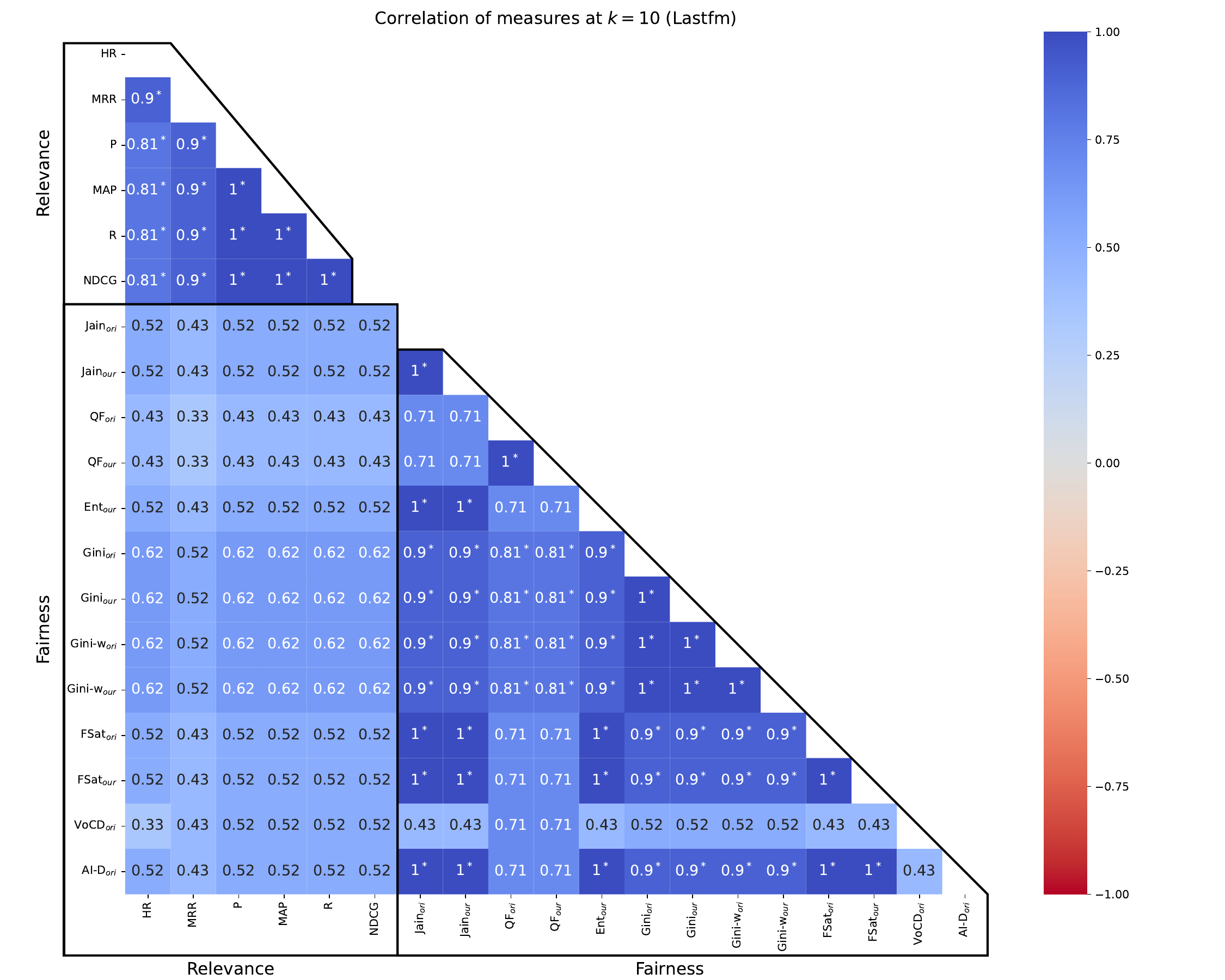}

\caption{Correlation (Kendall's $\tau$) between relevance and fairness measures for Lastfm. \explainsig}
\label{fig:corr-lastfm}
\end{figure*}

\begin{figure*}[tb]
\centering
    \includegraphics[width=\textwidth]{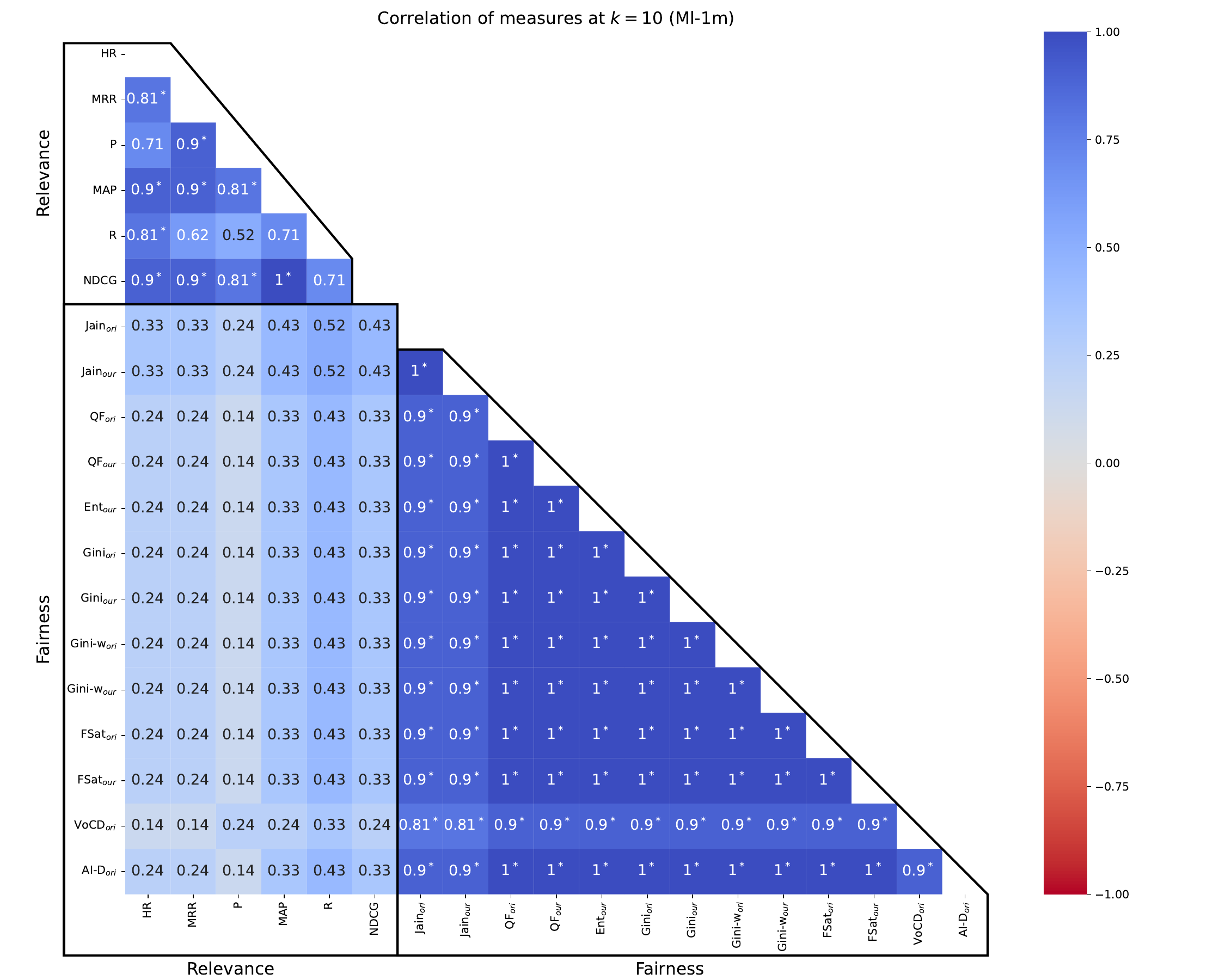}
    \caption{Correlation (Kendall's $\tau$) between relevance and fairness measures for Ml-1m. \explainsig}
\label{fig:corr-ml1m}
\end{figure*}

We first analyse the correlation among measures of the same type. The relevance measures are highly correlated with each other: $[0.81,1]$ for Lastfm and $[0.52,1]$ for Ml-1m, as expected \cite{Webber2008Precision-at-tenRedundant}. The fairness measures are also strongly correlated with each other: $[0.71,1]$ for Lastfm and $[0.81,1]$ for Ml-1m, except VoCD\ori~ for Lastfm, $[0.43,0.71]$. This is expected from how the measures treat items when computing fairness; VoCD\ori~ only considers items in the recommendation list, while the remaining fairness measures consider all items in the dataset. 
For Ml-1m, all the computed correlations between fairness measures are significant after applying the BH procedure. On the other hand, after applying the same procedure for Lastfm, neither of QF nor VoCD, has significant correlations with the rest of the fairness measures, except for QF and Gini/Gini-w. 
It is also reasonable for QF to not correlate significantly with most of the measures, as it is the only measure insensitive to the difference in the number of times an item is recommended. 

Interestingly, even though in Tab. \ref{tab:base-main} the scores of \up Jain, \up QF, \up Ent, and \up FSat occupy different parts of their range, these measures are highly correlated. The same goes for \down Gini and \down AI-D. This shows that even measures based on different concepts of fairness are still capable of producing similar rankings of models. 
Nevertheless, the absolute scores of the measures can be misinterpreted due to the measures occupying different parts of their range.

Our corrected fairness measures are always perfectly correlated with the original fairness measures (1 in both datasets). This is expected because our corrected versions are obtained by normalization, which does not change the relative order of the models.

Regarding the correlations between measures of different types, we see different trends between relevance and fairness measures for Lastfm and Ml-1m. In Lastfm, we see moderate correlations between fairness and relevance measures, $[0.33,0.62]$, but these are lower for Ml-1m $[0.14,0.52]$. These findings are expected as the fairness measures do not consider relevance. None of these correlations are significant after applying the BH procedure. Yet, some correlations between fairness and relevance measures are significant for Book-x and Amazon-is (App.~\ref{app:corr}). 

\subsection{Max/min achievable fairness}
\label{ss:maxmin}

The aim of this experiment is to quantify the extent to which the fairness measures can achieve their theoretical maximum and minimum fairness value (0 or 1) for different datasets and different $k \in \{1,2,3,5,10,15,20\}$. This relates to the \textbf{non-realisability} limitation (\textit{Causes 1--3}). We experiment solely with the fairness measures for which we have resolved this limitation, namely Jain, QF, Ent, Gini, Gini-w, and FSat. 
We primarily compare the original (uncorrected) versions of these measures against the corrected ones. 
We use two settings: repeatable recommendation, where items in the train/val split can be re-recommended to users following practical cases in industry settings; and nonrepeatable recommendation, 
which is the typical setting for evaluating recommender systems in academic work. 
For each setting, we devise two recommenders: MostFair and MostUnfair. Repeatable MostFair aims to recommend each item in the dataset the same amount of times. However, this is impossible if $n \nmid km$ and in this case some items are recommended $\floorkm$ times while others $\floorkm+1$ times. For Nonrepeatable MostFair, for each user we generate a list of \textit{recommendable items}, defined as items in $I$ that have not appeared in their corresponding train/val split. For one user at a time, we then recommend the least popular $k$ recommendable items based on the current recommendation lists of all users.
Repeatable MostUnfair recommends the same $k$ items to each user. Nonrepeatable MostUnfair does the same, but if any of those $k$ items is a non-recommendable item, the non-recommendable item is replaced by a recommendable item. The results of this experiment for Lastfm and Ml-1m are presented in Fig.~\ref{fig:mostfair_higher_better}--\ref{fig:mostunfair_lower_better} and for the remaining datasets in App.~\ref{app:maxmin}. We discuss the findings below.

\begin{figure*}[p]
    \centering
    \includegraphics[width=\textwidth]{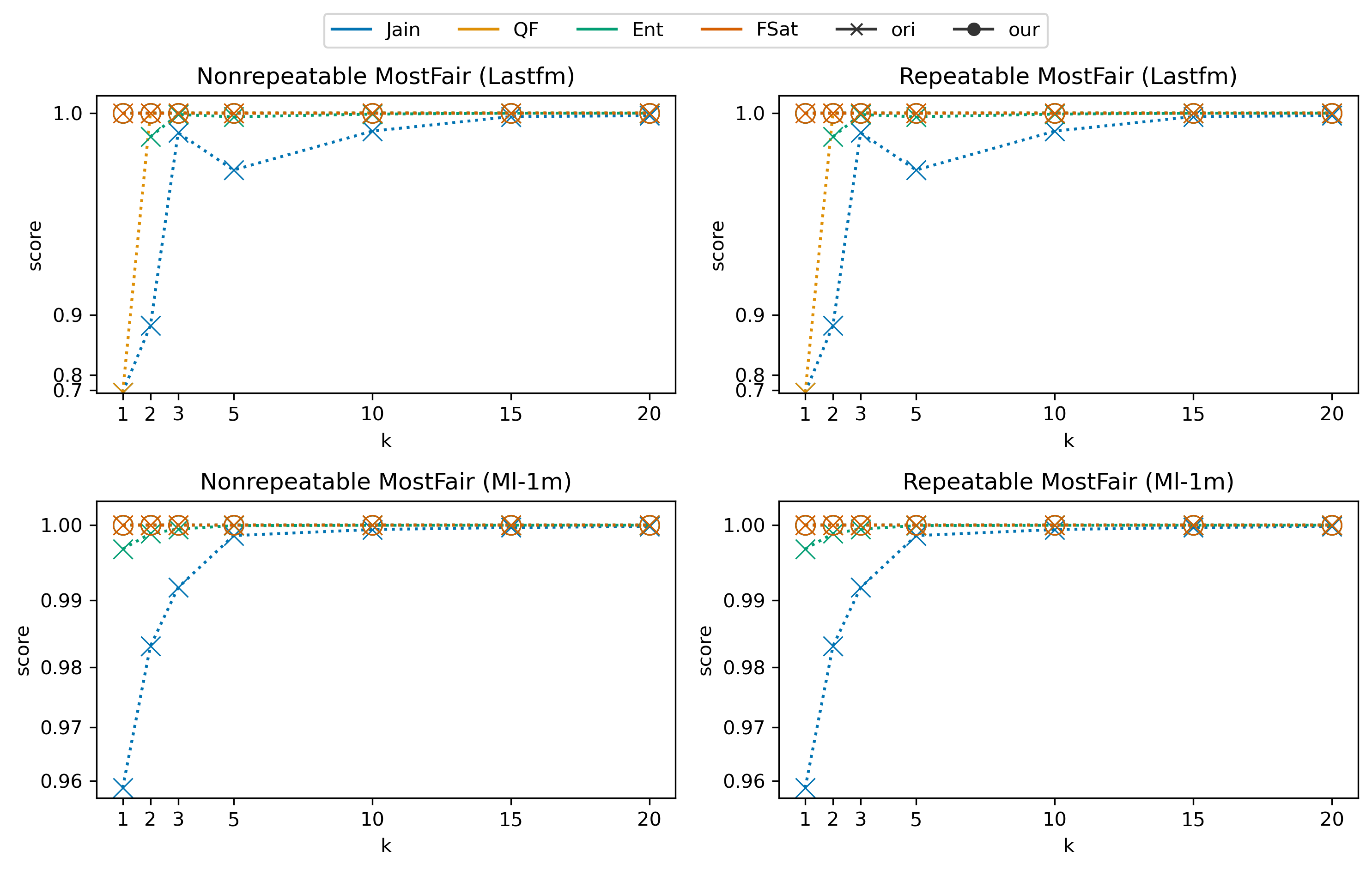}
    \caption{Most fair scores with varying $k$ for higher-is-fairer fairness measures for Lastfm and Ml-1m. All scores from the corrected measures (denoted by `our') measures overlap with each other.}
    \label{fig:mostfair_higher_better}
    \centering
    \includegraphics[width=\textwidth]{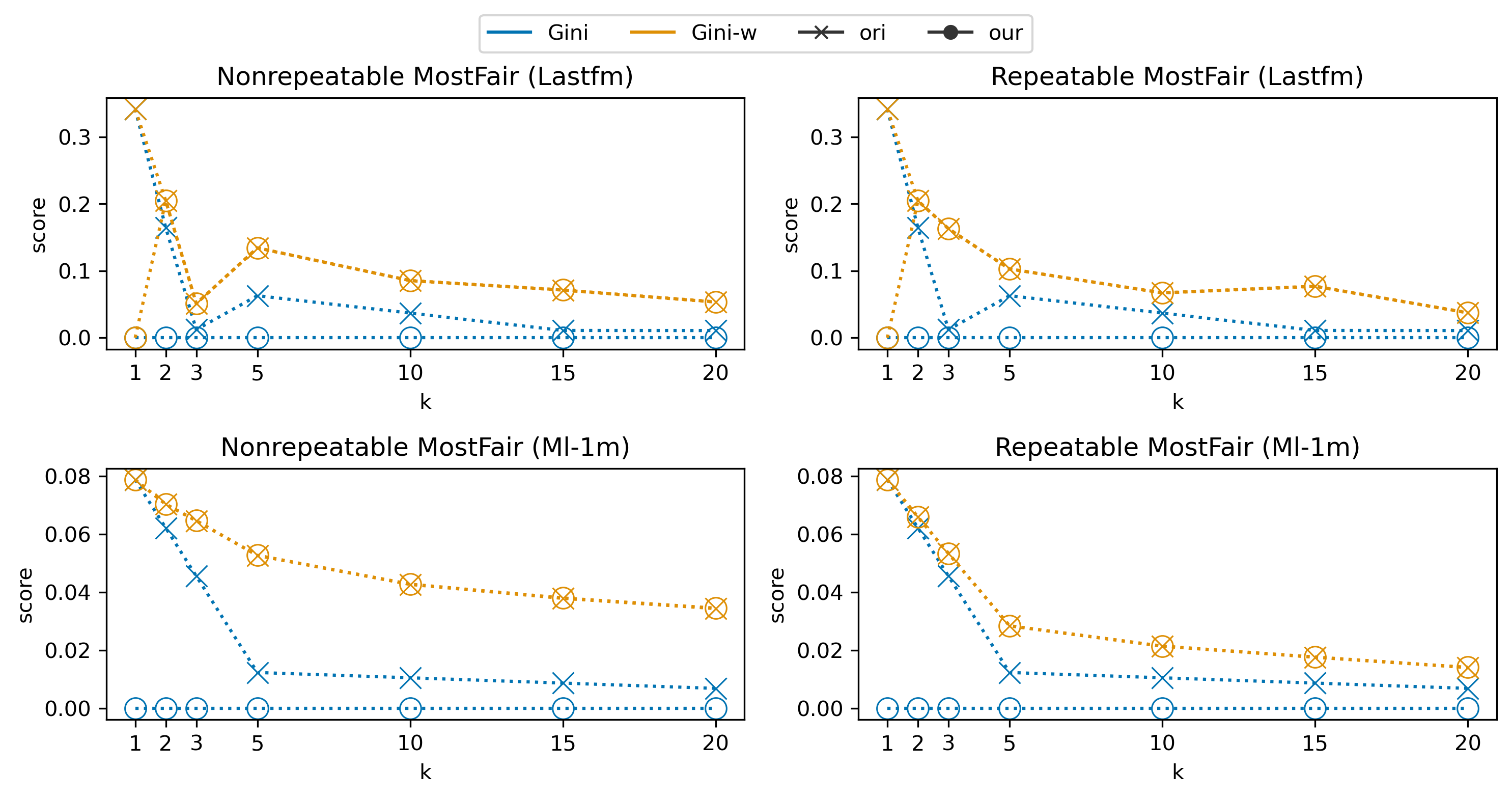}
    \caption{Most fair scores with varying $k$ for lower-is-fairer fairness measures for Lastfm and Ml-1m.}
    \label{fig:mostfair_lower_better}
\end{figure*}

\noindent\textbf{Theoretical maximum fairness}. For both nonrepeatable and repeatable settings, all original measures fail to achieve their theoretical maximum fairness values due to the \textbf{non-realisability} limitation (\textit{Causes 2--3}). The scores of the original measures get closer to the theoretical maximum fairness values as $k$ increases. 
However, these scores are still not equal to the theoretical maximum fairness value. In the original measures, having more slots due to larger $k$ does not guarantee that the scores would be higher as well. E.g., the score of \up Jain\ori~ in Fig. \ref{fig:mostfair_higher_better} is higher at $k=3$ compared to $k=5$, because of the changing values of $km \bmod n$ for different values of $k$. 
Our corrected versions always reach their theoretical maximum fairness values for both repeatability settings, except for Gini-w\our. This behaviour is due to the unresolvable non-realisability (\textit{Cause 4)} limitation for Gini-w. However, Gini-w\our~ can still reach the theoretical most fair value when $k=1$ for Lastfm (Fig.~\ref{fig:mostfair_lower_better}), while the original version fails to do so.

\begin{figure*}[p]
    \centering
    \includegraphics[width=0.97\textwidth]{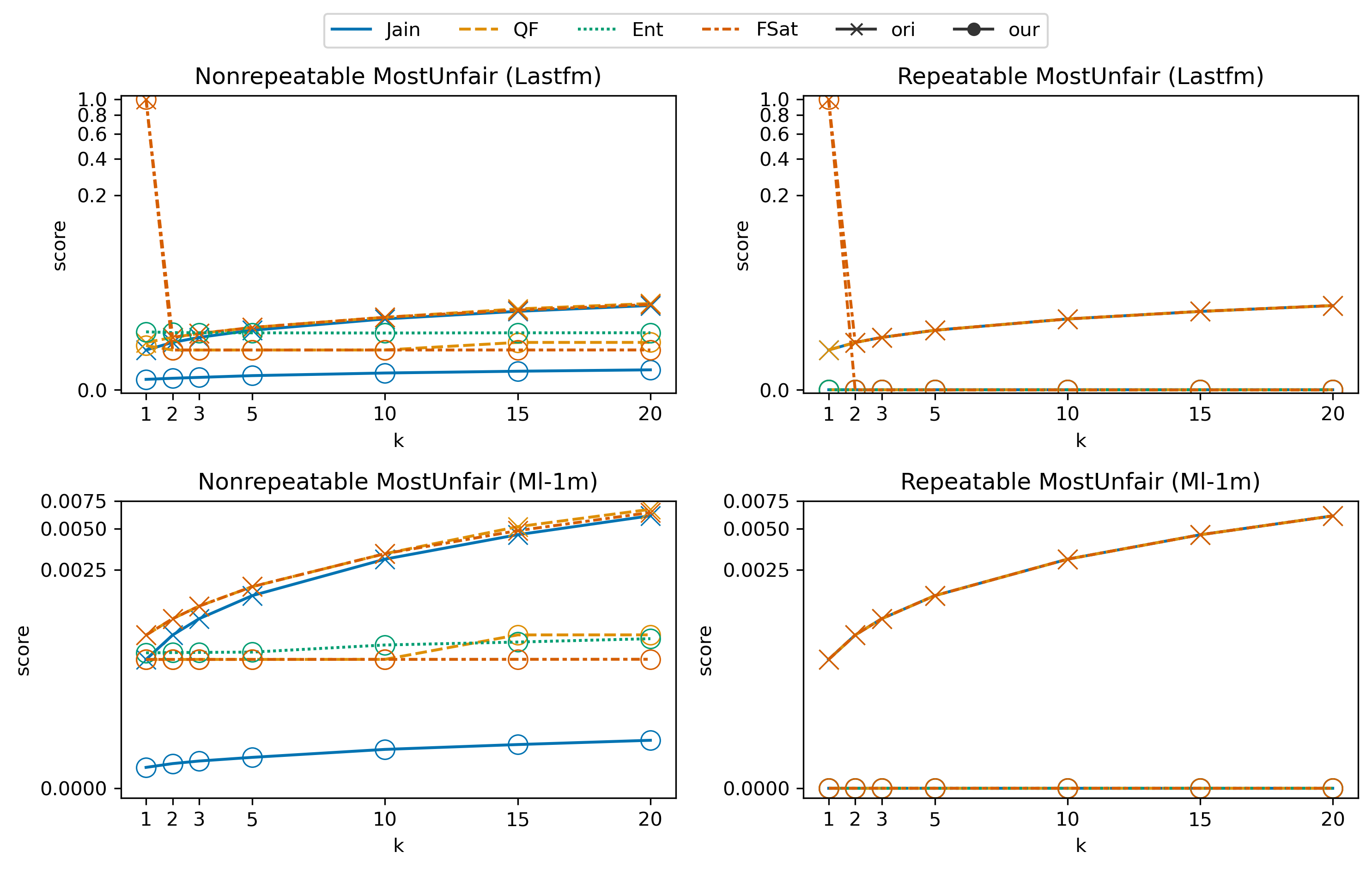}
    \caption{Most unfair scores with varying $k$ for higher-is-fairer fairness measures for Lastfm and Ml-1m. On Repeatable MostUnfair, all scores from the corrected measures (denoted by `our') overlap with each other for the shown values of $k>1$ for Lastfm and for all shown values of $k$ for Ml-1m.}
    \label{fig:mostunfair_higher_better}
    
    \centering
    \includegraphics[width=0.97\textwidth]{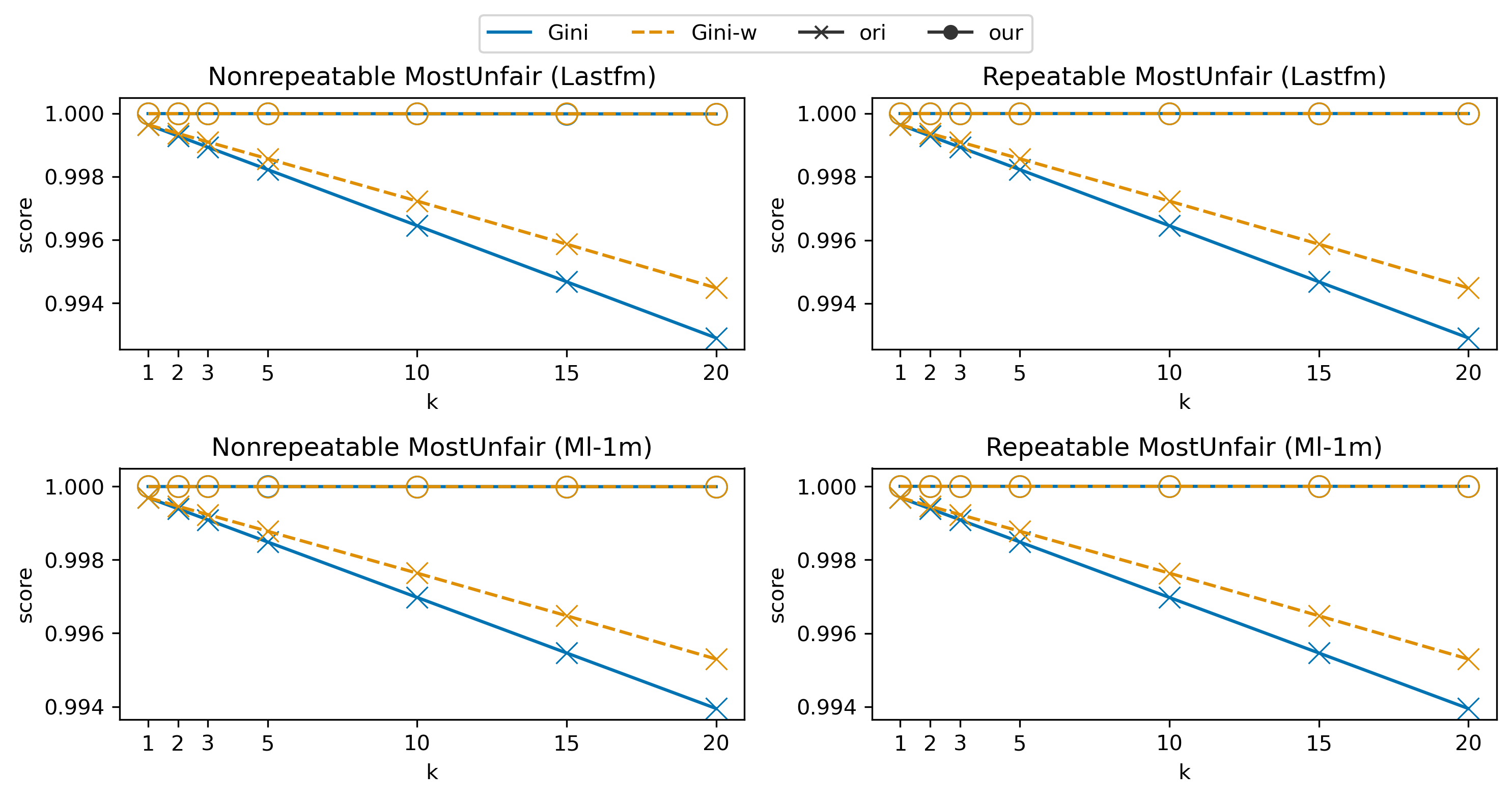}
    \caption{Most unfair scores with varying $k$ for lower-is-fairer fairness measures for Lastfm and Ml-1m. On Repeatable MostUnfair, all scores from the corrected measures (denoted by `our') overlap with each other for all shown values of $k$.}
    \label{fig:mostunfair_lower_better}
\end{figure*}

\noindent\textbf{Theoretical minimum fairness}. 
All original measures fail to reach the theoretical minimum values for all experimented values of $k$ for all settings due to \textbf{non-realisability} limitation (\textit{Cause 1}). This happens less frequently in our measures (Fig. \ref{fig:mostunfair_higher_better}). Our measures successfully achieve the theoretical minimum fair values under the repeatable settings, 
except for FSat in Lastfm (Fig. \ref{fig:mostunfair_higher_better}). This is because when $k=1$, there are not enough slots for the items and due to the \textbf{always-fair} limitation that is unresolvable ($\S$\ref{ss:nofix}), the score for \up FSat is 1, which is not the theoretical minimum fair value. Additionally, the scores of the original measures diverge from the theoretical minimum fairness value with larger $k$. This happens to our measures only in the nonrepeatable setting because the normalization is done by assuming that any item can be recommended to any users. This assumption is not true in the nonrepeatable settings because some items cannot be re-recommended to some users. 
However, differently from the original measures, the scores of our measures diverge less as $k$ increases. 

Overall, while the difference in scores between the original measures and our versions is not large, our measures quantify the actual most (un)fair situations more accurately than the original measures. The difference between the original measures and our versions in the most unfair recommendation under the repeatable setting is $\frac{k}{n}-0 = \frac{k}{n}$ for Jain, QF, Gini, and FSat; and $\log{k}$ for Ent (see Tab.~\ref{tab:bounds}). However, the difference would be greater in item-poor domains where $n$ is small and therefore possibly close to $k$, e.g. insurance \cite{BorgBruun2022LearningDomain}. 
The scores of the original measures also change with $k$. This makes their interpretation harder because the distance between the original scores and the theoretical maximum/minimum fair score also changes without an intuitive pattern for different values of $k$, as seen in the \up Jain\ori~ scores in Fig.~\ref{fig:mostfair_higher_better} which can increase or decrease as $k$ increases. Furthermore, the original measures suffer particularly for low $k$ values, which are the most important rank positions in real-life \rs. The scores of our measures rarely change with different values of $k$.

\subsection{Sliding window: relevance and fairness at different rank positions}
\label{ss:sliding}

This experiment studies how relevance and fairness scores of all measures vary at decreasing rank positions. The experiment aims to observe 1) the change in relevance scores, if any, as items should ideally be placed in the ranks according to decreasing order of true relevance; and 2) whether and how the fairness scores change across different rank positions. Due to bias in recommenders, popular items tend to be given more exposure. Thus, we expect the relevance scores to decrease and the fairness scores to become more fair at decreasing rank positions.
We study how the above changes may generally differ between relevance measures and fairness measures, as well as between different fairness measures, including the ones with different fairness notions. 

We conduct this experiment as follows. We use the runs from the BPR model, which is the best in our experiments. Given one run, we compute the measures for different sliding windows of rank positions in rankings 1--5, 2--6, and so on until 5--9. 
We reorder the recommended items such that items that were previously recommended at the top positions are now at the bottom positions when we change the window according to decreasing rank. 
The results for Lastfm and Ml-1m are presented in Fig. \ref{fig:sliding-1} and for the rest of the datasets in App. \ref{app:sliding}. 

\begin{figure*}
    \centering
    \includegraphics[width=\textwidth]{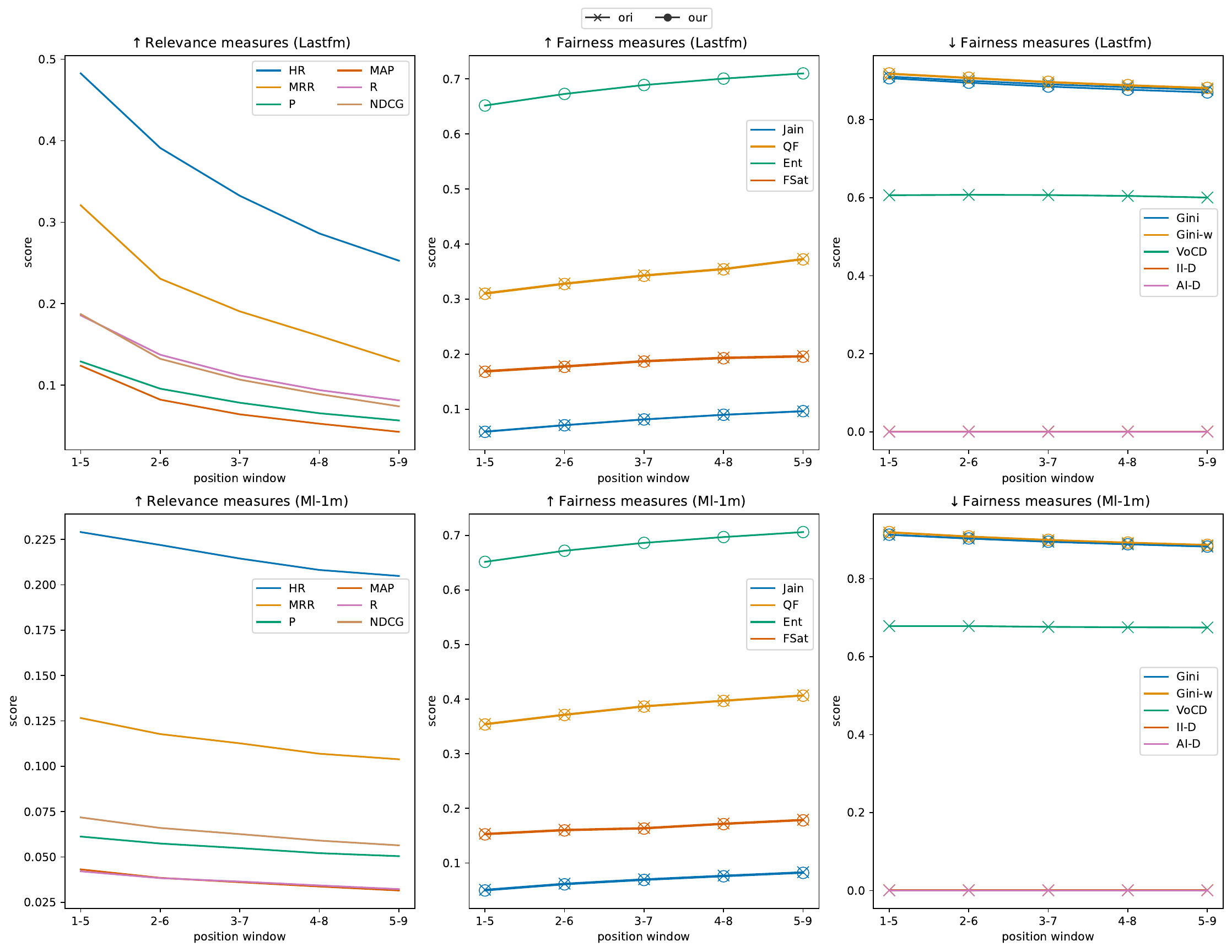}
    \caption{Sliding window evaluation for BPR model, on Lastfm and Ml-1m. Each row of figures is for one dataset, each column is for the different groups of measures (relevance, higher-is-better fairness, lower-is-better fairness measures). II-D and AI-D lines overlap.}
    \label{fig:sliding-1}
\end{figure*}

The following observations from Fig.~\ref{fig:sliding-1} apply to both the original fairness measures and our corrected versions of these measures unless otherwise stated. All relevance scores decrease as rank decreases. 
The drop of relevance scores for Ml-1m ($[0.04,0.23] \rightarrow [0.03, 0.20]$) is less extreme than in Lastfm ($[0.12,0.48] \rightarrow [0.04, 0.25]$). This is partly because the test set of Lastfm has at most five relevant items per user, while on average, Ml-1m has many more. While relevance scores decrease, 
fairness measures show that fairness slightly increases down the rank, except for \down VoCD\ori. 
The range of higher-is-better fairness measures increases from $[0.06,0.65]\rightarrow[0.10,0.71]$ for Lastfm and $[0.05,0.65]\rightarrow[0.08,0.71]$ for Ml-1m. The range of \down Gini and \down Gini-w, decreases from $[0.91,0.92]\rightarrow[0.87,0.88]$ for Lastfm and $[0.91,0.92]\rightarrow[0.88,0.89]$ for Ml-1m. 
\down VoCD\ori~ seems invariant to changes in the position window ($0.61 \rightarrow 0.60$ for Lastfm and $0.68\rightarrow0.67$ for Ml-1m). This may be because VoCD\ori~is the only measure that considers fairness exclusively for recommended items, and the recommended items differ a little in terms of the number of times they are recommended as rank decreases.
\down AI-D has even smaller changes in scores as the values are already minuscule in the first place, while \down II-D is always constantly small for a dataset. 
The small values, compared to other measures, are due to these measures quantifying fairness using different concepts from other measures, i.e. comparing exposure to random exposure (also observed and explained in $\S$\ref{ss:performance}). 
The ranges of all fairness measures are roughly the same across datasets, but the range of relevance measures varies across datasets. This also holds for the datasets in App.~\ref{app:sliding}. This may be due to the distribution of the recommended items being similar across datasets, and the distribution of the number of relevant items differing across datasets, as explained above for Lastfm and Ml-1m.

Fairness measures are also somewhat invariant to changes in relevance. This is anticipated as the equations of fairness measures are independent of relevance values.

\subsection{Measure strictness and sensitivity through artificial insertion of items}
\label{ss:insert} 
We have observed in $\S$\ref*{ss:performance} that different fairness measures vary in their strictness of quantifying fairness (e.g., some measures give scores close to the most fair values, and the opposite for others). It is however unknown how sensitive fairness measures are, given the change of the number of times an item is exposed in the recommendation list across all users. Therefore, the goal of this experiment is to study the strictness and sensitivity of the measures, and compare these aspects between measures of similar and different fairness concepts. Knowing the strictness and sensitivity of the measures matters as this affects how we interpret the scores of the measures. For example, if one uses a measure that tends to produce scores close to the most fair value, they must be aware that the score may not reflect fairness accurately.

As such, we devise an experiment to specifically study how the relevance measures, existing fairness measures and our corrected fairness measures scores change when we artificially control the fraction of jointly least exposed and relevant items in the recommendation list.  
We start with an initial recommendation list. We define a \textit{least exposed (LE) item} as an item in the dataset with the least exposure, based on the current recommendation list.\footnote{This fairness concept is closely tied to all measures in this work, except for VoCD which concerns only items in the recommendation list, as opposed to in the dataset (Tab.~\ref{tab:situations}).} An LE item in this experiment is therefore an item that has not appeared in the current recommendation list. We define a \textit{relevant item} as per the labels of relevance. 

From the initial recommendation list, we insert jointly LE and relevant items, one item at a time. 
We create a synthetic dataset with $m=1000$ users and $n=10000$ items. The number of items is exactly the number of recommendation slots $km$ for a cut-off $k=10$. 
We artificially generate a ranking of top $k$ as follows. The artificial insertion of jointly LE and relevant items begins with the recommendation of the same 10 items $i_1, i_2, \dots, i_{10}$ to all users. These items are irrelevant to each user except $u_1$, as we keep the recommendation list for $u_1$ the same throughout the experiment. This is because we want to keep the number of items exactly $km$ where theoretically each item could be recommended exactly once and if we have to completely replace all $m$ users' recommendation lists, we would need to have more than $km$ items. We expect the relevance measures to give scores close to zero on this initial recommendation list as only $u_1$ has relevant items. We expect the fairness measures to give scores that are equal to or close to the theoretical most unfair scores.\footnote{\label{fn:close_to}We say ``close to'' due to the \textbf{non-realisability} (\textit{Cause 4}) limitation in some measures.}

Let $P$ be the fraction of items in the $k$ that are artificially inserted by us. We vary from $P=0$, the original recommendation where we have not inserted any items artificially, to $P=1$ where all items in the $k$ are jointly LE and relevant items that are artificially inserted by us. We increase $P$ in steps of $\nicefrac{1}{k}$. 
From the bottom of a user's recommendation list, we replace one item at a time with a known jointly LE and relevant item, until we end with a recommendation list of different $km$ items across all users, that are all relevant only to the user to whom that item is recommended. 
At the end of the insertion process, each user is recommended exactly 10 relevant items, and those items are also fair w.r.t.~the entire recommendation list for all users, considering all items in the dataset; item fairness is not defined w.r.t.~a specific user. 
We expect the relevance measures to give scores of 1 on the final recommendation list and fairness measures to give scores that are (close to) the fairest scores.\footnoteref{fn:close_to}

\begin{figure}
    \centering
    \includegraphics[width=\textwidth]{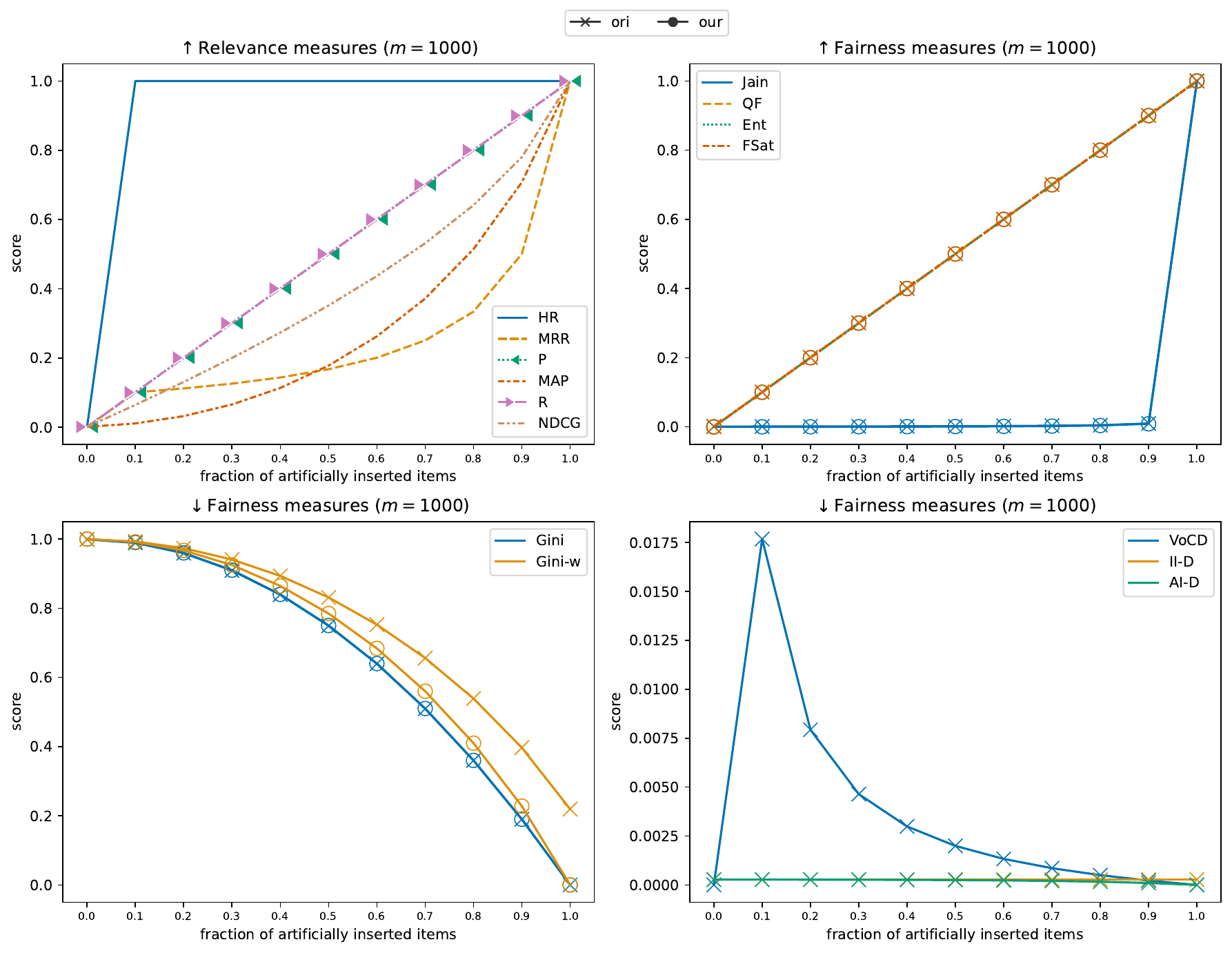}
    \caption{Results for jointly LE and relevant item insertion. All measures are at $k=10$. QF\ori~and FSat\ori~overlap. QF\our, FSat\our, and Ent\our~also overlap.}
    \label{fig:artificial-fair}
\end{figure}

The results of this experiment are presented in Fig. \ref{fig:artificial-fair}. We see that all relevance measures increase as we add more relevant items.\footnote{The relevance measures do not start from 0 when $P=0$, as there is one user with ten relevant items.} 
The following observations apply to both the original fairness measures and to our corrected versions of these measures, unless otherwise specified. All fairness measures, except VoCD\ori~ and II-D\ori, indicate more fairness as we increase $P$, but with varying sensitivity, explained next. \up Jain is one of the strictest fairness measures. Even when the proportion of LE items is 0.9 (item $i_1$ is recommended to all users, but the rest of the recommendation lists are filled with different items), the \up Jain score is still close to 0, which translates to unfair while \up QF, \up Ent, and \up FSat are 0.9, which is close to the fairest score of 1. 
The scores of QF\ori~are exactly the same as FSat\ori, because all items in the recommendation list are recommended once, which is also the maximin share (defined in $\S$\ref{ss:onlyfair}). 
\up QF\our, \up Ent\our, and \up FSat\our~also give identical scores. This is expected as the increase in the scores is constant and proportional to the fraction of artificially inserted LE items, yet this is interesting as these three measures are based on three different fairness notions (QF being insensitive to the number of times an item is recommended, and FSat being based on maximin-shared fairness). 
 
Meanwhile, the increase of fairness in \down Gini and \down Gini-w follows a non-linear trend, with \down Gini-w being stricter than \down Gini. The non-linear trend is also expected as Gini and Gini-w are based on the Lorenz curve, a graphical representation of the cumulative proportion of exposure to the cumulative proportion of items. We also see that Gini-w\our~ is able to reach the theoretical most fair when the entire recommendation list consists of artificially inserted LE items, while Gini-w\ori~ fails. 
\down VoCD\ori~ is insensitive to the insertion of items as it only considers fairness for recommended items. The number of times these items are recommended across all users does not differ much in this set-up, therefore \down VoCD\ori~ returns scores that are close to the fairest.
Most notably, \down II-D\ori~ and \down AI-D\ori~ are very close to 0 (on the scale of $10^{-3}$ or even smaller) even when the same $k$ items are recommended to all users. \down II-D \ori~ remains constant, while \down AI-D\ori~ is rather insensitive to the addition of LE and relevant items. The small scores are due to the measures quantifying fairness according to the closeness of item exposure with random exposure, while other measures have no such comparisons. Therefore, for these measures, the scales are not very meaningful, even though for AI-D, the scores still indicate improvement as we insert more LE items. 
    
We see similar trends with $m \in \{100, 500\}$, but the change of the scores is most stable with $m=1000$. As we increase the number of users (and items), the range of VoCD, II-D, and AI-D scores also becomes more compressed. In contrast, the range of the other measures remains similar. We also observe a similar but opposite trend of results when we artificially insert known irrelevant and multiple copies of items already in the recommendation list. Both of these results are in App.~\ref{app:insert}.

Overall, the artificial insertion experiment indicates that several measures respond linearly to the insertion of LE items 
i.e., \up QF, \up FSat, and \up Ent, while the rest do so non-linearly. 
This can affect the interpretation of these scores, as we observe that it is generally harder to achieve a high fairness score in some measures. In some other measures, it is also easier to improve fairness when starting from a relatively fair situation, but much harder when starting from a completely unfair situation. 

\section{Related work}\label{s:prevwork}

Prior work \cite{Do2021Two-sidedDominance, Wang2022ProvidingSystems,Zhu2020FARM:APPs,Wu2022JointRecommendation,Patro2020FairRec:Platforms,Mansoury2020FairMatch:Systems,Mansoury2021ASystems} proposes exposure-based individual item fairness measures but does not provide a comprehensive analysis of the limitations of the measures. Our work differs from this because we extensively analyse individual item fairness measures specifically for \rs, identify novel and previously-known limitations in them, and address these limitations. Meanwhile, several other work uses individual item fairness measures that also takes into account the relevance of the item to users  
\cite{Morik2020ControllingLearning-to-Rank, Borges2019EnhancingAutoencoders,Saito2022FairRanking,Wu2022JointRecommendation,Zhu2021FairnessSystems}. The investigation of these measures (of fairness and relevance) is reserved for our future work.

\citeauthor{Amigo2023ASystems}~\cite{Amigo2023ASystems} overview fairness measures in \rs~ and characterise them according to five dimensions. 
They focus on generalising the measures into  broad categories and studying the relationship between multi-stakeholder fairness, whereas we analyse each individual item fairness measure both theoretically and empirically. 
We also focus on the relationship among individual item fairness measures. 

\citeauthor{Raj2022MeasuringResults}~\cite{Raj2022MeasuringResults} analyse fairness measures for provider-side group fairness in ranked outputs. They include one measure for individual item fairness, II-D (referred to as EED in~\cite{Raj2022MeasuringResults}, which was originally proposed by \cite{Diaz2020EvaluatingExposure}), 
but only analyse it as a group fairness measure. 
They list some questions to assess the design of fairness measures, which we exploit in this work, e.g., the examination function used (Tab. \ref{tab:exp-weigh}), the ideal/fair criteria (Tab. \ref{tab:situations}), and if the measure incorporates relevance ($\S$\ref{ss:onlyfair}). 
Both our work and theirs identify the limitations in the measures e.g., edge cases where zero values cause undefinedness in the measure computation ($\S$\ref{ss:undefinedness}). 
They propose to use a small constant to avoid computing $\log{0}$ in group fairness metrics, but we do not use the same approach as using a small constant can introduce noise in the measure computation. 

\citeauthor{Majumder2021FairFairness}~\cite{Majumder2021FairFairness} examine classification measures for both individual and group fairness through empirical analysis and cluster the measures based on correlation values. They analyse the correlation among fairness measures but do not investigate the 
limitations of those measures. They find disagreements between the measures when labelling a model as fair or unfair. We did not do this mapping, as this may lead to loss of valuable information regarding the range and the actual values of the measures. 
However, we show that disagreements also exist between several individual item fairness measures in \rs.

All previous work finds that several fairness measures 
are highly correlated, and are insensitive to changes in data  
\cite{Amigo2023ASystems,Majumder2021FairFairness,Raj2022MeasuringResults}. Our analysis in $\S$\ref{s:exp} confirms these findings in a different experimental set-up and sheds light onto additional limitations which have not been reported or corrected previously. 
Next, we provide practical guidelines for choosing among individual item fairness measures.

\section{Discussion}
\label{s:discussion}
\subsection{Summary of theoretical corrections}
In this paper, we critically analyse individual item fairness measures in \rs~w.r.t.~their limitations. We point out a total of five theoretical limitations in the measures, and identify that each measure suffers from two or more limitations. Some limitations are due to intentional design choices of the measure, while the remaining limitations go against some pre-defined desirable properties of (fairness) evaluation measures. 

We posit that evaluation measures of fairness should have two important utilities: 
1) to assess systems/models in isolation (i.e., evaluating `how fair' a single system is, with one endpoint being the most unfair and the other being the most fair); 
2) to compare different \rs~and make a decision about whether and to what extent system A is more fair than system B, based on the measure.\footnote{Note that utility 1) is for assessment purposes, not a goal for model development; it may not be possible for relevant recommendations to be maximally fair at the same time.} 
A measure should ideally be usable for both use cases. At the present stage, none of the individual item fairness measures is suitable for the first use case, hence the need to modify the current measures for it.\footnote{This is unlike relevance measures, where there is still room for choice, as several measures have reachable endpoints (e.g., NDCG, RR@$k$) \cite{Moffat2013SevenMetrics}.} 

Note that having limitation(s) does not mean that the measures are completely unusable. Some measures can still be used despite having limitations, as long as one is aware of these limitations. For instance, in the case of measures that empirically cannot reach the endpoints of $[0,1]$, it is still possible to use the measure to compare fairness between two or more systems. 
Yet, evaluating fairness for a single system using such a measure is a challenge, as having a score of, for example, 0.6 does not always mean that there is still 40\% room for improvement. This point should be kept in mind, especially given the common practice of interpreting a single score in comparison to the known range of that measure. 

We also provide theoretical solutions to address the three resolvable limitations, and we argue why the remaining limitations cannot be resolved. The first set of solutions guarantees that the measures range in a bounded interval, e.g., $[0, 1]$, where both the theoretical minimum and maximum scores are achievable, with one endpoint corresponding to the most unfair recommendation list, and the other to the fairest recommendation list. This set of solutions considers the number of recommendation slots and the number of items in the dataset, and at the same time assures that the measures are well-defined for both common and edge cases.
Our second solution ensures that the affected measure is sensitive to the change of exposure received by an item, thereby fulfilling a desired property of fairness measure. 

\subsection{Summary of empirical findings} 
Extensive empirical experiments were conducted to compute relevance and fairness scores for both the original measures and for our corrected versions of these measures. 
The experiments utilised six datasets and seven recommendation models, including state-of-the-art models and well-established baseline models. 
Even though the models are all trained to optimise for relevance, we discover that the fairest model is not necessarily the worst in terms of relevance scores. This was unexpected as the fairness measures used in this work are detached from relevance. However, we also see the common observation where several models have higher relevance scores, but exhibit lower fairness. 

Our results empirically show that relevance measures and fairness measures have different ranges, which makes the interpretation of the fairness measures difficult. Further, the range of fairness measures is incomparable between different measures, and for some measures, this range is also not lower/upper-bounded empirically. Other noticeable observations include some fairness measures that tend to score much lower/higher compared to other measures, as well as uncorrected measures that are incomputable or produce constant values, given any recommendation list based on the same dataset. While the actual scores of the measures may differ, we found that most fairness measures have a high and significant agreement in ranking the recommenders from the most to least fair. The strong agreement is observed between the corrected and uncorrected measures, and even between some measures that are based on different fairness concepts.
Altogether, our results show that:
\begin{itemize}
    \item Despite limitations in quantifying the extent of fairness, the measures agreed in the ordering of models according to their individual item fairness. 
    \item Some original measures do not reflect the absolute quantity or differences in fairness.
    \item The corrected measures are required to reliably use the measures for scenarios that may contain commonly occurring and edge cases.
\end{itemize}

\subsection{Guidelines of the appropriate use of the fairness measures}
Next, we summarise guidelines on using these fairness measures based on the above theoretical and empirical findings. 

\noindent\textbf{Use original fairness measures only to evaluate relative fairness}. 
\textit{Relative fairness} refers to comparing the relative ordering of fairness scores. 
The original fairness measures suffer from theoretical limitations that limit their usage in settings where data distributions or recommendation scenarios do not fulfil the theoretical premises of the original measures. Moreover, the original measures may be more difficult to interpret as their range and scaling do not always match the intuitive expectations of being between 0 and 1. While the original measures as proposed outside recommendation can be used as their ranges are known and can be easily interpreted, we advise using them to evaluate only relative fairness. 

\noindent\textbf{Use our corrected fairness measures to evaluate absolute fairness}.
\textit{Absolute fairness} refers to measuring how close a model's recommendation is to the most (un)fair recommendation scenario. To evaluate absolute fairness, we recommend using our corrected measures for Jain, QF, Ent, Gini, and FSat. Our fairness measures are always perfectly correlated with the original measures, thus providing results that align with the original measures. Further, our measures are well-defined and have better interpretability w.r.t.~how the minimum/maximum scores correspond to the most unfair/fair recommendation scenario, as shown in $\S$\ref{ss:maxmin}. Our fairness measures are highly correlated with each other, but because they operate on different scales, one should not deduce that a model is (un)fair based on the absolute fairness measurement scores. 

Note that both FSat\ori~ and FSat\our~ should never be used when $km<n$ due to the unresolvable \textbf{always-fair} limitation, as the score will always be perfectly fair regardless of the recommendation. \down Gini-w\our~ should preferably be used when one has equal or more items than recommendation slots ($km\leq n$), as the measure works ideally in that setting: a score of 0 means the recommendation is perfectly fair, while a score of 1 means the unfairest possible recommendation. For the remaining cases, which commonly happens in many public recommendation datasets for any cut-off $k$ (Tab.~\ref{tab:dataset}), even if the most unfair recommendation entails a score of 1, the most fair recommendation is not mapped to a score of 0 in Gini-w\our. Yet, this is still better than Gini-w\ori~ which maps unrealistic scenarios to 0 and 1. For Ent, we recommend using our correction, since our correction avoids the \textbf{undefinedness} limitation, and would produce the same score as Ent\ori~ when all items are recommended.  

We discourage using the rest of the measures due to their tendency to have scores that are not representative of fairness, e.g. scale mismatch between II-D/AI-D and the rest of the measures. Additionally, II-D should not be used for single-round recommendations as the scores are always constant.

\section{Conclusions}

We have presented a novel investigation into the theoretical and empirical limitations of current evaluation measures of individual item fairness in recommender systems. 
We have further amended these measures to correct their limitations or have argued why some limitations are impossible to resolve. Extensive experiments on real-life and synthetic data reveal novel insights on how individual item fairness measures should and should not be used. 

In the present work, we solely concentrated on measures that quantify individual item fairness independently of recommendation performance. We reserve the analysis of fairness measures that are tied to relevance for our future work. Future work should investigate whether measures that aim to simultaneously quantify both recommendation performance, or relevance, and fairness suffer from similar limitations and empirical behaviours than the measures studied here. Other future work could further explore the relationship between individual item fairness and item group fairness \cite{Wu2022JointRecommendation}, or fairness between users and items \cite{Amigo2023ASystems}. 
All measures studied here also assume that exposure is the key factor for fairness, while there might be other factors to consider for fairness, e.g. speed of the recommendation or the wait time from when an item is introduced to a system until it gets recommended. 

The empirical studies could also be extended to account for the behaviour and performance of the measures using additional datasets. However, while we cannot exclude the possibility that the experiments on other settings, domains, or datasets could lead to new insights, it is unlikely that they would affect our conclusions and guidelines on the appropriate use of the studied measures. Future work could also utilize our corrected measures to optimize recommendation models for fairness to reveal whether the corrected measures could improve recommendation models as opposed to only measuring the performance of existing models. 

\section*{Code Availability}
Our source code (in Python 3.10) is publicly available on \url{https://github.com/theresiavr/individual-item-fairness-measures-recsys} and usable under the MIT License, with proper attribution to this work and possibly other related work. Other restrictions regarding the usability of the code may apply for the RecBole library \cite{Zhao2021RecBole:Algorithms}.

%% The acknowledgments section is defined using the "acks" environment
%% (and NOT an unnumbered section). This ensures the proper
%% identification of the section in the article metadata, and the
%% consistent spelling of the heading.
\begin{acks}
The work is supported by the Algorithms, Data, and Democracy project (ADD-project), funded by Villum Foundation and Velux Foundation, as well as the Academy of Finland. 
We also thank the anonymous reviewers who have provided insightful comments and suggestions to improve earlier versions of the manuscript.
\end{acks}

%%
%% Print the bibliography
\printbibliography
%%
%% If your work has an appendix, this is the place to put it.
\appendix

\section{Mathematical Workings for Bounds in Tab. \ref{tab:bounds}}
\label{app:boundsproof}

We provide the derivation of the obtained min/max achievable value of Jain, QF, Ent, Gini, FSat, and VoCD. The min/max achievable values are obtained from the most (un)fair recommendation scenarios described in $\S$\ref{ss:non_realisability}.

\subsection{Jain's Index}
The most unfair case for \up Jain produces Jain$_{\min}$ and the most fair case for \up Jain produces Jain$_{\max}$.
\begin{equation*}
\text{Jain}_{\min} = \frac{(km)^2}{n \sum\limits_{i\in I} \left[\sum\limits_{u\in U}
1_{R_{u}^{k}}(i)\right]^2} =  \frac{(km)^2}{n(km^2)} = \frac{k}{n}
\end{equation*}
\begin{align*}
\text{Jain}_{\max} 
&= \frac{(km)^2}{n \sum\limits_{i\in I} \left[\sum\limits_{u\in U}
1_{R_{u}^{k}}(i)\right]^2} 
=  \frac{(km)^2}{
n\left((n-km \bmod n)\floorkm^2 + (km \bmod n)\left(\floorkm+1\right)^2\right)} \\
&= \frac{(km)^2}{
n\left(n\floorkm^2-(km\bmod{n})\floorkm^2 + (km\bmod{n})\left(\floorkm^2+2\floorkm+1\right)\right)
} \\
&= \frac{(km)^2}{
         n\left(n\floorkm^2 
         + (km\bmod{n})\left(2 \floorkm+1\right) 
         \right)}   
\end{align*}

\subsection{Qualification Fairness}
The most unfair case for \up QF produces QF$_{\min}$ and the most fair case for \up QF produces QF$_{\max}$.
\begin{align*}
    \text{QF}_{\min} 
    = \frac{(k\cdot1)^2}{n(k \cdot 1^2)} = \frac{k}{n}
\end{align*}
When there are not enough recommendation slots for all items, $km < n$:
\begin{align*}
    \text{QF}_{\max}
    = \frac{(km\cdot 1)^2}{nkm(1^2)} = \frac{km}{n}
\end{align*}
When $km \geq n$, all items can be recommended, hence  QF$_{\max}=\frac{n}{n}=1$.

\subsection{Entropy}
The most unfair case for \up Ent produces Ent$_{\min}$ and the most fair case for \up Ent produces Ent$_{\max}$. The first term in Ent$_{\max}$ comes from $n-km\bmod{n}$ that are each recommended $\floorkm$ times and the second comes from $km\bmod{n}$ items that are each recommended $\floorkm + 1$ times.
\begin{equation*}
\text{Ent}_{\min} 
     = -k \cdot \frac{1}{k} \log{\frac{1}{k}} 
     = - \log{k^{-1}}          
     = \log{k}
\end{equation*}
\begin{equation*}
\text{Ent}_{\max} = 
-
(n - km \bmod n)\left(\frac{\floorkm}{km}\log{\frac{\floorkm}{km}}\right)
- (km \bmod n)\left(\frac{\floorkm+1}{km}\log{\frac{\floorkm+1}{km}}\right) 
\end{equation*}

\subsection{Gini Index}
The most unfair case for \down Gini produces Gini$_{\max}$ and the most fair case for \down Gini produces Gini$_{\min}$.

\begin{align*}
    \text{Gini}_{\max}  
        =& \frac{\sum\limits_{j=n-k+1}^n(2j-n-1)m}{nk(m)}
        = \frac{\sum\limits_{n-k+1}^n{2j} - n\sum\limits_{n-k+1}^n{1} - \sum\limits_{n-k+1}^n{1}}{nk} \\
        =& \frac{2\sum\limits_{n-k+1}^n{j} - n(n-(n-k+1)+1) - (n-(n-k+1)+1)}{nk} \\
        =& \frac{2\frac{(n-k+1+n)(n-(n-k+1)+1)}{2}-nk-k}{nk} \\
        =& \frac{(2n-k+1)(k)-nk-k}{nk}
        = \frac{(k)(2n-km+1-n-1)}{nk} \\
        =& \frac{n-k}{n} = 1- \frac{k}{n}
\end{align*}

To derive Gini$_{\min}$, we use the pairwise difference formula of Gini:

\begin{equation*}
    \text{Gini} = \frac{\sum\limits_{(i,i')}{
    \left|
    \sum\limits_{u\in U}{1_{R_{u}^{k}}(i)}
    -\sum\limits_{u\in U}{1_{R_{u}^{k}}(i')}
    \right|
    }}
    {2n^2{\bar{x}}}
\end{equation*}

where $\bar{x}$ is the average number of times an item is recommended, for the most fair case, calculated as follows:

\begin{equation*}
\bar{x} = \frac{(n - km \bmod n)\floorkm + (km \bmod n)(\floorkm+1)}{n}
= \frac{n\floorkm +km \bmod n }{n}
\end{equation*}

We simplify the numerator and denominator separately, for clarity in the proof. 
To simplify the numerator, in the most fair case, there are $2(n-km \bmod n)(km \bmod n)$ pairs of items with an absolute difference of $\left(\floorkm +1\right) - \floorkm=1$, which is the difference of the number of times the items are recommended. The rest of the pairs have 0 differences. Hence, the numerator is $2(n-km \bmod n)(km \bmod n)$. Putting everything together:
\begin{align*}
    \text{Gini}_{\min} 
    &=  \frac{2(n-km \bmod n)(km \bmod n)}{
    2n^2\left(\frac{n\floorkm +km \bmod n }{n}\right)
    }
    = \frac{(n-km \bmod n)(km \bmod n)}{
    n\left(n\floorkm +km \bmod n\right)
    } 
    = \frac{(n-km \bmod n)(km \bmod n)}{
    n\left(n\floorkm +km -n\floorkm\right)
    } \\
    &= \frac{(n-km \bmod n)(km \bmod n)}{
    kmn
    } 
\end{align*}

The most unfair case for \down Gini-w produces Gini-w$_{\max}$ and the most fair case for \down Gini produces Gini-w$_{\min}$. 

To obtain Gini-w$_{\max}$, we derive the numerator and denominator separately using Eq.~\ref{eq:gini-ori}. We first compute the numerator of Gini-w$_{\max}$, considering that the items with the least to the most exposure are as follows: the first $n-k$ items are with zero exposure (as they are not present in the top $k$), one item is exposed $m$ times at position $k$, one item is exposed $m$ times at position $k-1$, and so on, until one last item that is exposed $m$ times at the top of the recommendation list. The exposure received by those items respectively are $0, m\log_{k+1}{2}, m\log_{k}{2}, \dots, m\log_{2}{2}$. Therefore, the numerator of Gini-w$_{\max}$ can be written as $m \sum\limits_{\ell=1}^k{(n-2\ell+1) \log_{\ell+1}{2}}$. Meanwhile, the denominator of Gini-w$_{\max}$ is simply $n$ times the total exposure received by the items: $mn \sum\limits_{\ell=1}^k{\log_{\ell+1}{2}}$. Putting the numerator and denominator together:
\begin{equation*}
 \text{Gini-w}_{\max} = 
 \frac{
 m\sum\limits_{\ell=1}^k{(n-2\ell+1) \log_{\ell+1}{2}}}
    {mn\sum\limits_{\ell=1}^k{\log_{\ell+1}{2}}}
=  \frac{
 \sum\limits_{\ell=1}^k{(n-2\ell+1) \log_{\ell+1}{2}}}
    {n\sum\limits_{\ell=1}^k{\log_{\ell+1}{2}}}
\end{equation*}

To obtain Gini-w$_{\min}$, we also derive the numerator and denominator separately using Eq.~\ref{eq:gini-ori}. Note that for Gini-w$_{\min}$ we only consider cases where $km\leq n$ due to the unresolvable limitation of non-realisability, \textit{Cause 4} ($\S$\ref{ss:nofix}). First, we explain how to obtain the numerator. With the restriction of $km\leq n$, to make the recommendation the fairest, the $km$ items that are recommended must be unique, leaving $n-km$ items exposed. Thus, the items with the least to the most exposure are as follows: the first $n-km$ items receive zero exposure, the next $m$ items will be recommended once each at position $k$, another set of $m$ items each at position $k-1$, and so on until the last set of $m$ items that will each be recommended at the top of the recommendation list. The numerator of  Gini-w$_{\min}$ can then be calculated as follows:

\begin{align*}
    &\sum\limits_{j=(n-km)+1}^{n-km+m} (2j-n-1) \log_{k + 1}{2}
    +
    \sum\limits_{j=(n-km+m)+1}^{n-km+2m} (2j-n-1) \log_{k}{2}
    +
    \dots
    +
    \sum\limits_{j=(n-m)+1}^{n} (2j-n-1) \log_{2}{2} \\
    &=\sum\limits_{\ell=1}^k 
    \sum\limits_{j=n-\ell m+1}^{n- \ell m + m} (2j-n-1) \log_{\ell + 1}{2}
\end{align*}

As for the denominator, it is obtained the same way as in Gini-w$_{\max}$, resulting in  $mn \sum\limits_{\ell=1}^k{\log_{\ell+1}{2}}$ as the total exposure received by the items remains the same for the same cut-off $k$ and the number of user $m$. Putting the numerator and denominator together:

\begin{equation*}
 \text{Gini-w}_{\min} =
  \frac{
    \sum\limits_{\ell=1}^k 
    \sum\limits_{j=n-\ell m+1}^{n- \ell m + m} (2j-n-1) \log_{\ell + 1}{2}}
    {mn\sum\limits_{\ell=1}^k{\log_{\ell+1}{2}}}
\end{equation*}

\subsection{Fraction of Satisfied Items}
The most unfair case for \up FSat produces FSat$_{\min}$ and the most fair case for \up FSat produces FSat$_{\max}$. Note that $k \leq n$ thus $1 \geq \frac{k}{n} \Leftrightarrow m \geq \frac{km}{n}\geq\floorkm$.
\begin{align*}
    \text{FSat}_{\min} 
    =  \frac{1}{n}
    \left(k \cdot \delta\left(
    m \geq \floorkm
    \right)\right) 
    = \frac{k}{n}
\end{align*}
\begin{align*}
    \text{FSat}_{\max} 
    &=  \frac{1}{n}\left[ 
    (n-km \bmod n)\cdot \delta\left(\floorkm\geq\floorkm\right)
    +
    (km \bmod n)\cdot \delta\left(\floorkm+1\geq\floorkm\right)
    \right] \\
    &=\frac{1}{n}[ 
    (n-km \bmod n) + km \bmod n]
    =\frac{n}{n}
    =1
\end{align*}

\subsection{Proofs for the maximum value of VoCD}
\label{app:vocd}
First, we prove in Theorem~\ref{th:max-vocd} that  when there is only one pair of similar items, the maximum VoCD value can be obtained when the two items are recommended $1$ time and $m$ times each. We then use Theorem~\ref{th:max-vocd} to show that when there are more than one pair of similar items, the maximum VoCD value does not increase (Theorem~\ref{th:vocd-not-increase}).

\begin{theorem}
\label{th:max-vocd}
If there is only one pair of $(i,i')\in A$, $\text{VoCD}_{\max}$ is obtained when  $\sum\limits_{u\in U}1_{R_{u}^{k}}(i)=1$ and $\sum\limits_{u\in U}1_{R_{u}^{k}}(i')=m$
\end{theorem}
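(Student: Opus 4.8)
The plan is to exploit the collapse of $\text{VoCD}$ to a single summand when $|A|=1$ and then solve a one-variable optimisation. When $A=\{(i,i')\}$, Eq.~\eqref{eq:vocd-ori} reads $\text{VoCD}_{\text{ori}}=\max\bigl(CD(i,i')-\beta,0\bigr)$, and since $t\mapsto\max(t-\beta,0)$ is non-decreasing, maximising $\text{VoCD}$ over all recommendation lists for which $(i,i')$ is the unique $\alpha$-similar pair of recommended items is the same as maximising the single coverage disparity $CD(i,i')$. So the first step is this reduction.

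Next I would introduce $a=\sum_{u\in U}1_{R_{u}^{k}}(i)$ and $b=\sum_{u\in U}1_{R_{u}^{k}}(i')$ and record the only constraints they obey: both items are recommended (they lie in $R$ by membership in $A$), so $a,b\ge 1$; and each $R_u^{k}$ is a set over $m$ users, so $a,b\le m$. Taking $a\le b$ without loss of generality (the definition of $CD$ is symmetric), $CD(i,i')=\frac{b-a}{b}=1-\frac{a}{b}$, which decreases in the ratio $a/b$. Because $a\ge 1$ and $b\le m$ force $a/b\ge \frac{1}{m}$, we get $CD(i,i')\le 1-\frac{1}{m}=\frac{m-1}{m}$, with equality exactly at $a=1,\,b=m$; substituting back gives $\text{VoCD}\le\max\bigl(\frac{m-1}{m}-\beta,0\bigr)$ and pins the maximiser to the claimed counts.

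Finally, since the statement asserts the bound is attained, I would exhibit a recommendation realising $a=1,\,b=m$ with $A=\{(i,i')\}$: for $k\ge 2$, recommend $\{i,i'\}$ (padded arbitrarily) to one user and $i'$ together with distinct fresh items to each of the other $m-1$ users, and choose the item representation so that only $(i,i')$ is $\alpha$-similar --- admissible because $\alpha$ and the embeddings are free, which is precisely the item-representation freedom of $\S$\ref{ss:item-rep-dep}; this matches the $m=3$, $k=2$ instance in the main text. I expect the only real friction to be the $k=1$ edge case, where a single slot per user makes $b=m$ together with $a\ge 1$ infeasible, so one must either read ``obtained when'' as locating the extremal counts among feasible configurations or restrict attention to $k\ge 2$; I would state this explicitly rather than dwell on it, since the downstream corollary needs only the inequality $\text{VoCD}_{\max}\le\frac{m-1}{m}-\beta$.
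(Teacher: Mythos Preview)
Your proposal is correct and follows essentially the same route as the paper: both arguments reduce to showing that $CD(i,i')=1-\frac{a}{b}$ (with $a\le b$, $a\ge 1$, $b\le m$) is maximised at $a=1$, $b=m$. The paper phrases this as a contradiction (assuming an interior pair $1<x_1<x_2<m$ beats $\frac{m-1}{m}$ and deriving $m<\frac{x_2}{x_1}<m$), whereas you argue directly via monotonicity of $a/b$; your version is cleaner and also supplies three things the paper omits --- the explicit reduction from $\text{VoCD}$ to $CD$ via the monotone map $t\mapsto\max(t-\beta,0)$, a construction witnessing attainment, and the $k=1$ caveat --- but these are presentational refinements rather than a different idea.
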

\begin{proof}
We prove by contradiction: if an item $i$ is recommended, $1 \leq \sum\limits_{u\in U}1_{R_{u}^{k}}(i) \leq m$. 
Suppose $\exists x_1, x_2$ where $1<x_1<x_2<m$ such that 
$\frac{m-1}{m} < \frac{x_2-x_1}{x_2} \Leftrightarrow
\frac{-1}{m} < \frac{-x_1}{x_2} \Leftrightarrow
\frac{1}{m} >\frac{x_1}{x_2} \Leftrightarrow
m  < \frac{x_2}{x_1}$. 
However, $x_2 < m \Leftrightarrow \frac{x_2}{x_1} < \frac{m}{x_1} < m$, which contradicts the previous inequality, so it must be $x_1 = 1$ and $x_2 = m$.
\end{proof}
\begin{theorem}
\label{th:vocd-not-increase}
The max VoCD score does not increase with $|A|$.
\end{theorem}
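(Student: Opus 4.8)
The plan is to derive one universal upper bound on $\text{VoCD}$ that is valid for every recommendation and every set $A$, and then to recognise, via Theorem~\ref{th:max-vocd}, that this bound is already attained in the case $|A|=1$. Write $c(i):=\sum_{u\in U}1_{R_{u}^{k}}(i)$ for the coverage of item $i$. First I would note that every item occurring in some pair of $A$ is by definition a recommended item, so $1\le c(i)\le m$; hence for any $(i,i')\in A$,
\[
CD(i,i')=\frac{\bigl|c(i)-c(i')\bigr|}{\max\bigl(c(i),c(i')\bigr)}\le\frac{m-1}{m},
\]
with equality only when $\{c(i),c(i')\}=\{1,m\}$.

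Next I would handle the clipping and the averaging. Each summand obeys $\max\bigl(CD(i,i')-\beta,0\bigr)\le\max\bigl(\frac{m-1}{m}-\beta,0\bigr)$. If $\beta\ge\frac{m-1}{m}$ this upper bound is $0$, so $\text{VoCD}=0$ identically and there is nothing to prove; otherwise it equals $\frac{m-1}{m}-\beta$, and averaging over the $|A|$ pairs gives
\[
\text{VoCD}=\frac{1}{|A|}\sum_{(i,i')\in A}\max\bigl(CD(i,i')-\beta,0\bigr)\le\frac{m-1}{m}-\beta
\]
for every $A$. Since Theorem~\ref{th:max-vocd} (with its accompanying explicit construction) shows that $\frac{m-1}{m}-\beta$ equals $\text{VoCD}_{\max}$ when $|A|=1$ and is realisable there, it follows that for any $|A|>1$ the maximum of $\text{VoCD}$ is at most $\frac{m-1}{m}-\beta$, i.e.\ it does not exceed the $|A|=1$ maximum; enlarging $A$ cannot raise the attainable maximum.

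The argument is short because the mean structure of $\text{VoCD}$ does the work: no single pair can contribute more than the $|A|=1$ optimum, so neither can their average. The only points needing care are (i) justifying $c(i)\ge 1$ for items appearing in $A$ --- this is exactly where one uses that $A$ collects pairs of \emph{recommended} items rather than arbitrary item pairs --- and (ii) dispatching the degenerate regime $\beta\ge\frac{m-1}{m}$, which can occur only for small $m$. I would also point out that for $|A|>1$ the bound $\frac{m-1}{m}-\beta$ is in general \emph{not} attained (three mutually similar items cannot all be pairwise in the ratio $1{:}m$), so the statement is genuinely a non-increase, not a strict decrease; but since only the upper bound is required, characterising the equality cases is unnecessary.
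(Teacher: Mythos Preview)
Your proof is correct and follows essentially the same idea as the paper's: bound each $CD(i,i')$ by $\frac{m-1}{m}$ and conclude that the average (after subtracting $\beta$ and clipping) cannot exceed the $|A|=1$ optimum from Theorem~\ref{th:max-vocd}. Your presentation is in fact cleaner and more complete---the paper splits into disjoint-pair and overlapping-pair cases and only works out a specific three-item instance for the latter, whereas your uniform per-term bound covers all configurations at once; you also explicitly handle the degenerate regime $\beta\ge\frac{m-1}{m}$, which the paper does not address.
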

\begin{proof}
Case 1: for each item pair in $A$ that is disjoint from the other item pairs, e.g., $(i_1,i_2)$ and $(i_3,i_4)$, by Theorem~\ref{th:max-vocd}, the maximum score for each pair and hence the average score of those pairs, is still $\frac{m-1}{m} - \beta$. 
Case 2: suppose the pairs are not disjoint, e.g., $(i_1, i_3), (i_2, i_3)$, and $i_1, i_2, i_3$ are recommended $x_1, x_2, x_3$ times respectively, 
where $1\leq x_1\leq x_2 \leq x_3 \leq m$. We show that for Case 2, the maximum value does not increase.

Note that $x_3 \leq m \Leftrightarrow 
-\frac{1}{x_3} \leq -\frac{1}{m} \Leftrightarrow
-\frac{x_1}{2x_3} \leq -\frac{x_1}{2m} \leq -\frac{1}{2m}$ and $-\frac{x_2}{2x_3} \leq -\frac{1}{2m}$. Thus,
$\frac{1}{2}
    \left(
        \frac{x_3-x_1}{x_3}+\frac{x_3-x_2}{x_3}
    \right)
    = 1 - \frac{x_1}{2x_3}-\frac{x_2}{2x_3} \leq 1-\frac{1}{m}$.
\end{proof}

\section{Extended results of experiments}
\label{app:extend-result}

\subsection{Experimental set-up}
\label{app:extend-exp_set_up}

We first report the hyperparameter search space for each recommender (Tab.~\ref{tab:searchhyper}) and the optimal hyperparameters for each model and each dataset (Tab.~\ref{tab:besthyper}).

\begin{table}[tb]
    \centering
    \caption{Hyperparameter search space for each recommender}
    \label{tab:searchhyper}
    \scalebox{1}{\begin{tabular}{ll}
\toprule
 & Hyperparameter search space \\
 \midrule
ItemKNN & k: {[}10, 20, 30, 40, 50, 60, 70, 80, 90, 100, 150, 200, 250, 300, 400, 500, 1000{]} \\
 & shrink: {[}0.0, 0.5, 1.0{]} \\
 \midrule
\multirow{2}{*}{SLIM} & alpha: {[}0.2, 0.5, 0.8, 1.0{]} \\
 & l1 ratio: {[}0.01, 0.02, 0.05, 0.1, 0.5{]} \\
 \midrule
\multirow{2}{*}{BPR} & embedding size: {[}16, 32, 64, 128, 256, 512, 1024, 2048, 4096{]} \\
 & lr: {[}5e-5, 1e-4, 5e-4, 1e-3, 5e-3, 1e-2, 5e-2{]} \\
 \midrule
\multirow{4}{*}{NGCF} & dropout prob: {[}0.1, 0.2{]} \\
 & embedding size: {[}64, 128, 256, 512{]} \\
 & hidden size: {[}{[}64, 64, 64{]}, {[}128, 128, 128{]}, {[}256, 256, 256{]}{]} \\
 & lr: {[}5e-4, 1e-3, 5e-3{]} \\
 \midrule
\multirow{3}{*}{NeuMF} & dropout prob: {[}0.1, 0.2{]} \\
 & hidden size: {[}{[}128, 64{]}, {[}128, 64, 32{]}, {[}64, 32, 16{]}, {[}32, 16, 8{]}{]} \\
 & lr: {[}5e-4, 1e-3, 5e-3, 1e-2{]} \\
 \midrule
\multirow{4}{*}{MultiVAE} & dropout prob: {[}0.1, 0.2, 0.5{]} \\
 & hidden size: {[}{[}100{]}, {[}300{]}, {[}600{]}{]} \\
 & latent dimension: {[}64, 128, 256{]} \\
 & lr: {[}5e-4, 1e-3, 5e-3, 1e-2{]} \\
 \bottomrule
\end{tabular}}
\end{table}

\begin{table}[tb]
    \centering
    \caption{Optimal hyperparameters for each recommender and each dataset}
    \label{tab:besthyper}
    \scalebox{0.75}{\begin{tabular}{lllllll}
\toprule
{} &                                    ItemKNN &                                             SLIM &                                                  BPR &                                                                                                     NGCF &                                                                           NeuMF &                                                                                     MultiVAE \\
\midrule
Amazon-lb &    \parbox[t]{1.5cm}{k: 20, \\shrink: 1.0} &  \parbox[t]{1.8cm}{alpha: 0.2, \\l1 ratio: 0.01} &  \parbox[t]{3cm}{embedding size: 4096, \\lr: 0.0001} &   \parbox[t]{3.5cm}{dropout prob: 0.2, \\embedding size: 256, \\hidden size: [128,128,128], \\lr: 0.005} &   \parbox[t]{3.2cm}{dropout prob: 0.2, \\hidden size: [128,64,32], \\lr: 0.005} &   \parbox[t]{2.5cm}{dropout prob: 0.5, \\hidden size: [600], \\latent dim: 128, \\lr: 0.005} \\
\midrule
Lastfm    &   \parbox[t]{1.5cm}{k: 300, \\shrink: 1.0} &  \parbox[t]{1.8cm}{alpha: 0.2, \\l1 ratio: 0.01} &  \parbox[t]{3cm}{embedding size: 2048, \\lr: 0.0005} &   \parbox[t]{3.5cm}{dropout prob: 0.2, \\embedding size: 512, \\hidden size: [256,256,256], \\lr: 0.001} &     \parbox[t]{3.2cm}{dropout prob: 0.2, \\hidden size: [32,16,8], \\lr: 0.001} &   \parbox[t]{2.5cm}{dropout prob: 0.5, \\hidden size: [600], \\latent dim: 64, \\lr: 0.0005} \\
\midrule
Ml-1m     &  \parbox[t]{1.5cm}{k: 1000, \\shrink: 1.0} &  \parbox[t]{1.8cm}{alpha: 0.2, \\l1 ratio: 0.01} &  \parbox[t]{3cm}{embedding size: 2048, \\lr: 0.0001} &  \parbox[t]{3.5cm}{dropout prob: 0.1, \\embedding size: 256, \\hidden size: [256,256,256], \\lr: 0.0005} &   \parbox[t]{3.2cm}{dropout prob: 0.2, \\hidden size: [64,32,16], \\lr: 0.0005} &     \parbox[t]{2.5cm}{dropout prob: 0.1, \\hidden size: [600], \\latent dim: 64, \\lr: 0.01} \\
\midrule
Book-x    &    \parbox[t]{1.5cm}{k: 20, \\shrink: 1.0} &  \parbox[t]{1.8cm}{alpha: 0.2, \\l1 ratio: 0.01} &  \parbox[t]{3cm}{embedding size: 4096, \\lr: 0.0001} &  \parbox[t]{3.5cm}{dropout prob: 0.1, \\embedding size: 512, \\hidden size: [256,256,256], \\lr: 0.0005} &     \parbox[t]{3.2cm}{dropout prob: 0.1, \\hidden size: [32,16,8], \\lr: 0.001} &  \parbox[t]{2.5cm}{dropout prob: 0.5, \\hidden size: [600], \\latent dim: 128, \\lr: 0.0005} \\
\midrule
Amazon-is &   \parbox[t]{1.5cm}{k: 250, \\shrink: 1.0} &  \parbox[t]{1.8cm}{alpha: 0.2, \\l1 ratio: 0.01} &  \parbox[t]{3cm}{embedding size: 4096, \\lr: 0.0001} &   \parbox[t]{3.5cm}{dropout prob: 0.1, \\embedding size: 512, \\hidden size: [256,256,256], \\lr: 0.001} &   \parbox[t]{3.2cm}{dropout prob: 0.2, \\hidden size: [128,64,32], \\lr: 0.001} &   \parbox[t]{2.5cm}{dropout prob: 0.5, \\hidden size: [600], \\latent dim: 128, \\lr: 0.005} \\
\midrule
Amazon-dm &    \parbox[t]{1.5cm}{k: 10, \\shrink: 1.0} &  \parbox[t]{1.8cm}{alpha: 0.2, \\l1 ratio: 0.01} &  \parbox[t]{3cm}{embedding size: 4096, \\lr: 0.0001} &   \parbox[t]{3.5cm}{dropout prob: 0.2, \\embedding size: 512, \\hidden size: [256,256,256], \\lr: 0.001} &  \parbox[t]{3.2cm}{dropout prob: 0.2, \\hidden size: [128,64,32], \\lr: 0.0005} &    \parbox[t]{2.5cm}{dropout prob: 0.2, \\hidden size: [600], \\latent dim: 256, \\lr: 0.01} \\
\bottomrule
\end{tabular}
}
\end{table}

\subsection{Analysis of relevance and fairness}
\label{app:extend-performance}

We present the performance scores of the recommender systems on the Amazon-* and Book-x datasets in Tab.~\ref{tab:base-extra1} \& \ref{tab:base-extra2}. The scores of the original version of Ent cannot be calculated due to zero divisions error for the same reasons explained in $\S$\ref{ss:performance}. The constant scores of II-D have also been explained in the same section. The best relevance and fairness scores are bolded.

BPR generally performs the best in relevance, with the exception of Amazon-lb where NeuMF is best, while ItemKNN gives the best fairness scores. Other trends observed on Amazon-* and Book-x are similar to that on Lastfm and Ml-1m.

\begin{table*}
\caption{Relevance \textsc{(rel)} and fairness \textsc{(fair)} scores of the recommender models for Amazon-lb and Book-x. The most relevant and most fair score per measure is in bold. $\uparrow$ means the higher the better, $\downarrow$ the lower the better. `nan' stands for `not a number'.}
\label{tab:base-extra1}
\scalebox{0.95}{
\begin{tabular}{lllrrrrrrr}
\toprule
 &  && Pop$^{*}$  & ItemKNN & SLIM & BPR & NGCF & NeuMF & MultiVAE \\
\midrule
\multirow[c]{21}{*}{Amazon-lb} 
& \multirow[c]{6}{*}{\textsc{rel}} 
& $\uparrow$ $\text{HR}$ & 0.257908 & 0.309611 & 0.313869 & 0.320560 & 0.305961 & \bfseries 0.325426 & 0.321776 \\
& & $\uparrow$ $\text{MRR}$ & 0.222141 & 0.265047 & \bfseries 0.270739 & 0.267188 & 0.263003 & 0.267511 & 0.264885 \\
& & $\uparrow$ $\text{P}$ & 0.025912 & 0.031022 & 0.031752 & 0.032360 & 0.030718 & \bfseries 0.032968 & 0.032664 \\
& & $\uparrow$ $\text{MAP}$ & 0.221568 & 0.264694 & \bfseries 0.268219 & 0.265084 & 0.261340 & 0.263527 & 0.261339 \\
& & $\uparrow$ $\text{R}$ & 0.252251 & 0.308698 & 0.306767 & 0.315237 & 0.302616 & \bfseries 0.318375 & 0.314852 \\
& & $\uparrow$ $\text{NDCG}$ & 0.228580 & 0.275196 & \bfseries 0.278262 & 0.277727 & 0.271526 & 0.277666 & 0.275024 \\
\cline{2-10}
& \multirow[c]{15}{*}{\textsc{fair}} 
 & $\uparrow$ $\text{Jain}_{\text{ori}}$ & 0.017661 & 0.188400 & 0.050735 & 0.134711 & \bfseries 0.193167 & 0.060450 & 0.028521 \\
 && $\uparrow$ $\text{Jain}_{\text{our}}$ & 0.005085 & 0.178079 & 0.038596 & 0.123681 & \bfseries 0.182909 & 0.048439 & 0.016089 \\
 && $\uparrow$ $\text{QF}_{\text{ori}}$ & 0.024020 & \bfseries 0.978508 & 0.379267 & 0.921618 & 0.967130 & 0.790139 & 0.386852 \\
 && $\uparrow$ $\text{QF}_{\text{our}}$ & 0.011524 & \bfseries 0.978233 & 0.371319 & 0.920615 & 0.966709 & 0.787452 & 0.379001 \\
 && $\uparrow$ $\text{Ent}_{\text{ori}}$ & nan & nan & nan & nan & nan & nan & nan \\
 && $\uparrow$ $\text{Ent}_{\text{our}}$ & 0.094992 & \bfseries 0.822309 & 0.460701 & 0.731135 & 0.812621 & 0.564546 & 0.310875 \\
 && $\uparrow$ $\text{FSat}_{\text{ori}}$ & 0.022756 & 0.285714 & 0.134008 & 0.212389 & \bfseries 0.294564 & 0.158028 & 0.074589 \\
 && $\uparrow$ $\text{FSat}_{\text{our}}$ & 0.010243 & 0.276569 & 0.122919 & 0.202305 & \bfseries 0.285531 & 0.147247 & 0.062740 \\
& & $\downarrow$ $\text{Gini}_{\text{ori}}$ & 0.983735 & \bfseries 0.580725 & 0.915792 & 0.710528 & 0.606176 & 0.841397 & 0.955542 \\
& & $\downarrow$ $\text{Gini}_{\text{our}}$ & 0.996300 & \bfseries 0.584731 & 0.926915 & 0.717292 & 0.610723 & 0.850940 & 0.967509 \\
& & $\downarrow$ $\text{Gini-w}_{\text{ori}}$ & 0.986481 & \bfseries 0.605205 & 0.914661 & 0.731609 & 0.628605 & 0.861775 & 0.959218 \\
& & $\downarrow$ $\text{Gini-w}_{\text{our}}$ & 0.996317 & \bfseries 0.611240 & 0.923782 & 0.738904 & 0.634873 & 0.870368 & 0.968782 \\
& & $\downarrow$ $\text{VoCD}_{\text{ori}}$ & \bfseries 0.509398 & 0.560071 & 0.654539 & 0.617237 & 0.590049 & 0.645770 & 0.661114 \\
& & $\downarrow$ $\text{II-D}_{\text{ori}}$ & \bfseries 0.003439 & \bfseries 0.003439 & \bfseries 0.003439 & \bfseries 0.003439 & \bfseries 0.003439 & \bfseries 0.003439 & \bfseries 0.003439 \\
& & $\downarrow$ $\text{AI-D}_{\text{ori}}$ & 0.002443 & \bfseries 0.000158 & 0.000610 & 0.000280 & 0.000165 & 0.000728 & 0.001385 \\

\cline{1-10}
\multirow[c]{21}{*}{Book-x} 
& \multirow[c]{6}{*}{\textsc{rel}} 
& $\uparrow$ $\text{HR}$ & 0.034581 & 0.115269 & 0.059762 & \bfseries 0.130342 & 0.103564 & 0.085299 & 0.089910 \\
& & $\uparrow$ $\text{MRR}$ & 0.012664 & 0.063709 & 0.034535 & \bfseries 0.066125 & 0.041422 & 0.037474 & 0.039687 \\
& & $\uparrow$ $\text{P}$ & 0.003476 & 0.012999 & 0.006349 & \bfseries 0.014276 & 0.010977 & 0.008973 & 0.009576 \\
& & $\uparrow$ $\text{MAP}$ & 0.008713 & 0.047170 & 0.023412 & \bfseries 0.050778 & 0.032601 & 0.029294 & 0.031962 \\
& & $\uparrow$ $\text{R}$ & 0.022909 & 0.082607 & 0.037784 & \bfseries 0.097505 & 0.078578 & 0.064687 & 0.069501 \\
& & $\uparrow$ $\text{NDCG}$ & 0.013074 & 0.059732 & 0.029559 & \bfseries 0.065820 & 0.045847 & 0.039839 & 0.042925 \\

\cline{2-10}
& \multirow[c]{15}{*}{\textsc{fair}} 
& $\uparrow$ $\text{Jain}_{\text{ori}}$ & 0.001420 & \bfseries 0.375894 & 0.002297 & 0.036191 & 0.020391 & 0.030978 & 0.050532 \\
& & $\uparrow$ $\text{Jain}_{\text{our}}$ & 0.000080 & \bfseries 0.376670 & 0.000961 & 0.035046 & 0.019157 & 0.029804 & 0.049469 \\
& & $\uparrow$ $\text{QF}_{\text{ori}}$ & 0.002414 & \bfseries 0.899799 & 0.019584 & 0.671630 & 0.662777 & 0.519920 & 0.587928 \\
& & $\uparrow$ $\text{QF}_{\text{our}}$ & 0.001075 & \bfseries 0.899664 & 0.018267 & 0.671189 & 0.662324 & 0.519275 & 0.587374 \\
& & $\uparrow$ $\text{Ent}_{\text{ori}}$ & nan & nan & nan & nan & nan & nan & nan \\
& & $\uparrow$ $\text{Ent}_{\text{our}}$ & 0.014182 & \bfseries 0.916559 & 0.156894 & 0.688733 & 0.623813 & 0.646908 & 0.716425 \\
& & $\uparrow$ $\text{FSat}_{\text{ori}}$ & 0.002012 & \bfseries 0.392622 & 0.017170 & 0.158820 & 0.120724 & 0.145942 & 0.188062 \\
& & $\uparrow$ $\text{FSat}_{\text{our}}$ & 0.000672 & \bfseries 0.391807 & 0.015850 & 0.157690 & 0.119543 & 0.144795 & 0.186971 \\

& & $\downarrow$ $\text{Gini}_{\text{ori}}$ & 0.998615 & \bfseries 0.548876 & 0.996714 & 0.853353 & 0.877575 & 0.894874 & 0.848862 \\
& & $\downarrow$ $\text{Gini}_{\text{our}}$ & 0.999955 & \bfseries 0.534458 & 0.997988 & 0.849604 & 0.874675 & 0.892579 & 0.844955 \\
& & $\downarrow$ $\text{Gini-w}_{\text{ori}}$ & 0.998906 & \bfseries 0.582320 & 0.996386 & 0.863144 & 0.889082 & 0.904719 & 0.860833 \\
& & $\downarrow$ $\text{Gini-w}_{\text{our}}$ & 0.999954 & \bfseries 0.582930 & 0.997431 & 0.864049 & 0.890015 & 0.905668 & 0.861736 \\
& & $\downarrow$ $\text{VoCD}_{\text{ori}}$ & 0.663365 & 0.564473 & 0.720722 & 0.570963 & \bfseries 0.537809 & 0.600639 & 0.611070 \\
& & $\downarrow$ $\text{II-D}_{\text{ori}}$ & \bfseries 0.000368 & \bfseries 0.000368 & \bfseries 0.000368 & \bfseries 0.000368 & \bfseries 0.000368 & \bfseries 0.000368 & \bfseries 0.000368 \\
& & $\downarrow$ $\text{AI-D}_{\text{ori}}$ & 0.000350 & \bfseries 0.000001 & 0.000176 & 0.000013 & 0.000027 & 0.000015 & 0.000009 \\

\bottomrule
\multicolumn{10}{l}{\multirow{2}{*}{
\parbox[t]{0.95\textwidth}{*The scores of our \textsc{fair} measures for Pop are not 0 or 1, because in our experiment set-up, items from users' train or validation splits are excluded from the top $k$ recommendation list.}}}
\end{tabular}}
\end{table*}

\begin{table*}
\caption{Relevance \textsc{(rel)} and fairness \textsc{(fair)} scores of the recommender models for Amazon-is and Amazon-dm. The most relevant and most fair score per measure is in bold. $\uparrow$ means the higher the better, $\downarrow$ the lower the better. `nan' stands for `not a number'.}
\label{tab:base-extra2}
\scalebox{0.95}{
\begin{tabular}{lllrrrrrrr}
\toprule
 &  && Pop$^{*}$  & ItemKNN & SLIM & BPR & NGCF & NeuMF & MultiVAE \\
\midrule
\multirow[c]{21}{*}{Amazon-is} 
& \multirow[c]{6}{*}{\textsc{rel}} 
 & $\uparrow$ $\text{HR}$ & 0.031792 & 0.084271 & 0.028902 & \bfseries 0.100548 & 0.095224 & 0.077122 & 0.081838 \\
& & $\uparrow$ $\text{MRR}$ & 0.011671 & 0.047550 & 0.023075 & \bfseries 0.054403 & 0.048366 & 0.042206 & 0.041294 \\
& & $\uparrow$ $\text{P}$ & 0.003179 & 0.008427 & 0.002890 & \bfseries 0.010055 & 0.009522 & 0.007712 & 0.008184 \\
& & $\uparrow$ $\text{MAP}$ & 0.011593 & 0.047313 & 0.022880 & \bfseries 0.054002 & 0.048093 & 0.041905 & 0.041036 \\
& & $\uparrow$ $\text{R}$ & 0.031462 & 0.083916 & 0.028572 & \bfseries 0.099825 & 0.094717 & 0.076387 & 0.080993 \\
& & $\uparrow$ $\text{NDCG}$ & 0.016224 & 0.055902 & 0.024254 & \bfseries 0.064820 & 0.059064 & 0.050052 & 0.050436 \\

\cline{2-10}
& \multirow[c]{15}{*}{\textsc{fair}} 
 & $\uparrow$ $\text{Jain}_{\text{ori}}$ & 0.002925 & \bfseries 0.493178 & 0.002958 & 0.167723 & 0.101140 & 0.061059 & 0.093401 \\
& & $\uparrow$ $\text{Jain}_{\text{our}}$ & 0.000124 & \bfseries 0.492108 & 0.000156 & 0.165503 & 0.098685 & 0.058463 & 0.090919 \\
& & $\uparrow$ $\text{QF}_{\text{ori}}$ & 0.004483 & \bfseries 0.989633 & 0.014570 & 0.956851 & 0.869151 & 0.919866 & 0.856823 \\
& & $\uparrow$ $\text{QF}_{\text{our}}$ & 0.001686 & \bfseries 0.989604 & 0.011801 & 0.956729 & 0.868783 & 0.919640 & 0.856420 \\
& & $\uparrow$ $\text{Ent}_{\text{ori}}$ & nan & nan & nan & nan & nan & nan & nan \\
& & $\uparrow$ $\text{Ent}_{\text{our}}$ & 0.012261 & \bfseries 0.939484 & 0.024132 & 0.832303 & 0.746126 & 0.741374 & 0.766085 \\
& & $\uparrow$ $\text{FSat}_{\text{ori}}$ & 0.003642 & \bfseries 0.373494 & 0.008686 & 0.223032 & 0.171477 & 0.170636 & 0.204819 \\
& & $\uparrow$ $\text{FSat}_{\text{our}}$ & 0.000843 & \bfseries 0.371734 & 0.005901 & 0.220849 & 0.169149 & 0.168306 & 0.202585 \\
& & $\downarrow$ $\text{Gini}_{\text{ori}}$ & 0.997136 & \bfseries 0.445874 & 0.997066 & 0.679338 & 0.798275 & 0.763940 & 0.769841 \\
& & $\downarrow$ $\text{Gini}_{\text{our}}$ & 0.999937 & \bfseries 0.439697 & 0.999866 & 0.676963 & 0.797837 & 0.762944 & 0.768941 \\
& & $\downarrow$ $\text{Gini-w}_{\text{ori}}$ & 0.997738 & \bfseries 0.476640 & 0.997518 & 0.703146 & 0.818197 & 0.784878 & 0.784863 \\
& & $\downarrow$ $\text{Gini-w}_{\text{our}}$ & 0.999926 & \bfseries 0.477685 & 0.999705 & 0.704688 & 0.819991 & 0.786599 & 0.786584 \\
& & $\downarrow$ $\text{VoCD}_{\text{ori}}$ & 0.592760 & \bfseries 0.523208 & 0.787225 & 0.615085 & 0.640122 & 0.616732 & 0.648353 \\
& & $\downarrow$ $\text{II-D}_{\text{ori}}$ & \bfseries 0.000768 & \bfseries 0.000768 & \bfseries 0.000768 & \bfseries 0.000768 & \bfseries 0.000768 & \bfseries 0.000768 & \bfseries 0.000768 \\
& & $\downarrow$ $\text{AI-D}_{\text{ori}}$ & 0.000738 & \bfseries 0.000002 & 0.000705 & 0.000011 & 0.000019 & 0.000036 & 0.000020 \\
\cline{1-10}
\multirow[c]{21}{*}{Amazon-dm} 
& \multirow[c]{6}{*}{\textsc{rel}} 
& $\uparrow$ $\text{HR}$ & 0.022809 & 0.087660 & 0.005702 & \bfseries 0.108596 & 0.093787 & 0.073872 & 0.079660 \\
& & $\uparrow$ $\text{MRR}$ & 0.009252 & 0.048030 & 0.004607 & \bfseries 0.054654 & 0.043576 & 0.033947 & 0.036754 \\
& & $\uparrow$ $\text{P}$ & 0.002289 & 0.008928 & 0.000570 & \bfseries 0.011004 & 0.009489 & 0.007489 & 0.008051 \\
& & $\uparrow$ $\text{MAP}$ & 0.008463 & 0.044812 & 0.004106 & \bfseries 0.051930 & 0.041533 & 0.032263 & 0.034956 \\
& & $\uparrow$ $\text{R}$ & 0.020528 & 0.081430 & 0.005123 & \bfseries 0.102600 & 0.089059 & 0.069847 & 0.075436 \\
& & $\uparrow$ $\text{NDCG}$ & 0.011454 & 0.054259 & 0.004447 & \bfseries 0.064525 & 0.053219 & 0.041538 & 0.044960 \\
\cline{2-10}
& \multirow[c]{15}{*}{\textsc{fair}} 
 & $\uparrow$ $\text{Jain}_{\text{ori}}$ & 0.001092 & \bfseries 0.358718 & 0.001080 & 0.068624 & 0.078606 & 0.038114 & 0.121317 \\
& & $\uparrow$ $\text{Jain}_{\text{our}}$ & 0.000036 & \bfseries 0.358605 & 0.000023 & 0.067745 & 0.077754 & 0.037155 & 0.120578 \\
& & $\uparrow$ $\text{QF}_{\text{ori}}$ & 0.001902 & \bfseries 0.957620 & 0.003593 & 0.843585 & 0.720672 & 0.843691 & 0.879624 \\
& & $\uparrow$ $\text{QF}_{\text{our}}$ & 0.000846 & \bfseries 0.957575 & 0.002539 & 0.843419 & 0.720377 & 0.843525 & 0.879496 \\
& & $\uparrow$ $\text{Ent}_{\text{ori}}$ & nan & nan & nan & nan & nan & nan & nan \\
& & $\uparrow$ $\text{Ent}_{\text{our}}$ & 0.008686 & \bfseries 0.940649 & 0.008735 & 0.779117 & 0.768888 & 0.764866 & 0.840448 \\
& & $\uparrow$ $\text{FSat}_{\text{ori}}$ & 0.001480 & \bfseries 0.404143 & 0.002219 & 0.187804 & 0.199535 & 0.197738 & 0.246354 \\
& & $\uparrow$ $\text{FSat}_{\text{our}}$ & 0.000423 & \bfseries 0.403512 & 0.001164 & 0.186945 & 0.198688 & 0.196890 & 0.245556 \\

& & $\downarrow$ $\text{Gini}_{\text{ori}}$ & 0.998925 & \bfseries 0.462579 & 0.998932 & 0.778288 & 0.814128 & 0.771164 & 0.702994 \\
& & $\downarrow$ $\text{Gini}_{\text{our}}$ & 0.999982 & \bfseries 0.452327 & 0.999989 & 0.774693 & 0.811289 & 0.767418 & 0.697811 \\
& & $\downarrow$ $\text{Gini-w}_{\text{ori}}$ & 0.999154 & \bfseries 0.493364 & 0.999132 & 0.791188 & 0.828961 & 0.788203 & 0.718667 \\
& & $\downarrow$ $\text{Gini-w}_{\text{our}}$ & 0.999979 & \bfseries 0.493772 & 0.999957 & 0.791841 & 0.829646 & 0.788854 & 0.719261 \\
& & $\downarrow$ $\text{VoCD}_{\text{ori}}$ & 0.668615 & \bfseries 0.532741 & 0.796493 & 0.615043 & 0.643785 & 0.611526 & 0.616569 \\
& & $\downarrow$ $\text{II-D}_{\text{ori}}$ & \bfseries 0.000290 & \bfseries 0.000290 & \bfseries 0.000290 & \bfseries 0.000290 & \bfseries 0.000290 & \bfseries 0.000290 & \bfseries 0.000290 \\
& & $\downarrow$ $\text{AI-D}_{\text{ori}}$ & 0.000282 & \bfseries 0.000000 & 0.000278 & 0.000004 & 0.000004 & 0.000008 & 0.000002 \\

\bottomrule
\multicolumn{10}{l}{\multirow{2}{*}{
\parbox[t]{0.95\textwidth}{*The scores of our \textsc{fair} measures for Pop are not 0 or 1, because in our experiment set-up, items from users' train or validation splits are excluded from the top $k$ recommendation list.}}}
\end{tabular}%
}
\end{table*}

\subsection{Correlation between measures}
\label{app:corr}

We show the Kendall's Tau values between relevance measures and fairness measures for the Amazon-* and Book-x datasets in Fig.~\ref{fig:corr-lb}--\ref{fig:corr-dm}.

\begin{figure*}[tb]
\centering
    \includegraphics[width=\textwidth]{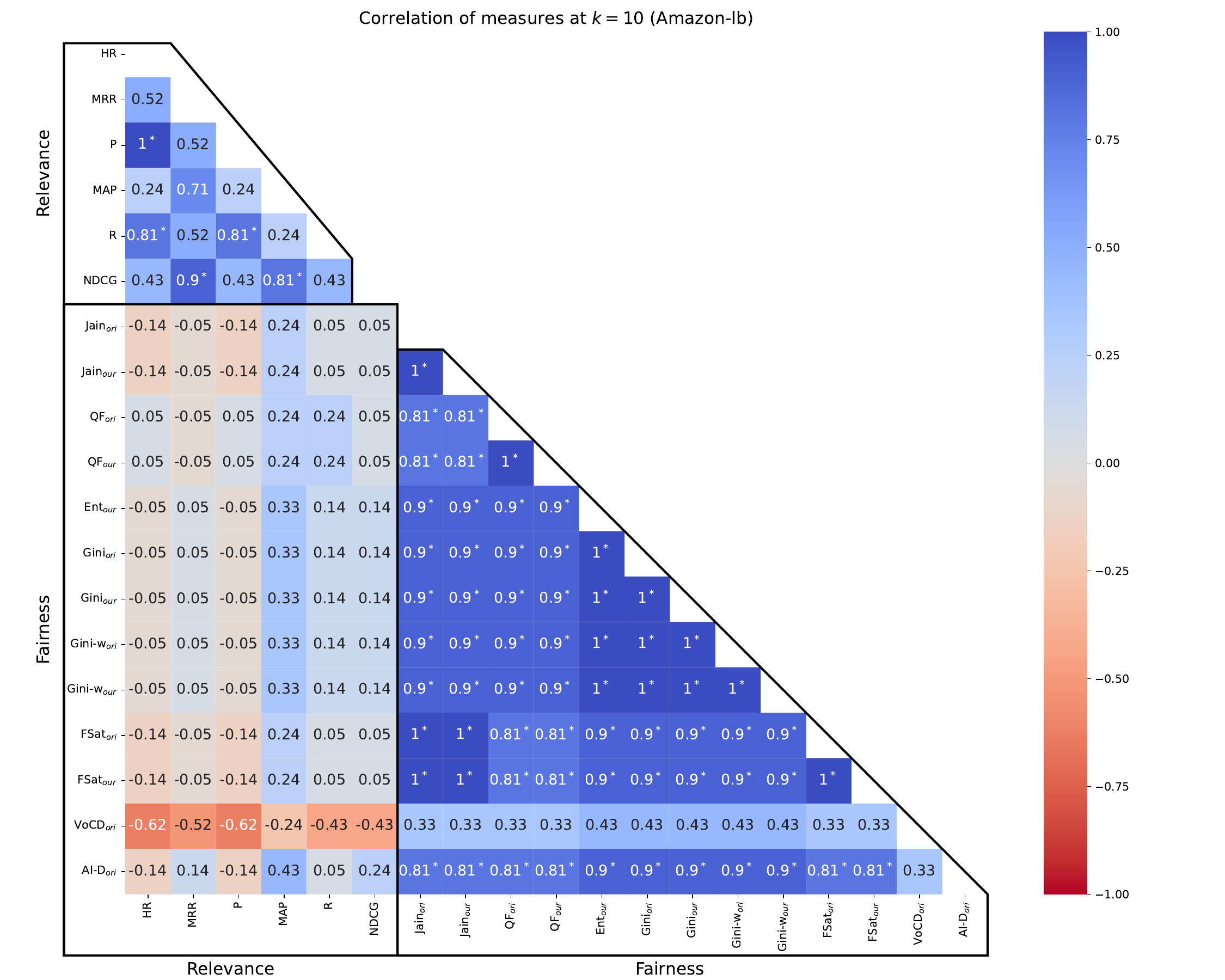}
\caption{Correlation (Kendall's $\tau$) between relevance and fairness measures for Amazon-lb. \explainsig}
\label{fig:corr-lb}
\end{figure*}
\begin{figure*}[tb]
\centering
    \includegraphics[width=\textwidth]{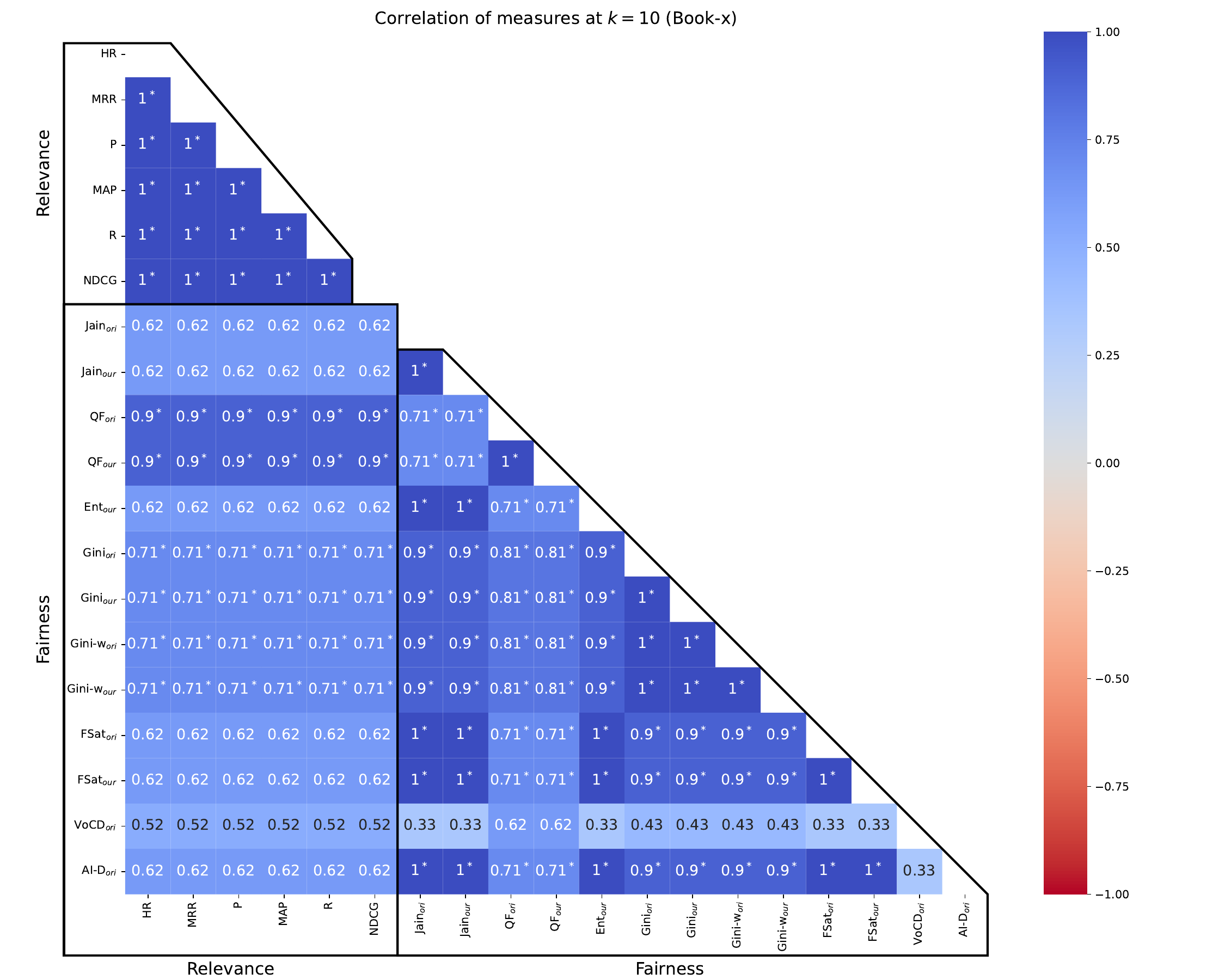}
\caption{Correlation (Kendall's $\tau$) between relevance and fairness measures for Book-x. \explainsig}
\label{fig:corr-bookx}
\end{figure*}

\begin{figure*}[tb]
\centering
    \includegraphics[width=\textwidth]{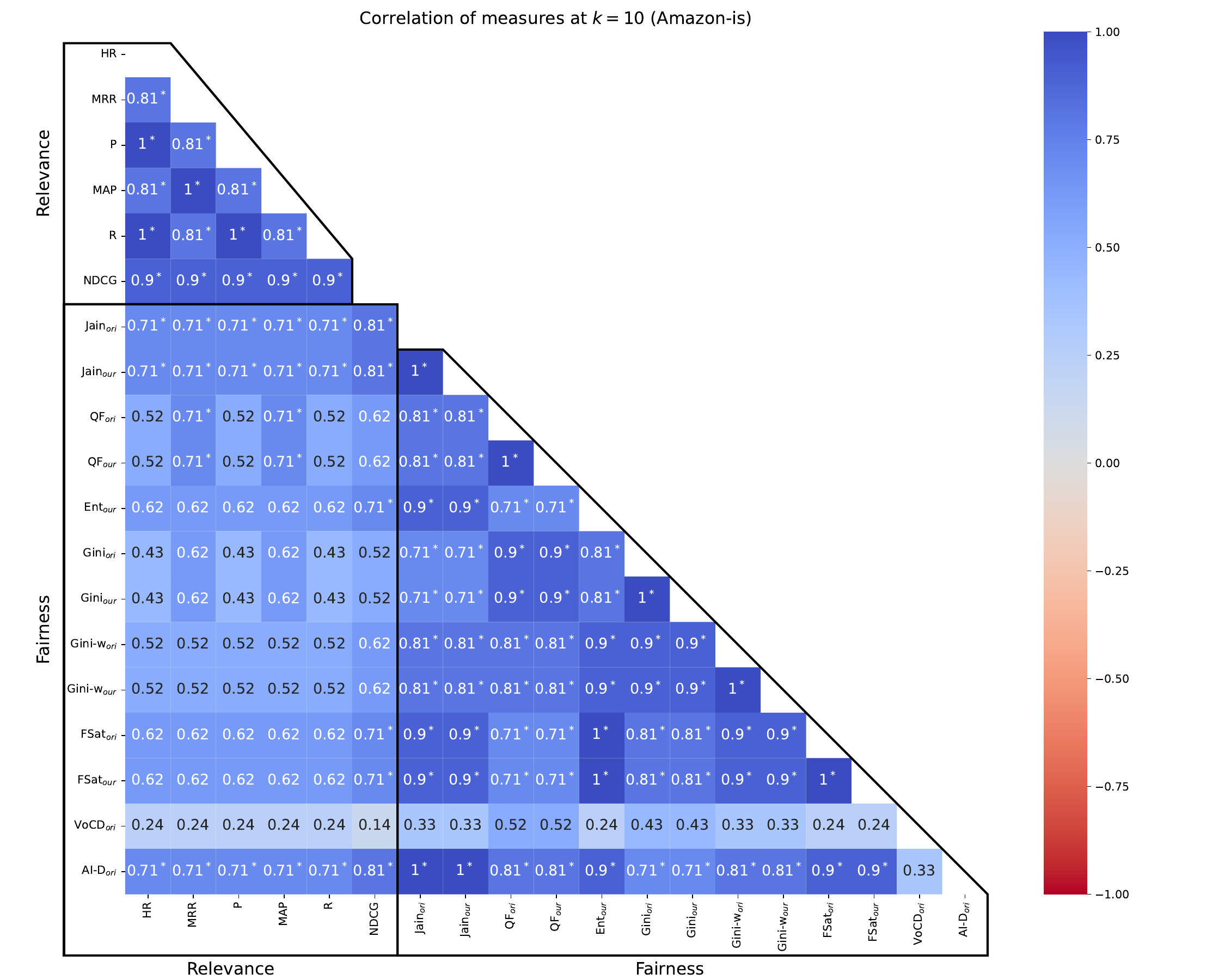}
\caption{Correlation (Kendall's $\tau$) between relevance and fairness measures for Amazon-is. \explainsig}
\label{fig:corr-is}
\end{figure*}

\begin{figure*}[tb]
\centering
    \includegraphics[width=\textwidth]{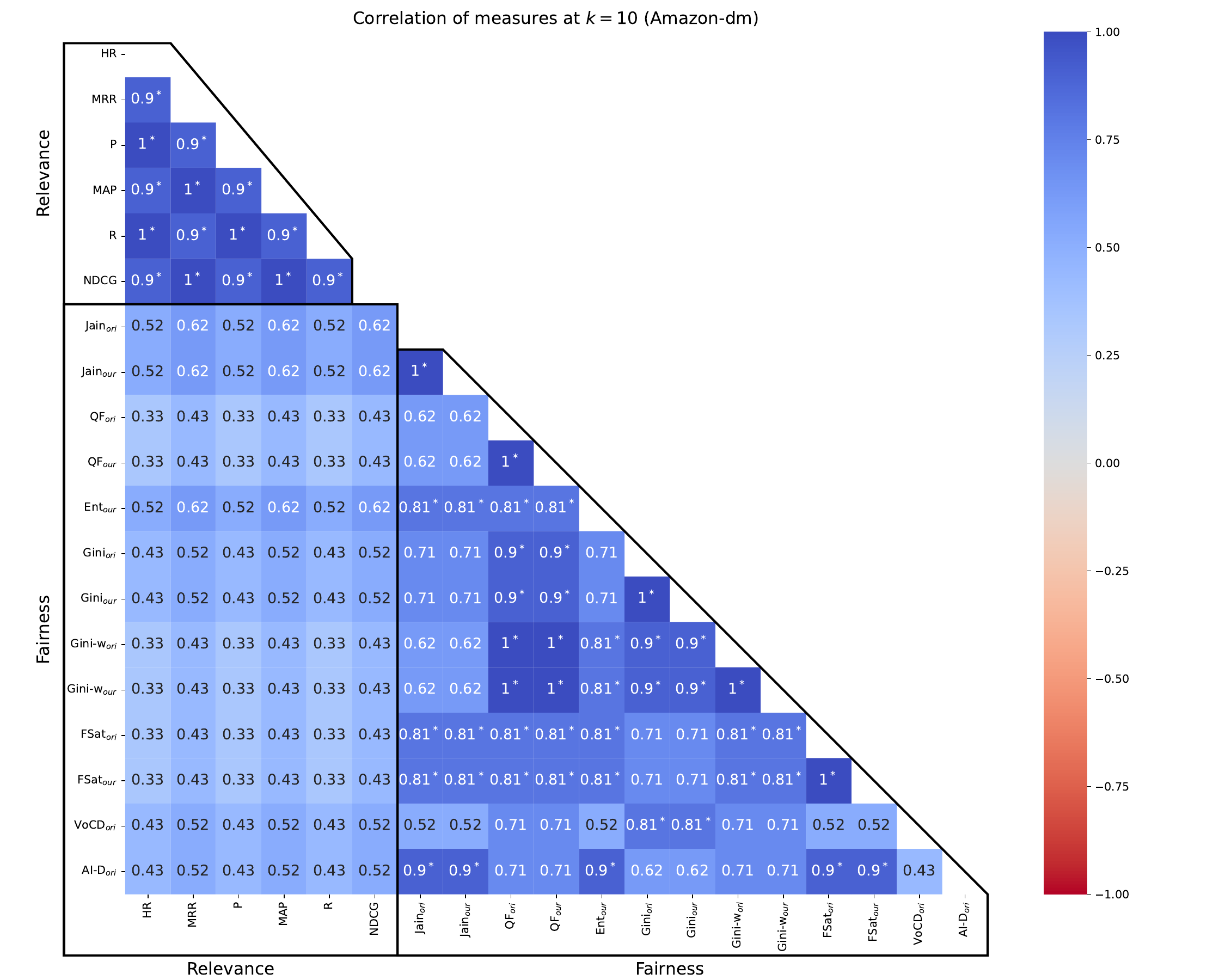}
\caption{Correlation (Kendall's $\tau$) between relevance and fairness measures for Amazon-dm. \explainsig}
\label{fig:corr-dm}
\end{figure*}

\subsection{Max/min achievable fairness}
\label{app:maxmin}
The results of the max/min achievable fairness experiment for Amazon-* and Book-x are in Fig.~\ref{fig:app_mostfair_higher_better}--\ref{fig:app_mostunfair_lower_better}.

\begin{figure}[tb]
    \centering
    \includegraphics[width=\textwidth]{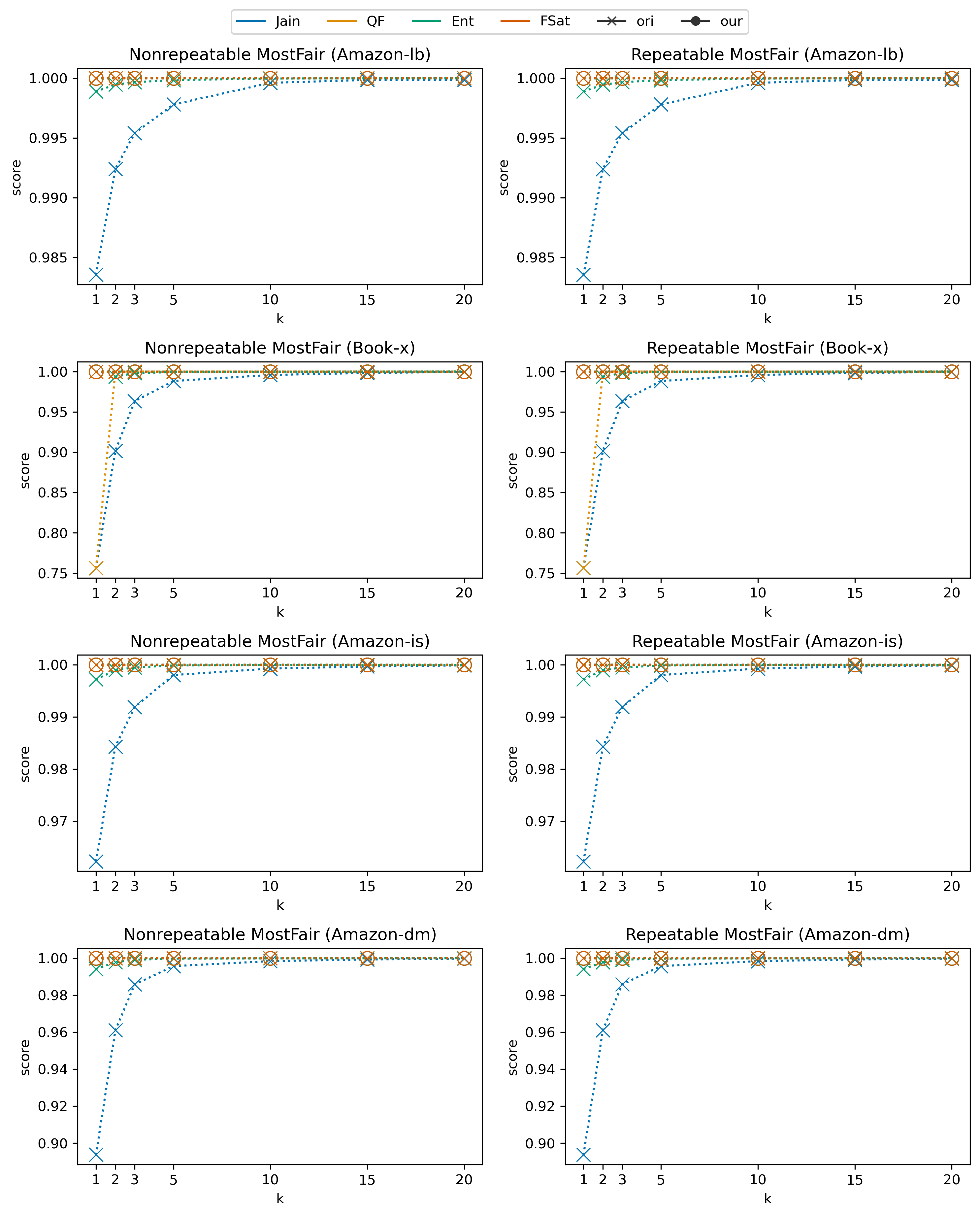}
    \caption{Most fair scores with varying $k$ for higher-is-fairer fairness measures on Amazon-* and Book-x.}
    \label{fig:app_mostfair_higher_better}
\end{figure}
\begin{figure}[tb]
    \centering
    \includegraphics[width=\textwidth]{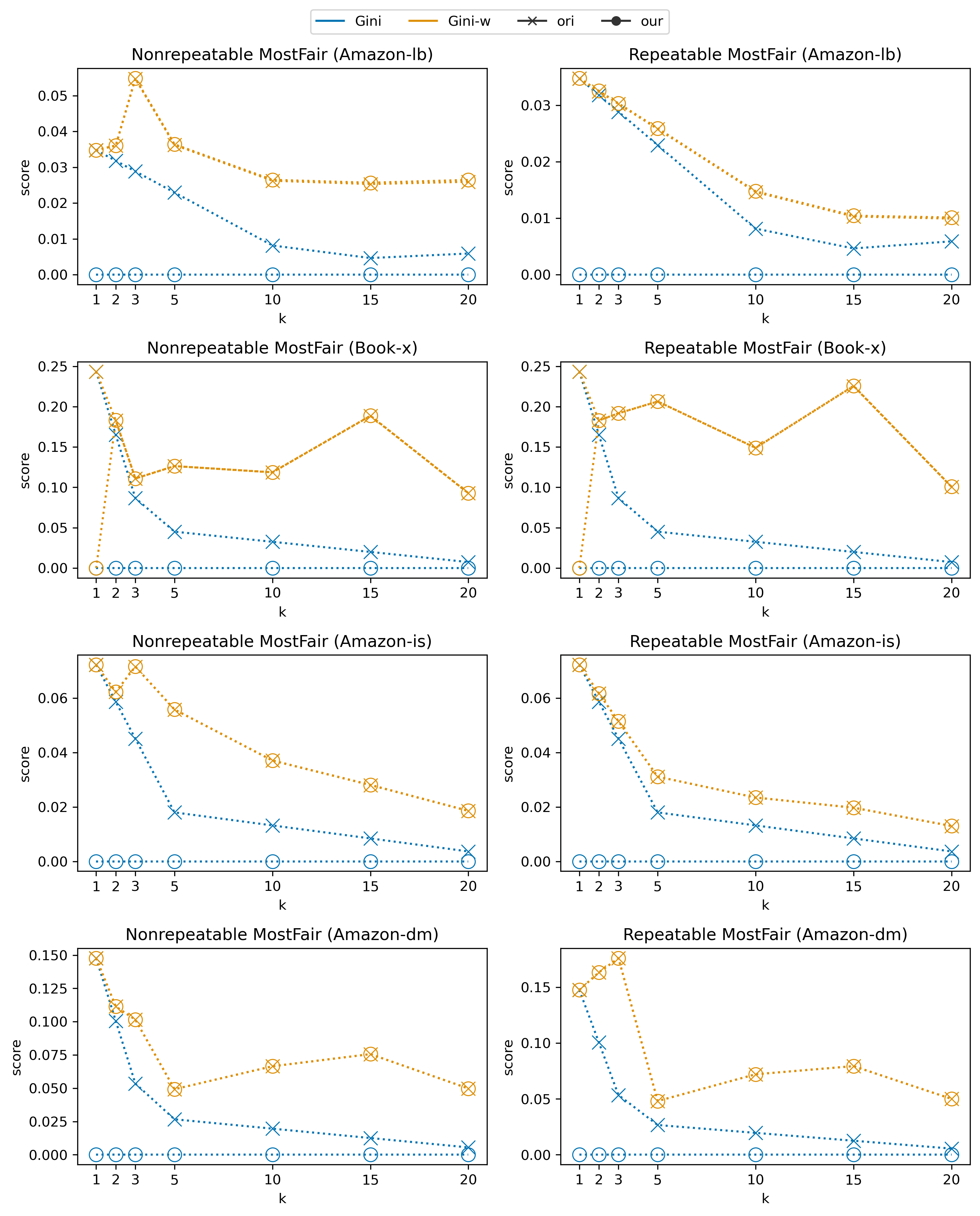}
    \caption{Most fair scores with varying $k$ for lower-is-fairer fairness measures on Amazon-* and Book-x.}
    \label{fig:app_mostfair_lower_better}
\end{figure}
\begin{figure}[tb]
    \centering
    \includegraphics[width=\textwidth]{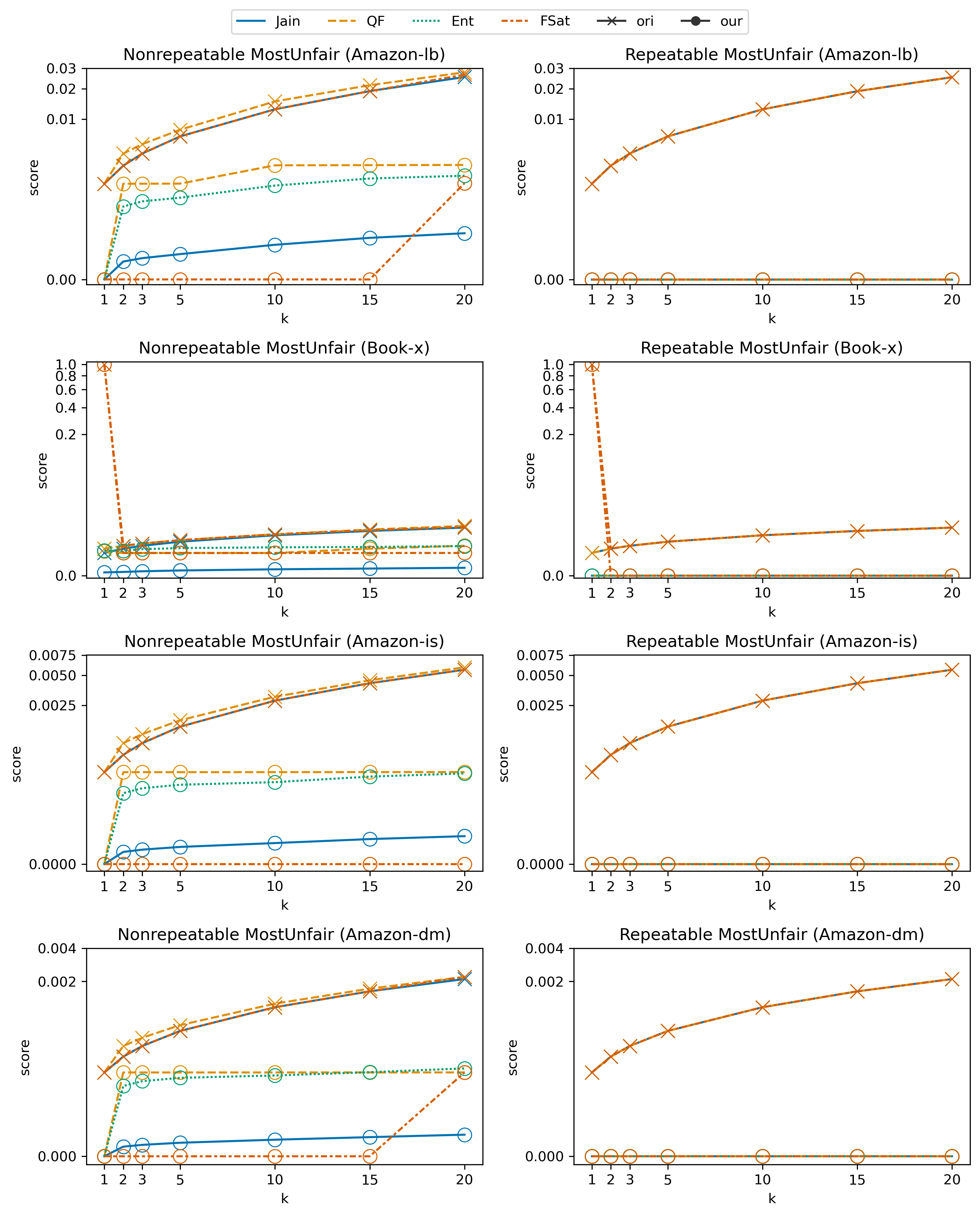}
    \caption{Most unfair scores with varying $k$ for higher-is-fairer fairness measures on Amazon-* and Book-x.}
    \label{fig:app_mostunfair_higher_better}
\end{figure}
\begin{figure}[tb]
    \centering
    \includegraphics[width=\textwidth]{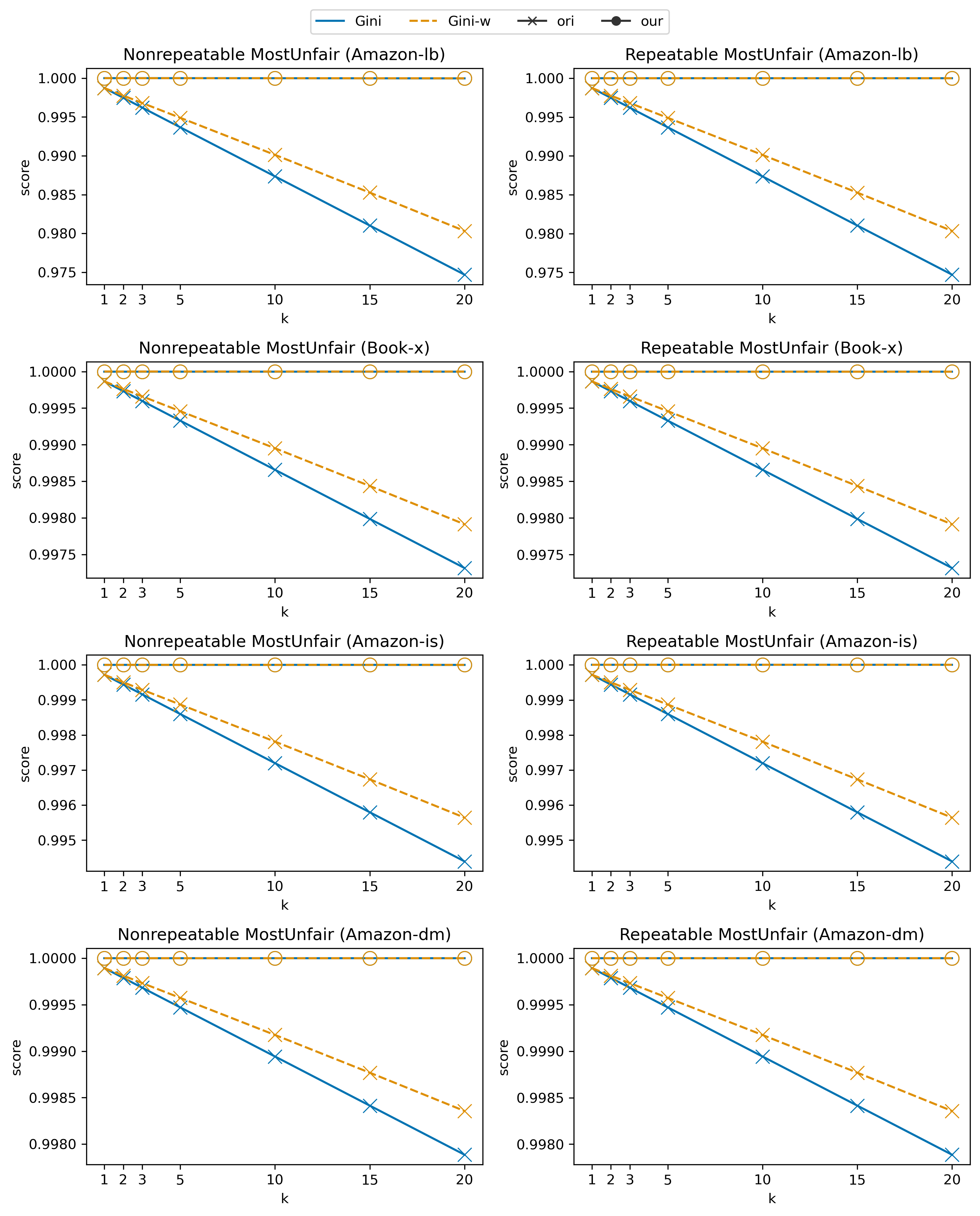}
    \caption{Most unfair scores with varying $k$ for lower-is-fairer fairness measures on Amazon-* and Book-x.}
    \label{fig:app_mostunfair_lower_better}
\end{figure}

\subsection{Sliding window: relevance and fairness at different rank positions}
\label{app:sliding}
The results of the sliding window experiment for Amazon-* and Book-x are in Fig.~\ref{fig:app_sliding_nonfairrel_amazonlb-book-x}--\ref{fig:app_sliding_nonfairrel_amazonisdm}.

\begin{figure}[tb]
    \centering
    \includegraphics[width=\textwidth]{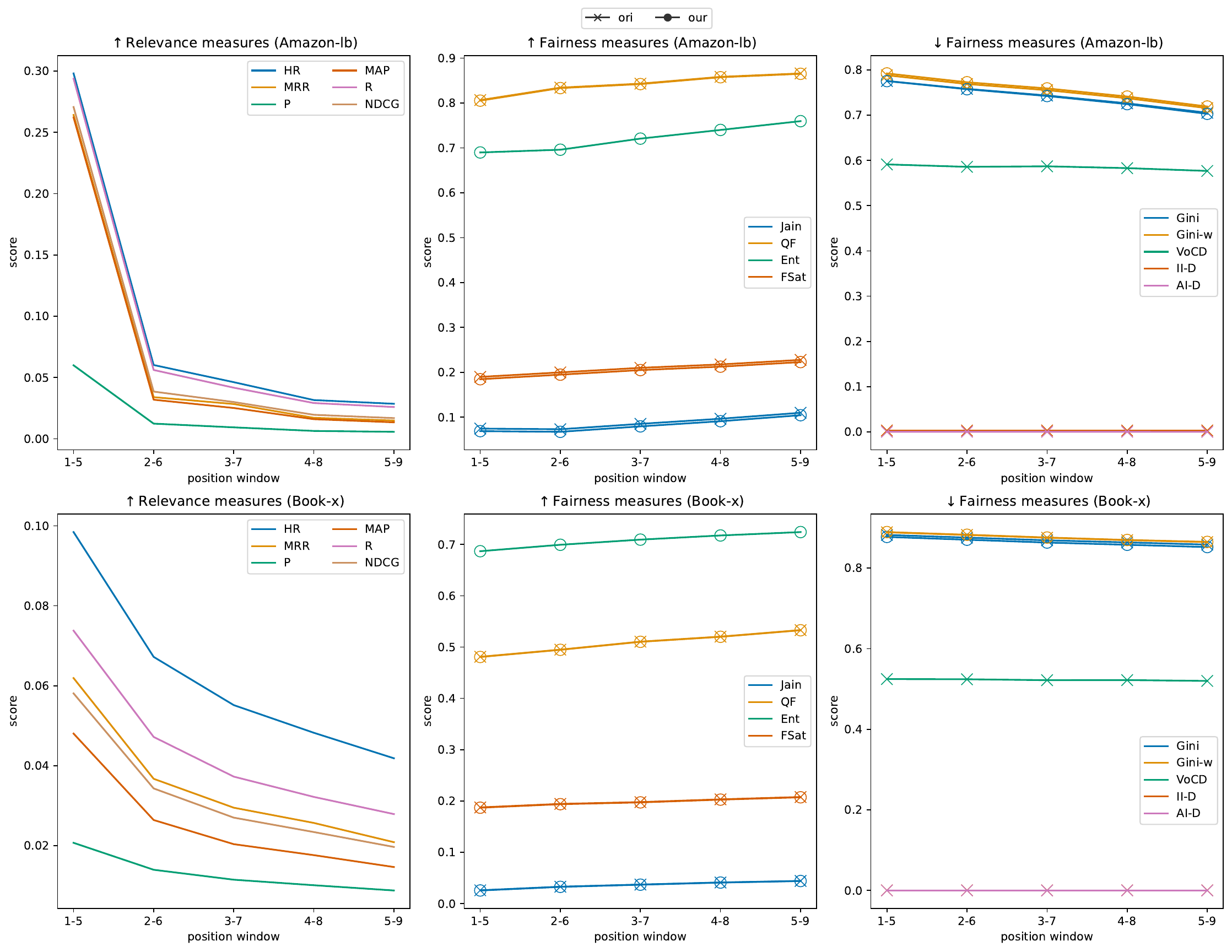}
    \caption{Sliding window evaluation for BPR model, on Amazon-lb and Book-x. Each row is for one dataset, each column is for the different groups of measures (relevance, higher-is-better fairness, lower-is-better fairness measures). II-D and AI-D lines overlap.}
    \label{fig:app_sliding_nonfairrel_amazonlb-book-x}
\end{figure}

\begin{figure}[tb]
    \centering
    \includegraphics[width=\textwidth]{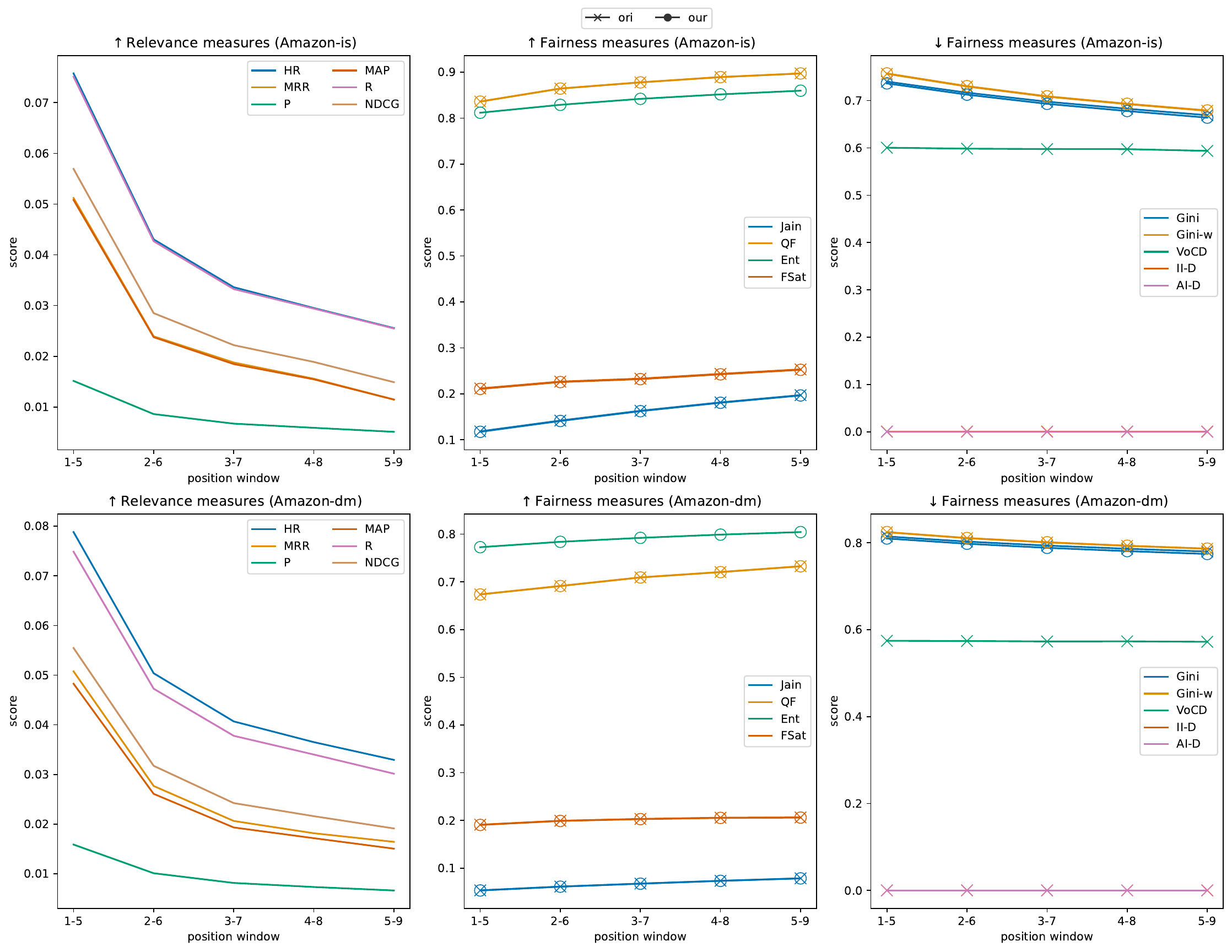}
    \caption{Sliding window evaluation for BPR model, on Amazon-is and Amazon-dm. Each row is for one dataset, each column is for the different groups of measures (relevance, higher-is-better fairness, lower-is-better fairness measures). II-D and AI-D lines overlap.}
    \label{fig:app_sliding_nonfairrel_amazonisdm}
\end{figure}

\subsection{Measure strictness and sensitivity through artificial insertion of items}
\label{app:insert}
We present in Fig. \ref{fig:artificial-fair-extend} the extended results of artificially inserting least exposed (LE) and relevant items for $m=\{100,500\}$ for relevance measures and fairness measures. The changes in scores are less stable compared to $m=1000$, but the general trends are the same.

\begin{figure}[tb]
    \centering
 \includegraphics[width=\textwidth]{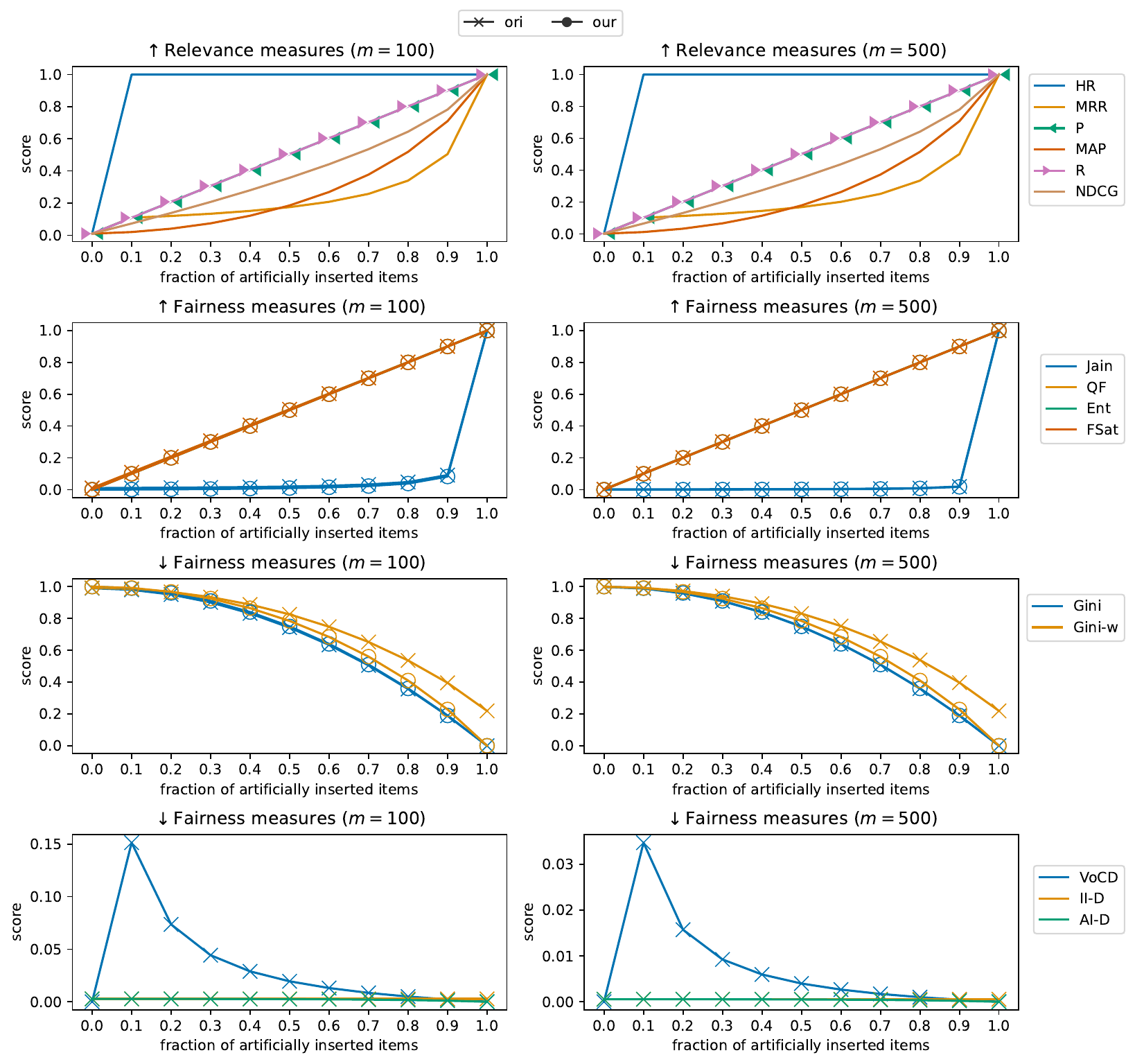}
    \caption{Results for jointly LE and relevant item insertion for $m\in\{100,500\}$. All measures are calculated at $k=10$. QF\ori~and FSat\ori~overlap. QF\our, FSat\our, and Ent\our~also overlap.}
    \label{fig:artificial-fair-extend}
\end{figure}

We also experiment with the artificial insertion of multiple copies of items that are already in the recommendation list and the insertion of irrelevant items, using a similar methodology. We refer to the insertion of multiple item copies as inserting the \textit{most exposed (ME) items}, as in this experiment we aim to maximise exposure of as few items as possible. This is done by iteratively inserting a copy of several items that currently have the most exposure, one copy at a time. 
We swap the starting and ending recommendation list of the artificial insertion of LE and relevant items such that at the end of the experiments, only $k$ unique items are in the recommendation list. These $k$ items will get the most exposure, while the rest of the items in the dataset get zero exposure. The item replacement is still done from the bottom of the recommendation list. 
In Fig.~\ref{fig:artificial-unfair}, we see that the trends of the measures are similar, but the opposite to that of the artificial insertion of LE and relevant items.

\begin{figure*}
    \centering
    \includegraphics[width=\textwidth]{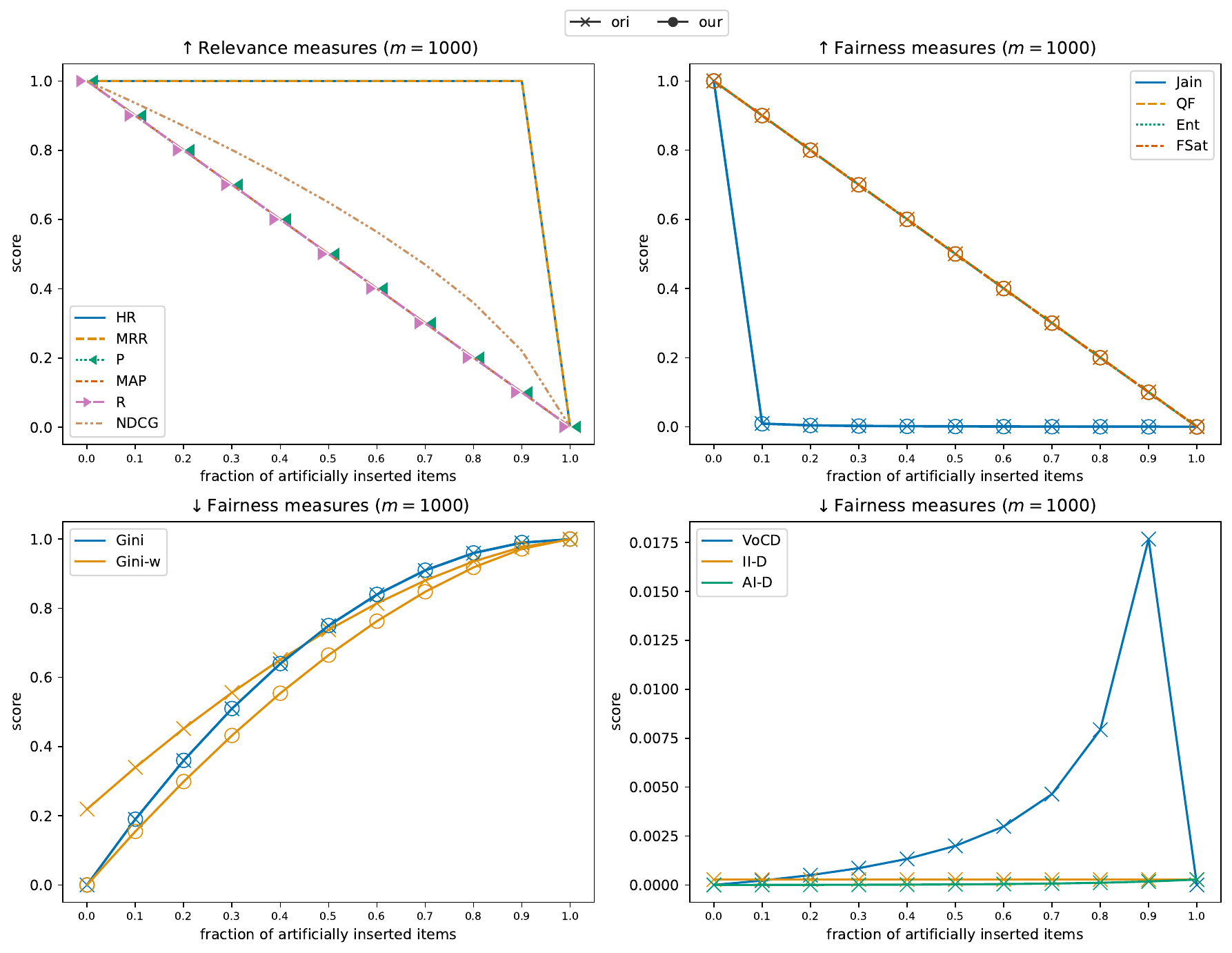}
    \caption{Results for jointly most exposed (ME) and irrelevant item insertion. All measures are calculated at $k=10$. QF\ori~and FSat\ori~overlap. QF\our, FSat\our, and Ent\our~also overlap.}
    \label{fig:artificial-unfair}
\end{figure*}

\end{document}